\documentclass[acmsmall,screen,nonacm]{acmart}

\usepackage{amsmath}
\usepackage{graphicx}
\usepackage[ruled,linesnumbered,noend]{algorithm2e}
\usepackage{comment}
    \includecomment{cav23} 
    \excludecomment{full} 
    \excludecomment{ver0204}
    \excludecomment{blue}
\usepackage{tabu}
\usepackage{multirow}
\usepackage{float}
\usepackage{wrapfig}
\usepackage{longtable}
\usepackage{paralist}
\usepackage{stmaryrd} 
\usepackage{listings}
\usepackage[disable,colorinlistoftodos,textsize=footnotesize]{todonotes}

\usepackage{tikz}
\usetikzlibrary{automata, positioning, arrows, shapes.geometric}
\tikzset{
->, 
>=stealth', 
node distance=2cm, 
every state/.style={thick}, 
initial text=$ $, 
}

\lstdefinelanguage{affprob}
{
morekeywords={angel,demon, choice, prob(0.6), prob(0.5), if, then, else, fi, 
while, do, od, 
true, false, and, or, skip, sample},
sensitive = false
}

\usepackage{apxproof}
\theoremstyle{plain}
\newtheoremrep{mytheorem}{Theorem}[section]
\newtheoremrep{myproposition}[mytheorem]{Proposition}
\newtheoremrep{mylemma}[mytheorem]{Lemma}
\theoremstyle{definition}
\newtheoremrep{myexample}[mytheorem]{Example}
\newtheoremrep{myremark}[mytheorem]{Remark}
\newtheoremrep{mydefinition}[mytheorem]{Definition}

\usepackage{array}
\newcolumntype{C}[1]{>{\centering\let\newline\\\arraybackslash\hspace{0pt}}m{#1}}
\newcolumntype{R}[1]{>{\raggedleft\let\newline\\\arraybackslash\hspace{0pt}}m{#1}}
\newcolumntype{L}[1]{>{\raggedright\let\newline\\\arraybackslash\hspace{0pt}}m{#1}}

\newcommand{\mynext}{\mathbb{X}}

\newcommand{\sem}[1]{\llbracket #1 \rrbracket}
\newcommand{\br}[1]{\{ #1 \}}
\newcommand{\seqOf}[1]{(#1)_{t=0}^\infty}

\newcommand{\bbE}{\mathbb{E}}

\newcommand{\bbN}{\mathbb{N}}

\newcommand{\bbP}{\mathbb{P}}

\newcommand{\bbR}{\mathbb{R}}

\newcommand{\bbX}{\mathbb{X}}

\newcommand{\calD}{\mathcal{D}}

\newcommand{\calF}{\mathcal{F}}
\newcommand{\calG}{\mathcal{G}}

\newcommand{\calM}{\mathcal{M}}

\newcommand{\calT}{\mathcal{T}}

\newcommand{\calX}{\mathcal{X}}
\newcommand{\calY}{\mathcal{Y}}
\newcommand{\calZ}{\mathcal{Z}}

\newcommand{\Poly}{\mathsf{Poly}}
\newcommand{\pneg}{p_{\mathsf{neg}}}

\newcommand{\Sinf}{S_{\mathsf{Inf}}}
\newcommand{\Sfin}{S_{\mathsf{Fin}}}
\newcommand{\Sany}{S_{\mathsf{Any}}}
\newcommand{\Srem}{S_{\mathsf{Rem}}}
\newcommand{\Oinf}[1][t]{\Omega_{\mathsf{Inf},#1}}
\newcommand{\Ofin}[1][t]{\Omega_{\mathsf{Fin},#1}}
\newcommand{\Oany}[1][t]{\Omega_{\mathsf{Any},#1}}
\newcommand{\myvec}[1]{\boldsymbol{#1}}
\newcommand{\mychar}{{\bf 1}}

\newcommand{\inv}{\mathsf{Inv}}
\newcommand{\pt}{\mathsf{pt}}

\newcommand\xqed[1]{%
  \leavevmode\unskip\penalty9999 \hbox{}\nobreak\hfill
  \quad\hbox{#1}}
\newcommand\demo{\xqed{$\triangle$}}

\usepackage{subfig}
\usepackage{stfloats}
\usepackage{footmisc}

\usepackage[all,pdftex]{xy}
\SelectTips{cm}{}
\xyoption{rotate}

\usepackage{changepage} 
\usepackage{calc}       

\newlength{\pseudowd}
\newcommand{\Strong}[2]{\textsc{S}$_{#1}$\,#2s}
\newcommand{\Lazy}[2]{\textsc{L}$_{#1}$\,#2s}
\newcommand{\Fail}{\textsc{F}}
\newcommand{\NA}{--}

\usepackage{booktabs}
\usepackage{array}
\usepackage{longtable}
\usepackage{ragged2e}
\usepackage{xspace}
\usepackage{xurl}
\AtBeginDocument{%
  }
\begin{document}
\title{Weakly Non-Negative Supermartingales for Omega-Regular Verification}

\author{Toru Takisaka}
\affiliation{%
  \institution{University of Electronic Science and Technology of China}
  \city{Chengdu}
  \country{China}}
\email{takisaka@uestc.edu.cn}

\author{Hongjie Qing}
\affiliation{%
  \institution{University of Electronic Science and Technology of China}
  \city{Chengdu}
  \country{China}
}
\email{202611080124@std.uestc.edu.cn}

\author{Libo Zhang}
\affiliation{%
 \institution{The University of Auckland}
 \city{Auckland}
 \country{New Zealand}}
\email{lzha797@aucklanduni.ac.nz}
\begin{abstract}
Martingale-based methods are central to probabilistic program verification, but strong global non-negativity requirements can exclude simple certificates from tractable template classes. Relaxing this requirement enlarges the search space for automated synthesis, but naive relaxations are unsound in the probabilistic setting. We introduce \emph{lazy Streett supermartingales} and their lexicographic extension, showing that weak non-negativity can nevertheless be used soundly to certify almost-sure satisfaction of $\omega$-regular properties with polynomial templates under a broad class of sampling distributions, including all bounded-support distributions. This extends prior weakly non-negative methods from termination to general $\omega$-regular verification. We further give a compositional account of lexicographic certificates in terms of one-dimensional ones. Experiments on 170 polynomial probabilistic-program benchmarks show increases of 20.0–23.5 percentage points in verification success over the strongly non-negative baseline.
\end{abstract}
 \settopmatter{printacmref=false}
 \setcopyright{none}
 \renewcommand\footnotetextcopyrightpermission[1]{}%

\maketitle

\section{Introduction}\label{sec:introduction}

\paragraph{Background: martingale-based methods and their algorithmic applicability}
Verification of probabilistic programs has been actively studied in recent years.
\emph{Martingale-based methods} are among the central techniques for such verification
tasks. The basic idea is to find a real-valued function over program states---typically
a supermartingale-like certificate---whose value decreases, or is otherwise controlled,
in expectation along program executions.
Such a function can serve as a certificate for properties such as almost-sure
termination~\cite{AgrawalCP18,TakisakaZWL24,ChatterjeeGNZZ23,chakarov2013probabilistic,Huang0CG19,Kenyon-RobertsO21,mciver2017new}, recurrence~\cite{ChakarovVS16}, and, more recently, almost-sure satisfaction of
$\omega$-regular specifications~\cite{AbateGR24,HenzingerMSZ25,KuraU26}.
For example, a \emph{ranking supermartingale} (RSM)~\cite{chakarov2013probabilistic} is a non-negative real-valued function whose value decreases by
a positive constant in expectation across every transition from a non-terminal state. Such a function witnesses that the underlying
probabilistic program is almost surely terminating.

A recurring theme in these developments is how to improve the \emph{algorithmic applicability} of martingale-based methods. Synthesis algorithms typically
restrict their search space to a tractable subclass of supermartingales, such as
linear or polynomial ones. Hence, even if a program admits a supermartingale
that witnesses the target property, a synthesis algorithm may fail to find one
if all such witnesses lie outside its prescribed search space. By
algorithmic applicability, we mean the class of programs for which such
a synthesis procedure can automatically construct a witnessing
supermartingale.

\newsavebox{\firstexloc}
\begin{lrbox}{\firstexloc}
\begin{lstlisting}[mathescape]

$\ell:$
$\quad$
$\quad$
$\quad$
\end{lstlisting}
\end{lrbox}

\lstset{tabsize=2,escapechar=&}
\newsavebox{\firstex}
\begin{lrbox}{\firstex}
\begin{lstlisting}[mathescape]
$y\leftarrow 0$;
while $x \geq 0$ do
$y \leftarrow y+1$;
$x \leftarrow x-y$
od
\end{lstlisting}
\end{lrbox}

\begin{wrapfigure}[6]{r}{0.25\textwidth}
\vspace{-1em}
\centering
\scalebox{0.8}{
\usebox{\firstexloc}
\hspace{0.1cm}
\usebox{\firstex}
}
\end{wrapfigure}

From this viewpoint, one useful way to improve algorithmic applicability is to relax the non-negativity condition imposed on supermartingales.
For example, a linear function $\eta(x,y) = x$ is a ``weakly non-negative ranking function'' that witnesses termination of the non-probabilistic program on the right.
Focusing on the value of $\eta$ at location $\ell$, we see that it satisfies the usual requirements of a ranking function---the non-probabilistic counterpart of an RSM---except for a mild violation of global non-negativity: the value of $\eta$ decreases by at least one in each loop iteration and is non-negative whenever the loop guard $x \geq 0$ is satisfied, \emph{but} it can take an arbitrarily negative value once the loop guard is violated. 
In fact, the program also admits a globally non-negative ranking function $\eta'(x,y) = \max\{x+1, 0\}$: but this function is not linear, and hence can be harder than $\eta$ for synthesis algorithms to find.
Weak non-negativity has been actively studied in the context of \emph{lexicographic extensions} of ranking functions and supermartingales, which provide another well-known way to improve algorithmic applicability. Recent work reports experimentally that weak non-negativity in lexicographic RSMs improves the verification success rate by 15.6\%~\cite{TakisakaZWL24}.

\newsavebox{\secondexloc}
\begin{lrbox}{\secondexloc}
	\begin{lstlisting}[mathescape]
    
$\ell:$
$\quad$
$\quad$
$\quad$
$\quad$
$\quad$
\end{lstlisting}
\end{lrbox}

\lstset{tabsize=2,escapechar=\&}
\newsavebox{\secondex}
\begin{lrbox}{\secondex}
	\begin{lstlisting}[mathescape]	
$x\leftarrow 0$;$y\leftarrow 1$;
while $x \geq 0$ do
  $y \leftarrow  y+1$;
  if $\mathbf{prob}(2^{-y})$ then
    $x \leftarrow -2^y$
  fi
od
\end{lstlisting}
\end{lrbox}

\begin{wrapfigure}[9]{r}{0.26\textwidth}
\vspace{-1em}
\centering
\scalebox{0.8}{
    \usebox{\secondexloc}
        \hspace{0.1cm}
    \usebox{\secondex}
 }
\caption{A counterexample.} 
\label{fig:CounterEx}
\end{wrapfigure}

\paragraph{Challenge: subtlety of weak non-negativity in martingales}
This weakening of non-negativity is known to be substantially more delicate in the probabilistic setting:
a naive transfer of the non-probabilistic intuition can lead to an unsound martingale notion.
The program on the right is a known counterexample~\cite{ChatterjeeGNZZ23,TakisakaZWL24}.
The function $\eta(x,y)=x$ satisfies properties analogous to those in the non-probabilistic example above:
in each loop iteration, the value of $\eta$ decreases by one in expectation, namely from $0$ to $-2^y \times 2^{-y} = -1$, and it is non-negative whenever the loop guard $x \geq 0$ is satisfied.
However, the program does not terminate almost surely: a simple calculation shows that the termination probability is over-approximated by $2^{-2} + 2^{-3} + \cdots = 0.5$.
Thus, such a ``weakly non-negative RSM'' is not sound in general as a certificate of almost-sure termination.
The issue arises because a supermartingale condition concerns an \emph{expected value} after a program transition, so negative values can be exploited in a spurious way.

Nevertheless, prior work has shown that this obstruction can sometimes be overcome.
A recent paper~\cite{TakisakaZWL24} observed that synthesis algorithms search only within restricted classes of programs and RSMs, and that these restrictions can rule out counterexamples such as the one above.
For example, probabilistic branching with a variable probability, such as $\mbox{`\textbf{if}} \, \mbox{\textbf{prob$(x)$}'}$, plays a crucial role in the counterexample above, whereas existing synthesis algorithms often allow only $\mbox{`\textbf{if}} \, \mbox{\textbf{prob$(q)$}'}$ with a constant $q \in [0,1]$.
It has been shown in~\cite{TakisakaZWL24} that weakly non-negative RSMs, as well as their lexicographic extension called \emph{lazy lexicographic RSMs}, are sound under suitable assumptions.

The idea developed in~\cite{TakisakaZWL24}---that weakly non-negative martingales are unsound in general but can become sound under restrictions induced by synthesis algorithms---is still far from fully explored.
Their soundness proof is formulated for a reachability property, namely termination; it assumes linear templates; and it relies on specific regularity assumptions on the sampling distributions, called \emph{well-behavedness}.
Moreover, the proof is rather intricate: it directly analyzes a multi-dimensional lexicographic object and uses a delicate argument showing how negative values can be ``fixed'' without destroying the ranking condition.
This makes it difficult to adapt their results to different settings, even when such an adaptation is in principle possible.

\paragraph{Contributions}
This paper pushes the idea of~\cite{TakisakaZWL24} further and significantly generalizes their results.
We propose a novel martingale notion of \emph{lazy Streett supermartingales} (lazy SSMs, Definition~\ref{def:LazySSMMap}) and its lexicographic extension, the notion of \emph{lazy lexicographic SSMs} (lazy LexSSMs, Definition~\ref{def:LazyLexSSMMap}).
This notion combines two existing martingale notions: lazy (lexicographic) RSMs~\cite{TakisakaZWL24}, RSM notions with the most liberal non-negativity condition to date;
and \emph{Streett supermartingales} (SSMs)~\cite{AbateGR24,KuraU26}, recently proposed martingale certificates for almost-sure satisfaction of $\omega$-regular properties.

Table~\ref{tab:lazy-lex-comparison} compares the scopes of lazy RSMs and lazy SSMs.
First, lazy RSMs certify only almost-sure termination of probabilistic programs, whereas lazy SSMs certify almost-sure satisfaction of $\omega$-regular properties.
Second, the soundness result in~\cite{TakisakaZWL24} assumes lazy RSMs to be linear;
we prove soundness for polynomial lazy SSMs, a common template class in recent verification algorithms~\cite{chatterjee2016termination,TakisakaOUH21,AbateGR24,HenzingerMSZ25}.
Third, the existing result requires program arithmetic to be linear: that is, a function $f$ in a program line $x \leftarrow f(\myvec{x})$ must be linear.
We allow $f$ to be any measurable function.
Fourth, the existing result requires sampling distributions to be \emph{well-behaved} (Definition~\ref{def:wellBehavedOriginal});
we replace this assumption with what we call \emph{relative well-behavedness} (Definition~\ref{def:weakWellBehaved}).
This last generalization becomes essential for polynomial extension of soundness, as discussed below.

The soundness proof developed in this paper requires reasoning \emph{that is
fundamentally different from the existing proof for lazy RSMs}, mainly because
we move from linear to polynomial templates.  The extension from reachability to
$\omega$-regular properties is conceptually handled by the Streett-supermartingale
framework, but the polynomial-template extension raises a separate and more
subtle technical challenge: the well-behavedness condition used in the existing
lazy RSM proof becomes too restrictive in the polynomial setting.  It even rules
out uniform distributions over intervals, thereby severely limiting the practical
relevance of the resulting soundness theorem.  To overcome this obstacle, we
develop a different proof strategy based on 
\emph{relative well-behavedness} (Definition~\ref{def:weakWellBehaved}).
We show that every bounded-support distribution is relatively well-behaved for
polynomial templates, which yields a practically meaningful class of applicable
probabilistic programs.  We give an overview of this phenomenon in
Section~\ref{subsec:polySoundness}.

\begin{table}[t]
  \centering
  \caption{Comparison of lazy lexicographic supermartingale techniques}
  \label{tab:lazy-lex-comparison}
  \footnotesize
  \renewcommand{\arraystretch}{1.08}
  \setlength{\tabcolsep}{6pt}
  \begin{tabular}{@{}p{0.35\textwidth}p{0.27\textwidth}p{0.27\textwidth}@{}}
    \toprule
     & Lazy LexRSM~\cite{TakisakaZWL24} & Lazy LexSSM (this paper) \\
    \midrule
    Certified property
      & Reachability
      & $\omega$-regular properties \\
    Template class
      & Linear 
      & Polynomial \\
    Program updates
      & Linear updates
      & Measurable updates \\
Assumption on sampling distributions
      & Well-behavedness
      & Relative well-behavedness \\
    \bottomrule
  \end{tabular}
\end{table}

Our technical development also introduces several ideas that make the analysis simpler and better structured.
First, we isolate the main technical difficulty in the soundness proof for \emph{one-dimensional} lazy SSMs, and then derive soundness of their lexicographic extension as a simple corollary.
The key observation is that a lexicographic SSM is simply a sequence of one-dimensional SSMs that jointly certify the target $\omega$-regular property; we give an overview in Section~\ref{subsec:fromOneDimToLex}.
As a byproduct, we present a ``lazy'' variant of \emph{lexicographic progress-measure supermartingales}~\cite{KuraU26}, another form of multi-dimensional supermartingale.

Second, we formulate our results over a new semantic model, which we call
\emph{program MDPs}, rather than directly over the conventional \emph{probabilistic control flow graph} formalism.
The soundness of lazy RSMs implicitly relies on additional properties of pCFGs that do not hold for general MDPs, and the pCFG formalism makes it difficult to identify which of these properties are essential.
Program MDPs abstract the relevant properties of pCFGs as explicit axioms.
This gives a clearer account of the scope of our soundness results, while also allowing us to use these axioms to develop sound weakly non-negative martingale notions.

To examine the practical benefit of weak non-negativity, we performed experiments over 170 synthetic benchmarks of polynomial probabilistic programs (Section~\ref{sec:evaluation}).
Results show that our synthesis algorithm based on lazy lexicographic SSMs achieved verification rates 20.0--23.5 percentage points higher than those of the strongly non-negative 
counterpart~\cite{KuraU26}.

\medskip

The rest of the paper is organized as follows. Section~\ref{sec:summary} 
summarizes the main technical points: the transition to relatively well-behaved distributions, and multi-dimensional lifting of the soundness proof of one-dimensional SSMs.
After preliminaries in Section~\ref{sec:preliminaries}, 
we give the technical details about well-behavedness and its ``relative'' variant in Section~\ref{sec:RelativeWellBehaved}; 
define lazy SSMs and prove their soundness in Section~\ref{sec:LazySSMSoundness}; 
extend the soundness to multi-dimensional SSM variants in Section~\ref{sect:fromOneToLex}. 
We summarize a synthesis algorithm for lazy lexicographic SSMs in Section~\ref{sec:algorithm}, and show experimental results in Section~\ref{sec:evaluation}. 
After related work in Section~\ref{sec:relWorks}, we conclude in Section~\ref{sec:conclusion}.

\section{Summary of Technical Key Points}\label{sec:summary}
\subsection{From Well-Behaved to Relatively Well-Behaved Distributions}\label{subsec:polySoundness}
We give a brief explanation by example about what well-behavedness is, how it matters for the soundness of linear lazy RSMs, how a similar argument essentially fails in the polynomial case, and how we come up with our alternative condition, relative well-behavedness.

\paragraph{What are well-behaved distributions?}
As an example of a well-behaved distribution, we take $\mathrm{Unif}[0,1]$, i.e., the uniform distribution over $[0,1]$. 
Below we give an informal explanation in the context of program execution; a formal definition is given in Definition~\ref{def:wellBehavedOriginal}.

Suppose we have a probabilistic program that involves a program line `$\ell: \ x \sim \mathrm{Unif}[0,1]$', which randomly updates a variable $x$ to a value sampled from $\mathrm{Unif}[0,1]$. 
Also suppose the program involves two variables $(x,y)$, for simplicity. 
Then an execution of `$\ell: \ x \sim \mathrm{Unif}[0,1]$' will perform the following update to the program state $(\ell,x,y)$, where $\ell'$ is the successor location of $\ell$:
\begin{align*}
    (\ell, x, y) \xrightarrow{\quad \ell: \ x \sim p \quad} (\ell', x', y), \qquad \mbox{where} \qquad x' \sim \mathrm{Unif}[0,1].
\end{align*}
Now suppose a linear function $\eta_{\ell'}(x,y)$ is associated to $\ell'$ as a part of an RSM: Here, an RSM is a function $\eta$ over program states, or equivalently, a collection $(\eta_\ell)_{\ell\in L}$ of functions over variables for each program location. Let us say $\eta_{\ell'}(x,y) = x-y$ for example: 
Then after the above execution, the value of $\eta_{\ell'}$ will be %
updated to
\begin{align*}
    \eta_{\ell', y}(x') = x' - y, \qquad \mbox{where} \qquad x' \sim \mathrm{Unif}[0,1].
\end{align*}
Here, we are interested in two relevant values: One is $\mathrm{Pr}(\eta_{\ell', y}<0)$, the probability of observing $\eta_{\ell', y}(x') <0$; another is $\int_0^1 (\eta_{\ell', y})^- \ dx$, the ``mean shortfall'' of $\eta$ ($f^-$ stands for $\max\{-f, 0\}$). 
These values depend on what $y \in \bbR$ is, but we observe the following relationship: when $\mathrm{Pr}(\eta_{\ell', y}<0) \in (0,1)$, i.e., when $y \in (0,1)$, we have
\[
\int_0^1 (\eta_{\ell', y})^- \ dx = \int_0^{y}(y-x) \ dx = \frac{1}{2} \cdot (\mathrm{Pr}(\eta_{\ell', y}<0))^2.
\]
Thus in particular, for any sequence $y_0, y_1, \ldots$ of reals, the following implication holds:
\begin{align}
     \mbox{if} \quad \mathrm{Pr}(\eta_{\ell', y_t}<0) \rightarrow 0 \quad (t\rightarrow \infty), 
     \quad\mbox{then}\quad
     \int_0^1 (\eta_{\ell', y_t})^- \ dx \rightarrow 0\quad (t\rightarrow \infty).\label{eq:convergenceRel}
\end{align}%
The same convergence relationship holds for any $\eta_{\ell'}$ whenever $\eta_{\ell'}$ is linear. 
We call this property the \emph{well-behavedness} of the distribution $\mathrm{Unif}[0,1]$ (precisely speaking, there is also a requirement on the relative convergence speed of $\mathrm{Pr}(\eta_{\ell', y_t}<0)$ and $\int_0^1 (\eta_{\ell', y})^- \ dx$, see Definition~\ref{def:wellBehavedOriginal}). 
Well-behavedness is shown to be a mild requirement: any bounded-support distribution, as well as some unbounded-support ones such as normal distributions, are well-behaved~\cite[Proposition C.6]{TakisakaZWL24arXiv}.

\paragraph{Relation to the soundness of linear lazy RSMs} Intuitively, well-behavedness of a distribution $p$ ensures that linear lazy RSMs cannot do ``ill-exploitation'' of unbounded negative values at a program line `$x \sim p$'. 
From the counterexample program in Figure~\ref{fig:CounterEx}, 
we find the following update pattern of $\eta$ is its crucial aspect:
\begin{enumerate}[(a)]
    \item The probability that $\eta(x,y) =x$ becomes negative upon the $t$-th iteration of the while loop converges to zero as $t$ increases: Such a probability in the $t$-th iteration is $2^{-(t+1)}$.\label{item:keyProp1illExploitation}
    \item The mean shortfall of $\eta(x,y) = x$ remains a strictly positive constant through each iteration of the while loop: In the $t$-th iteration, it is calculated as $2^{t+1} \times 2^{-(t+1)} = 1$.\label{item:keyProp2illExploitation}
\end{enumerate}
Well-behavedness of $p$ claims that $\eta$ never exhibits such an update pattern through an iterative execution of a program line `$x \sim p$' when $\eta$ is a linear function. Indeed, the convergence relationship (\ref{eq:convergenceRel}) says that the conditions (\ref{item:keyProp1illExploitation}) and (\ref{item:keyProp2illExploitation}) cannot be satisfied simultaneously.

\paragraph{How does a similar argument fail in the polynomial case?}
From the explanation above, one would notice that the notion of well-behavedness is tailored to the soundness proof of linear lazy RSMs: 
We consider an arbitrary \emph{linear} $\eta_{\ell'}$ and see if (\ref{eq:convergenceRel}) holds. 
Therefore, if we wish to perform a similar argument for a polynomial lazy RSM, then we need to modify the well-behavedness requirement accordingly, i.e., now we consider an arbitrary \emph{polynomial} $\eta_{\ell'}$ and see if (\ref{eq:convergenceRel}) holds. 

It turns out such a requirement is now too restrictive: Even $\mathrm{Unif}[0,1]$ is not well-behaved when $\eta_{\ell'}$ can be polynomial. 
Indeed, suppose the underlying program has three variables $(x,y,z)$, and let $\eta_{\ell'}(x,y,z) = xy+z$. 
Also define sequences $\seqOf{y_t}$ and $\seqOf{z_t}$ by $y_t=4^t$ and $z_t=-2^t$. 
Then we have a sequence of functions $\eta_{\ell',y_t,z_t}(x) = \eta_{\ell'}(x,y_t,z_t) = 4^t x - 2^t$ that violates (\ref{eq:convergenceRel}): 
One can check 
$\mathrm{Pr}(\eta_{\ell', y_t, z_t}<0) \rightarrow 0$ as $t\rightarrow \infty$ while 
$\int_0^1 (\eta_{\ell', y_t, z_t})^- \ dx = \frac{1}{2}$ for each $t$. 

\paragraph{Our solution: transition to relative well-behavedness}
The above example shows that, if $\eta_{\ell'}$ is polynomial, then there is a sequence $\seqOf{y_t, z_t}$---therefore, a sequence $\seqOf{\ell, x_t,y_t, z_t}$ of program states---under which $\eta_{\ell'}$ violates the condition (\ref{eq:convergenceRel}). 
But this does \emph{not} appear as a problem for the soundness of polynomial lazy RSMs \emph{as long as such a sequence of program states does not appear in a program run}. 
In fact, there is yet another aspect that hinders an appearance of such a program run: 
\begin{align}
    \mbox{We have $ \ \int_0^1 \eta_{\ell', y_t, z_t} \ dx \rightarrow +\infty \ $ as $ \ t \rightarrow \infty$.}\label{eq:tradeoff}
\end{align}
Because $\eta_{\ell'}$ is supposed to be part of a polynomial lazy RSM, 
this means we also have $\eta_\ell(x_t,y_t,z_t) \rightarrow +\infty$ as $t \rightarrow \infty$. 
This cannot happen ``for free'' in a program run because the ranking condition of $\eta$ forces its value to keep decreasing through a program run, in expectation.

Our key findings are twofold: First, this trade-off relationship---any sequence $\seqOf{ \eta_{\ell', y_t, z_t}}$ that violates (\ref{eq:convergenceRel}) necessarily satisfies (\ref{eq:tradeoff})---holds for any polynomial $\eta_{\ell'}$, and not only under $\mathrm{Unif}[0,1]$ but also under any bounded-support distribution $p$. 
We formalize this property as the \emph{relative well-behavedness} of $p$ (Definition~\ref{def:weakWellBehaved}: The notion also comes with a convergence speed requirement, precisely speaking), 
and show any bounded-support distribution satisfies it in the polynomial case (Proposition~\ref{prop:WeakWellBehaved}). 

Second, we find that the relative well-behavedness clears a path to the soundness proof of polynomial lazy RSMs (and more generally, polynomial lazy SSMs), via a totally different reasoning from the linear case. 
The core technical leap is to consider a \emph{fractional-power transform} of a lazy RSM $\eta$, i.e., a function $(\eta^+)^\theta$, as a witness of almost-sure termination: 
By combining the relative well-behavedness condition and the standard Jensen's inequality, we have a core inequality that lets us translate the ranking condition of $\eta$ into that of $(\eta^+)^\theta$ (Lemma~\ref{lem:keyIneq}).

\subsection{New Soundness Proof Strategy of Lexicographic Martingales: SSM Rules All}\label{subsec:fromOneDimToLex}
We summarize an idea of jointly certifying $\omega$-regular properties by multiple supermartingales, and point out its equivalence with the notion of \emph{lexicographic supermartingales}, a well-known multi-dimensional extension of supermartingales. A formal statement of the equivalence and proof are given in Section~\ref{subsec:EquivalenceLex}, taking a \emph{(strongly non-negative) Streett supermartingale} as an example.

Recall a \emph{Streett condition} 
is an $\omega$-regular property characterized by a pair $(\Sfin, \Sinf)$ of sets of program states, called a  \emph{Streett pair} (let us assume $\Sfin$ and $\Sinf$ are disjoint, for simplicity).
A program run, i.e., a sequence $s_0s_1\ldots$ of program states, satisfies the Streett condition
$(\Sfin, \Sinf)$ if it either visits $\Sfin$ only finitely often, or  
visits $\Sinf$ infinitely often: 
\begin{align}
    \overset{\infty}{\forall}t. s_t \not\in \Sfin 
    \quad \mbox{or} \quad
    \overset{\infty}{\exists}t. s_t \in \Sinf.
\end{align}%
Now take any $S' \subseteq S \setminus \Sinf$. 
Then the Streett condition $(\Sfin, \Sinf)$ is implied by %
the conjunction of the following Streett conditions: 
\begin{align}
    (S', \Sinf) \quad , \quad (\Sfin \setminus S', \Sinf \cup S').\label{eq:twoStreettPairs}
\end{align}
In fact, we have the following reasoning:
\begin{itemize}
    \item The Streett pair $(S', \Sinf)$ stands for $ \overset{\infty}{\forall}t. s_t \not\in \Sinf \implies \overset{\infty}{\forall}t. s_t \not\in S' \implies \overset{\infty}{\forall}t. s_t \not\in \Sinf \cup S'$.%
     \item The Streett pair $(\Sfin \setminus S', \Sinf \cup S')$ stands for 
     $ \overset{\infty}{\forall}t. s_t \not\in \Sinf \cup S'
     \implies \overset{\infty}{\forall}t. s_t \not\in \Sfin \setminus S'
      $.
    \item Thus (\ref{eq:twoStreettPairs}) implies $ \overset{\infty}{\forall}t. s_t \not\in \Sinf 
    \implies \overset{\infty}{\forall}t. s_t \not\in S' 
    \implies \overset{\infty}{\forall}t. s_t \not\in \Sinf \cup S'
    \implies \overset{\infty}{\forall}t. s_t \not\in \Sfin \setminus S'$, in particular
     $ \overset{\infty}{\forall}t. s_t \not\in \Sinf 
    \implies \overset{\infty}{\forall}t. s_t \not\in \Sfin$: 
    This is the Streett condition $(\Sfin, \Sinf)$.
\end{itemize}%
It is also easy to see the opposite implication holds when $S' \subseteq \Sfin$: 
Just observe $(\Sfin, \Sinf)$ implies $(\Sfin',\Sinf')$ when $\Sfin \supseteq \Sfin'$ and $\Sinf \subseteq \Sinf'$.%

More generally, for disjoint subsets $S_1, \ldots, S_k$ of $S \setminus \Sinf$, the Streett condition $(\Sfin, \Sinf)$ can be jointly represented by the Streett pairs $(\Sfin^{(1)}, \Sinf^{(1)}),\ldots,(\Sfin^{(k)}, \Sinf^{(k)})$, where
\[
\Sfin^{(i)} = S_i \quad , \quad \Sinf^{(i)} = \Sinf \cup (S_1 \cup \cdots \cup S_{i-1}).
\]
This decomposition suggests an incremental approach for $\omega$-regular verification of probabilistic programs via \emph{Streett supermartingales}\footnote{
The original notion is due to~\cite{AbateGR24}. We consider only the
refined variant introduced in~\cite{KuraU26} as a
\emph{generalized Streett supermartingale} and refer to it simply as
a Streett supermartingale.%
} (SSMs)~\cite{AbateGR24,KuraU26}, or our lazy SSMs. 
An SSM can witness almost-sure satisfaction of Streett conditions, 
so even if a synthesis algorithm fails to find an SSM under $(\Sfin, \Sinf)$, it can still attempt to witness $(S', \Sinf)$ with $S'\subseteq \Sfin$ as large as possible, 
and then try to witness $(\Sfin \setminus S', \Sinf \cup S')$. 
If successful, the algorithm would generate a sequence $\eta_1, \ldots, \eta_k$ of SSMs that witness $(\Sfin^{(1)}, \Sinf^{(1)}),\ldots,(\Sfin^{(k)}, \Sinf^{(k)})$ as above, jointly witnessing $(\Sfin, \Sinf)$.

Our observation is that such a sequence $\eta_1, \ldots, \eta_k$ is nothing but a \emph{lexicographic Streett supermartingale} proposed in~\cite{KuraU26}, and the incremental algorithm above is exactly the standard design of its synthesis algorithm. 
Many other multi-dimensional martingales can be explained in a similar way: \emph{lexicographic ranking supermartingale}~\cite{AgrawalCP18}, \emph{progress-measure supermartingale}~\cite{KuraU26} and its lexicographic extension~\cite{KuraU26} are all explained as sequences of SSMs that jointly verify the target property. 
Based on this observation, we also define our \emph{lazy lexicographic SSM} (Definition~\ref{def:LazyLexSSMMap}) as a sequence of one-dimensional lazy SSMs.%

A notable advantage of this observation is that it significantly simplifies the soundness proof of multi-dimensional martingales. 
In the conventional proof approach, we directly analyze the behavior of martingales over multi-dimensional vectors equipped with a non-standard order, e.g., the lexicographic ordering with a constant gap. This is typically complex, and such a complexity becomes rather excessive for weakly non-negative variants because the proof involves a delicate analysis about potential ``ill-exploitation'' of unbounded negativity. 
Based on our observation, we can now focus on proving the soundness in a one-dimensional case; then the soundness in a multi-dimensional case follows as an almost obvious corollary.

\section{Preliminaries}\label{sec:preliminaries}

\paragraph{Notations}
We let $\bbN$ and $\bbR$ be the sets of non-negative integers and reals, respectively. 
We also let $\overline{\bbR} = \bbR \cup\{+\infty, -\infty\}$, $\bbR_{>0} = (0,+\infty)$, and $\bbR_{\geq 0} = [0,+\infty)$. %
For $\calY \subseteq \calX$, the \emph{characteristic function} of $\calY$ is a function $\mychar_\calY: \calX \to \{0,1\}$ such that $\mychar_\calY(x) = 1$ iff $x \in \calY$. 
For a function $f:\calX \times \calY \to \calZ$ and $y \in \calY$, we write $f_y: \calX \to \calZ$ to denote the function $f_y(x) = f(x,y)$. 
The value of a vector $\boldsymbol{x}\in \bbR^n$ at the $i$-th index is denoted by %
$x_i$, and its remainder $(x_1, \ldots, x_{i-1}, x_{i+1},\ldots,x_n) \in \bbR^{n-1}$ is denoted by $\myvec{x}_{-i}$. 
For a real-valued function $f: X \to \bbR$, the \emph{positive} and \emph{negative parts} of $f$ are functions $f^+, f^-: X \to \bbR$ such that $f^+(x) = \max\{f(x), 0\}$ and $f^-(x) = \max\{-f(x), 0\}$ (hence we have $f = f^+-f^-$).

The set of all probability distributions over a set $\Omega$ is denoted by $\calD(\Omega)$, where we assume $\Omega$ is a topological space and endowed with the Borel $\sigma$-algebra. 
The Dirac measure at a point $\omega \in \Omega$, i.e., the unit point mass at $\omega$, is denoted by $\delta_\omega$. 
A distribution $p \in \calD(\bbR)$ has \emph{finite support} if there is a finite set $D \subset \bbR$ such that $p(D)=1$; or has \emph{bounded support} if there is $x \in \bbR_{> 0}$ such that $p([-x,x])=1$. 

For a formula $\varphi(\omega)$ whose assignment $\omega$ is in a set $\Omega$, we define a set $\sem{\varphi}$ by $\sem{\varphi} = \{\omega\in \Omega \mid \varphi(\omega) \mbox{ is true}\}$. For a probability space $(\Omega,\calF,\bbP)$ and a formula $\varphi$ with its assignment set $\Omega$, we call the value $\bbP(\sem{\varphi})$ the \emph{satisfaction probability} of $\varphi$, provided $\sem{\varphi} \in \calF$; we write $\bbP(\varphi)$ to denote this value. 
We say \emph{$\varphi$ holds $\bbP$-almost surely} ($\bbP$-a.s.) if $\bbP(\varphi) = 1$. 
We may omit the ``$\bbP$-'' prefix when the underlying probability space is clear from the context.
\subsection{Markov Decision Processes and Program MDPs}
We consider two formalisms as a model of probabilistic programs: 
One is \emph{Markov Decision Processes} (MDPs), a standard formalism to describe discrete-time probabilistic systems with non-determinism. 
Another is \emph{program MDPs}, a particular subclass of MDPs we introduce in this paper. 
As mentioned in Section~\ref{sec:introduction}, program MDPs are designed as an abstraction of MDPs induced from \emph{probabilistic control flow graphs}, which is another standard semantics of probabilistic programs. 
Our aim is to make it easier to trace where the characteristic properties of
such induced MDPs are used in our technical development, in particular in the
soundness proof of lazy SSMs.

A \emph{Markov Decision Process} (MDP) is a tuple $\calM = (S,A,P, S_I)$, where $S$ is a non-empty set of states, $A$ is a non-empty set of actions, $P:S\times A \to \calD(S)$ is a transition kernel, and $S_I \subseteq S$ is a set of initial states.
The sets $S$ and $A$ are equipped with some $\sigma$-algebras, and $P$ and $S_I$ are required to satisfy a standard measurability condition, see e.g.~\cite{BertsekasS07}. 
An MDP is called a \emph{Markov Chain} when $A$ is a singleton, in which case, we omit the description of $A$ and identify $P$ with a function of the type $P: S \to \calD(S)$. 

Below we give our definition of program MDPs. Recall we say a distribution is Dirac when it is the unit point mass at some point.

\begin{mydefinition}
    A \emph{program MDP} is an MDP $\calM = (S,A,P, S_I)$ that satisfies the following:
    \begin{itemize}
        \item (State set characterization) The state set $S$ is of the form $S = L \times \bbR^n$, where $L$ is a finite set of locations and $\bbR^n$ is the set of assignment vectors, for some $n\in \bbN$.
        \item (Independent updates) The transition kernel $P$ is decomposed into $P_L:S \times A \to \calD(L)$ and $P_{v_1}, \ldots, P_{v_n}:S \times A \to \calD(\bbR)$, i.e., we have the following for each $s\in S$ and $a\in A$:
        \[
        P(s,a) = P_L(s,a)\times P_{v_1}(s,a)\times \ldots\times P_{v_n}(s,a).
        \]
        \item (One-by-one randomization) 
        For each $s\in S$ and $a \in A$, only up to one distribution in $P_L(s,a), P_{v_1}(s,a), \ldots, P_{v_n}(s,a)$ is non-Dirac.
        \item (Finite variety of randomization) 
        There is a finite set $\{p_1, \ldots, p_{k} \}\subseteq \calD(L)$ such that,
        for any $(s,a) \in S \times A$, the distribution $P_L(s,a)$ is either Dirac or $P_L(s,a) = p_j$ for some $j \in \{1, \ldots, k\}$. 
        Similarly, there exists a finite set $\{q_1, \ldots, q_{k'}\} \subseteq \calD(\bbR)$ such that,
        for any $(s,a) \in S \times A$ and $1 \leq i \leq n$, 
        the distribution $P_{v_i}(s,a)$ is Dirac or $P_{v_i}(s,a) = q_j$ for some $j \in \{1, \ldots, k'\}$. 
        We call $\{p_1, \ldots, p_{k} \}$ and $\{q_1, \ldots, q_{k'}\}$ the \emph{constructs} of $P$; we call $\{q_1, \ldots, q_{k'}\}$ the \emph{variable construct} of $P$.
    \end{itemize}
\end{mydefinition}
Below we explain some ideas of the conditions above. \emph{Independent updates} says 
that, by taking an action $a$ at a state $s$, the successor state $(\ell',\myvec{x}')$ is determined by the following sampling procedure:
 \[
\mbox{``} \ell' \sim P_L(s,a) 
 \quad,\quad
 x_1' \sim P_{v_1}(s,a)
 \quad,\quad
 \ldots
 \quad,\quad
 x_n' \sim P_{v_n}(s,a)\mbox{''}.
 \]
\emph{One-by-one randomization} says that 
only up to one location or variable is updated w.r.t. a non-Dirac distribution per transition, and others are updated non-probabilistically, e.g., ``$\ell' \leftarrow f_L(s,a)$'' or ``$x_i' \leftarrow f_{v_i}(s,a)$'' for some functions $f_L:S\times A \to L$ and $f_{v_i}: S\times A \to \bbR$. 
Finally, \emph{finite variety of randomization} says that non-Dirac updates of location and variables must be done according to the constructs of $P$, e.g., ``$\ell' \sim p_j$'' or  ``$x_i' \sim q_j$'', where the constructs of $P$ are finite. 
Notice that we do not impose such a restriction on Dirac updates: 
For example, %
a self-loop at $(\ell, x)$ is realized as $P((\ell,x),a) = \delta_\ell \times \delta_{x}$, and $P$ is allowed to have such a behavior %
for every $x \in \bbR$.

Program MDP is general enough as a semantics of probabilistic programs in e.g.~\cite{TakisakaZWL24}: 
It can realize the standard control structure in imperative languages such as 
if-branches and while-loops, as well as non-deterministic branching $\mbox{`\textbf{if}} \, \mbox{$\star$'}$, non-deterministic
assignment $ \mbox{`\textbf{$x\leftarrow$ndet$(D)$}'}$ with  $D \subseteq \bbR$, 
probabilistic branching $\mbox{`\textbf{if}} \, \mbox{\textbf{prob$(p)$}'}$ with 
$p \in [0,1]$, and 
probabilistic assignment $\mbox{`\textbf{$x \sim \mu$}'}$ with $\mu \in \calD(\bbR)$.

\begin{myremark}
One caveat is that we mean $p$ and $\mu$ above are constants, i.e., their descriptions do not involve program variables: 
Due to \emph{finite variety of randomization}, program MDPs cannot realize $\mbox{`\textbf{if}} \, \mbox{\textbf{prob$(x)$}'}$ by a single MDP transition when $x$ is a program variable. 
Such a restriction in probabilistic program is common, however~\cite{TakisakaZWL24}; 
one can also mimic such a behavior by combining a few program lines. For example, one can mimic 
$\mbox{`\textbf{if}} \, \mbox{\textbf{prob$(x)$}'}$ by 
$\mbox{`$y\sim \mathrm{Unif}[0,1]; \ $\textbf{if}} \, \mbox{$y\leq x$'}$.
\end{myremark}

We recall standard relevant notions about MDPs. 
For a given MDP $\calM = (S,A,P,S_I)$, a \emph{run} of $\calM$ is an infinite sequence $s_0s_1\ldots$ of states such that $s_0 \in S_I$; 
a \emph{finite path} is a finite, nonempty prefix of a run. 
The set of all runs and finite paths of $\calM$ are denoted by $\Pi_\calM$ and $\Pi_\calM^\mathsf{fin}$, respectively. 
An \emph{invariant} is a measurable set $\inv \subseteq S$ that subsumes $S_I$ and is (almost surely) closed under transitions, i.e., $S_I \subseteq \inv$ and $\mynext_a \mychar_{\inv}(s) = 1$ for each $s \in \inv$ and $a\in A$. 

A \emph{scheduler}\footnote{The notation $\sigma$ is used for schedulers and the terminology ``$\sigma$-algebra'', which are separate notions.} of $\calM$ is a function $\sigma: \Pi_\calM^\mathsf{fin} \to \calD(A)$. 
We say $\sigma$ is \emph{deterministic} when $\sigma(w)$ is Dirac for every $w \in \Pi_\calM^\mathsf{fin}$, in which case, we may identify $\sigma$ with a function of the type $\sigma: \Pi_\calM^\mathsf{fin} \to A$ (i.e., if $\sigma(w) = \delta_a$ then we may write $\sigma(w) =a$ instead).

For an initial state $s_I \in S_I$ and a scheduler $\sigma$ of $\calM$, 
there is a standard way to construct a probability space $(\Pi_\calM, \calF, \bbP_{s_I}^\sigma)$ such that, for any measurable set $W \subseteq \Pi_\calM$ of runs of $\calM$, the value $\bbP_{s_I}^\sigma(W)$ is the probability that an execution of $\calM$ from $s_I$ under $\sigma$ generates a run $w \in W$. 
We call such a probability space the \emph{dynamics of $\calM$ under $\sigma$ and $s_I$}. 
See e.g.~\cite{BertsekasS07} for its formal construction.

A \emph{path property} of $\calM$ is a measurable set $\Omega \subseteq \Pi_\calM$. 
We say $\calM$ satisfies the property $\Omega$ almost surely under $s_I$ and $\sigma$ if $\bbP_{s_I}^\sigma(\Omega) = 1$; if $\calM$ satisfies $\Omega$ almost surely for any $s_I\in S_I$ and deterministic scheduler $\sigma$ (in which case, it does so for any $\sigma$ as well~\cite{Feinberg96}), we simply say $\calM$ satisfies the property $\Omega$ almost surely.

\subsection{Preliminaries of Martingales}%
We recall two standard base formalisms to introduce martingale notions~\cite{chakarov2013probabilistic,ChatterjeeGNZZ23}.  
One is a real-valued function $\eta: S \to \bbR$ over states $S$ of an MDP $\calM$, often called \emph{measurable maps}. 
Another is a  
\emph{stochastic process}, i.e., a sequence $\seqOf{X_t}$ of real-valued random variables $X_t$. 
The two formalisms are closely related to each other: 
For a given initial state and a scheduler of $\calM$, 
the behavior of the value of an MM $\eta$ through an execution of $\calM$ forms a stochastic process, which we call an \emph{induced process}.

\paragraph{Measurable maps and pre-expectation}
Fix an MDP $\calM = (S,A,P, S_I)$. %
A measurable map (MM) is a Borel measurable function %
$\eta:S \to \bbR$. %
For a given MM $\eta$, $s \in S$ and $a \in A$,  
the \emph{pre-expectation} of $\eta$ under $a$ is defined by $\bbX_a \eta(s) = \int_S \eta \ dP(s,a)$: that is, the expected value of $\eta$ after a transition from $s$ under $a$. We write $\bbX \eta(s)$ instead when $\calM$ is a Markov chain. 
The \emph{maximal} pre-expectation of $\eta$ is also defined\footnote{We note that $\overline{\bbX}\eta$ may not be measurable even if $\eta$ is, as discussed in~\cite{TakisakaOUH21}. This does not incur technical problems in this paper, however, because we use this function only as a convenient notation to describe global properties over the action set $A$. For example, $\overline{\bbX}\eta (s)\leq \eta(s)$ is equivalent to $\forall a \in A. \bbX_a\eta (s)\leq \eta(s)$ as a formula.} by 
$\overline{\bbX} \eta(s) = \sup_{a \in A} \bbX_a\eta(s)$. 
Also, for a given scheduler $\sigma$, the pre-expectation of $\eta$ under $\sigma$ is defined as follows: 
for $w =s_0\cdots s_t\in \Pi_\calM^\mathsf{fin}$, %
we let  
$\bbX_\sigma \eta (w) = \int_{a \in A} \bbX_a \eta(s_t) \ d\sigma(w)$. 
In particular, when $\sigma$ is deterministic and $\sigma(w) =  \delta_a$, then $\bbX_\sigma\eta(w) = \bbX_a\eta(s_t)$.
Similar to $\bbX_a\eta$, the value $\bbX_\sigma\eta(w)$ represents the expected value of $\eta$ after a transition from the current transition history $w$, following the scheduler $\sigma$.

\paragraph{Stochastic processes and conditional expectation} 
A (discrete-time) \emph{stochastic process} in a probability space $(\Omega, \calF, \bbP)$ is a sequence $(X_t)_{t=0}^\infty$ of %
$\calF$-measurable random variables $X_t: \Omega \to \bbR$ for $t \in \bbN$. 
As standard technical requirements, we assume $(\Omega, \calF, \bbP)$ is associated with a \emph{filtration} $(\calF_t)_{t=0}^\infty$, i.e., a sequence of sub-$\sigma$-algebras of $\calF$ such that $\calF_t \subseteq \calF_{t+1}$ for each $t \in \bbN$; 
and 
$(X_t)_{t=0}^\infty$ is \emph{adapted} to $(\calF_t)_{t=0}^\infty$, i.e., $X_t$ is $\calF_t$-measurable for each $t\in \bbN$.

For a probability space $(\Omega, \calF, \bbP)$, an $\calF$-measurable random variable $X: \Omega \to \bbR$ and a sub-$\sigma$-algebra $\calF'$ of $\calF$, 
a \emph{conditional expectation} of $X$ w.r.t. $\calF'$ is a function $\bbE[X \mid \calF']: \Omega \to \overline{\bbR}$ 
such that $\int_A\bbE[X \mid \calF'] \ d\bbP = \int_A X \ d\bbP$ for each $A \in \calF'$.
A conditional expectation uniquely exists (up to the difference over a null set) when $\bbE_\bbP[X]$ exists, i.e., when either $\bbE_\bbP[X^+]<\infty$ or $\bbE_\bbP[X^-]<\infty$ hold.

For a stochastic process $\seqOf{X_t}$, the conditional expectation $\bbE[X_{t+1} \mid \calF_t]$ plays the role of pre-expectation for MMs. 
Intuitively, $\bbE[X_{t+1} \mid \calF_t](\omega)$ is the expected value of $X_{t+1}$ given the information about $\omega$ at time $t$; 
similarly, $\bbX_\sigma\eta(w)$ can be seen as the expected value of $\eta$ at the next time frame, given the information $w$ about the current execution history of $\calM$.

\paragraph{Induced stochastic processes from measurable maps} 
Let an MDP $\calM = (S,A,P, S_I)$ and MM $\eta: S \to \bbR$ be given.
For a scheduler $\sigma$ and an initial state $s_I$ of $\calM$, we call the following $\seqOf{X_t}$ an \emph{induced process} from $\eta$ under $s_I$ and $\sigma$: 
For $\omega = s_0s_1\cdots$ and $t\in \bbN$, we let $X_t(\omega) = \eta(s_t)$. 
Observe it is a stochastic process over $(\Pi_{\calM}, \calF, \bbP_{s_I}^\sigma)$, i.e., the dynamics of $\calM$ under $\sigma$ and $s_I$. There is also a canonical filtration $\seqOf{\calF_t}$ associated to $(\Pi_{\calM}, \calF, \bbP_{s_I}^\sigma)$, generated by so-called \emph{cylinder set construction}, and $\seqOf{X_t}$ is adapted to it.

For induced processes, the aforementioned correspondence between an MM $\eta$ and its induced process $\seqOf{X_t}$ is formally justified as follows, provided that $\bbE[X_t]$ exists for each $t \in \bbN$.

\begin{myproposition}[e.g.~\cite{TakisakaZWL24}]\label{prop:condExpConventional}
    Let an MDP $\calM$ and an MM $\eta$ over $\calM$ be given, and let $\seqOf{X_t}$ be the induced process from $\eta$ under a scheduler $\sigma$ and an initial state $s_I$. 
    For $t \in \bbN$,
    if $\bbE[X_{t+1}]$ exists, %
    then the function $f_t(s_0s_1\ldots) = \bbX_\sigma \eta(s_0\ldots s_t)$ is a conditional expectation $\bbE[X_{t+1} | \calF_t]$. \qed
\end{myproposition}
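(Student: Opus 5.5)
To prove Proposition~\ref{prop:condExpConventional}, I would check directly the defining property of conditional expectation. Three things are needed: that $f_t$ is $\calF_t$-measurable, and that $\int_B f_t \, d\bbP_{s_I}^\sigma = \int_B X_{t+1}\, d\bbP_{s_I}^\sigma$ holds for every $B \in \calF_t$. Here $\calF_t$ is the cylinder $\sigma$-algebra generated by the first $t+1$ coordinates $s_0,\ldots,s_t$.

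Measurability is the easy step. By definition, $f_t(\omega)$ depends only on the prefix $w = s_0\cdots s_t$, and it equals
\[
\bbX_\sigma\eta(w) = \int_{A}\int_S \eta \, dP(s_t,a)\, d\sigma(w)(a).
\]
This is a composition of the measurable kernels $\sigma$ and $P$ with the measurable map $\eta$. So $f_t$ is $\calF_t$-measurable, provided the inner integrals are well defined. That proviso is where the hypothesis on $\bbE[X_{t+1}]$ enters, as discussed below.

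For the integral identity, I would use the standard Ionescu--Tulcea construction of $\bbP_{s_I}^\sigma$. Under it, the joint law of $(s_0,\ldots,s_{t+1})$ is the law $\mu_t$ of the prefix $w$, followed by drawing $a \sim \sigma(w)$, followed by drawing $s_{t+1} \sim P(s_t,a)$. Take $B \in \calF_t$ of the form $B = \{\omega : (s_0,\ldots,s_t) \in C\}$. By Fubini/Tonelli,
\[
\int_B X_{t+1}\, d\bbP = \int_C \int_A \int_S \eta(s')\, dP(s_t,a)(s')\, d\sigma(w)(a)\, d\mu_t(w) = \int_C \bbX_\sigma\eta(w)\, d\mu_t(w) = \int_B f_t \, d\bbP .
\]

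The main obstacle is justifying Fubini when $\eta$ is not integrable, since only the existence of $\bbE[X_{t+1}]$ is assumed. I would split $\eta = \eta^+ - \eta^-$ and apply Tonelli to each part separately; this is valid because both are non-negative. This gives the identity for $X_{t+1}^{\pm}$ with inner pre-expectations $\bbX_\sigma \eta^{\pm}$. Since $\bbE[X_{t+1}]$ exists, at least one of $\int X_{t+1}^+$ and $\int X_{t+1}^-$ is finite, say the negative part. Then $\bbX_\sigma\eta^-(w) < \infty$ for $\mu_t$-almost every $w$. Hence $\bbX_\sigma\eta = \bbX_\sigma\eta^+ - \bbX_\sigma\eta^-$ is well defined almost surely (possibly $+\infty$), and subtracting the two identities yields the claim for every $B \in \calF_t$. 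The case where the positive part is the finite one is symmetric.
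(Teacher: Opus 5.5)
Your proof is correct. The paper gives no proof of its own here; it cites the fact from prior work and closes it with \qed. Your argument is the standard one: every $B\in\calF_t$ has the form $\{(s_0,\ldots,s_t)\in C\}$, so you check the defining identity on such sets via the Ionescu--Tulcea disintegration, apply Tonelli separately to $\eta^+$ and $\eta^-$, and use the existence of $\bbE[X_{t+1}]$ to justify the subtraction. Two details are worth stating explicitly. First, $f_t$ is a version defined only off a $\bbP$-null set. Second, the paper's nested definition $\int_A \bbX_a\eta(s_t)\,d\sigma(w)(a)$ agrees with $\bbX_\sigma\eta^+(w)-\bbX_\sigma\eta^-(w)$ there, because $\bbX_\sigma\eta^-(w)<\infty$ forces $\bbX_a\eta^-(s_t)<\infty$ for $\sigma(w)$-almost every $a$ (and symmetrically in the other case).
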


\section{Relatively Well-Behaved Distributions and Their Coverage}\label{sec:RelativeWellBehaved}

In this section, we introduce the key notion explained in
Section~\ref{subsec:polySoundness}: \emph{relatively well-behaved
distributions}. We also explain why the original well-behavedness assumption yields an overly narrow soundness
theorem in the polynomial-template setting. We proceed as follows:
\begin{itemize}
    \item As a recap, we recall the original definition of well-behavedness
    from~\cite{TakisakaZWL24} (Definition~\ref{def:wellBehavedOriginal}).
    This notion is tailored to the soundness proof of linear lazy RSMs.
    We also provide an overview of the soundness proof of linear lazy RSMs
    given in~\cite{TakisakaZWL24}, in order to recall how the well-behavedness
    condition is used there.

    \item We introduce the notion of \emph{well-behavedness over $\calG$}
    (Definition~\ref{def:wellBehavednessG}), where $\calG$ is a user-defined
    function class. We observe that this notion coincides with
    Definition~\ref{def:wellBehavedOriginal} when $\calG$ is the class of
    linear functions (Proposition~\ref{prop:wellBehavedCoincidence}), thereby
    justifying our definition. We then show that uniform distributions over
    intervals are not well-behaved over polynomial functions
    (Proposition~\ref{prop:CounterExWellBehavedness}). Thus, while this generalized
    well-behavedness condition would still support a soundness proof, the
    resulting theorem would have very limited practical coverage for polynomial
    lazy RSMs.

    \item Finally, we introduce the notion of relative well-behavedness over
    $\calG$ (Definition~\ref{def:weakWellBehaved}), and show that every
    bounded-support distribution is relatively well-behaved over polynomial
    functions (Proposition~\ref{prop:WeakWellBehaved}). This gives a
    substantially broader and practically meaningful scope for our soundness
    theorem.
\end{itemize}

\subsection{A Recap: Well-Behaved Distributions and Their Use in the Existing Proof}
Below, for $a \in \bbR$, let $\calF_a$ be a set of functions defined by $\calF_a = \{f(x) = ax+b \mid b\in \bbR\}$: That is, $\calF_a$ is the set of all univariate linear functions with the angle $a$.

\begin{mydefinition}[well-behaved distributions~\cite{TakisakaZWL24}]\label{def:wellBehavedOriginal}
    We say a distribution $p \in \calD(\bbR)$ is \emph{well-behaved} if the following holds: for any $a \in \bbR $, there exist constants $C_1 \in (0,1)$ and $C_2 >0$ such that%
\begin{align}
    \forall f\in \calF_a. 
    \biggl[
        p(f < 0) \leq C_1 \Rightarrow 
        \int f^- dp \leq C_2 \cdot p(f < 0)
    \biggr]. \label{eq:wellBehavednessOriginal}
\end{align}
\end{mydefinition}
This definition formalizes an informal explanation in Section~\ref{subsec:polySoundness} as follows.
The argument there was, in a general form, to consider any linear function $g:\bbR^n \to \bbR$, $i\in \{1, \ldots, n\}$, $\myvec{x}_{-i} \in \bbR^{n-1}$, and the random variable
\begin{align}
    g_{\myvec{x}_{-i}}(x_i') = g(x_i',\myvec{x}_{-i}), \qquad \mbox{where} \qquad x_i' \sim p.
\end{align}
Then we consider the relationship of $p(g_{\myvec{x}_{-i}}<0)$ and $\int(g_{\myvec{x}_{-i}})^- \ dp$ when $\myvec{x}_{-i} \in \bbR^{n-1}$ can be arbitrary. 
Condition (\ref{eq:wellBehavednessOriginal}) implies what corresponds to (\ref{eq:convergenceRel}) in Section~\ref{subsec:polySoundness}: for any sequence
$(f_n)_{n\in \bbN} \subseteq \calF_a$ such that %
$p(f_n<0) \rightarrow 0$,
condition (\ref{eq:wellBehavednessOriginal}) implies 
$\int (f_n)^-\,dp\to 0$.
In fact, more strongly, 
\begin{align}
    \int (f_n)^-dp = O\bigl(p(f_n<0)\bigr) \qquad \mbox{as \, $n \to \infty$}.\label{eq:linearVanish}
\end{align}
That is, $\int (f_n)^-dp$ vanishes at a linear or faster rate relative to $p(f_n<0)$. 

\begin{myexample}
    Let $p = \mathrm{Unif}[0,1]$, the uniform distribution over the unit interval. Then $p$ is well-behaved. To see this, for any $a \in \bbR \setminus \{0\}$, let $C_1$ be any value in $(0,1)$ and $C_2 =|a|$. Then the condition (\ref{eq:wellBehavednessOriginal}) holds under such $C_1$ and $C_2$. 
    Indeed, for any $f\in \calF_a$ such that $p(f<0)<1$, we have $\max_{0\leq x \leq 1}f(x)>0$ (as $p(f(x)>0)$ is non-zero), and hence $\min_{0\leq x \leq 1}f(x) > -|a|$. Thus we have
    \[
     \int f^- dp = -\int_{\sem{f<0}}f \ dp \leq -\min_{0\leq x \leq 1}f(x) \cdot p(f < 0) < |a| \cdot p(f < 0).
    \]
    For $a=0$, condition (\ref{eq:wellBehavednessOriginal}) trivially holds under any $C_1 \in (0,1)$ and $C_2 >0$ because any $f \in \calF_a$ is a constant function, and thus we have $p(f<0) \leq C_1 \implies p(f<0) =0 \implies  \int f^- dp = 0$. 
    
    By a similar argument, any $p\in \calD(\bbR)$ with bounded support is shown to be well-behaved. Unbounded-support distributions can also be well-behaved when their tail probabilities vanish quickly enough: For example, normal distributions are well-behaved~\cite[Proposition C.6]{TakisakaZWL24arXiv}. \demo
\end{myexample}

Now we give a sketch of the soundness proof of a linear lazy LexRSM in~\cite{TakisakaZWL24} to see how the well-behavedness condition is involved in it. 
We give a simplified version in which we focus on a one-dimensional lazy lexicographic RSM (which we call a lazy RSM for brevity) and omit nondeterminism. %
Under such a simplification, the notion is described as follows.

\begin{mydefinition}[lazy RSM~\cite{TakisakaZWL24}]
    Let $\calM = (S, P, S_I)$ be a program Markov chain (i.e., a program MDP such that $A$ is a singleton) with a terminal location $\ell_\mathsf{term} \in L$, and let $T = \{(\ell_\mathsf{term},\myvec{x})\mid \myvec{x} \in \bbR^n\} \subseteq S$. An MM $\eta$ over $\calM$ is called a \emph{lazy ranking supermartingale} (lazy RSM) if it satisfies the following for some invariant $\inv \subseteq S$:
    \begin{align*}
    \mbox{(Ranking condition)} \quad
      & \forall s \in \inv\setminus T. \ 
        [\bbX\eta(s) \leq \eta(s)-1],\\      
    \mbox{(Weak non-negativity)} \quad 
      & \forall s \in \inv\setminus T. \ 
        [\eta(s) \geq 0].
\end{align*}
\end{mydefinition}
The only difference from the standard RSM~\cite{chakarov2013probabilistic} is that the non-negativity condition is not required in the terminal region $T$. 
It has been shown in~\cite{TakisakaZWL24} that a lazy RSM $\eta$ is sound, i.e., the underlying $\calM$ almost surely reaches $T$ eventually, if the following hold: 
\begin{enumerate}[(i)]
    \item $\eta$ is linear, i.e., $\eta_\ell(\myvec{x}) = \eta(\ell, \myvec{x})$ is a linear function for each $\ell \in L$,\label{item:LazyRSMSoundness1}
    \item $\calM$ is induced from a linear arithmetic PP, and \label{item:LazyRSMSoundness2}
    \item $\calM$ is induced from a PP whose variable samplings follow 
    well-behaved distributions. %
    In the program MDP formalism, it amounts to requiring the distributions in the variable construct of $P$ to be well-behaved. \label{item:LazyRSMSoundness3}%
\end{enumerate}
We found the condition (\ref{item:LazyRSMSoundness2}) can be simply dropped: We explain in Remark~\ref{rem:WhyArithmeticUnnecessary} why the original proof is concerned about it, and how we find it redundant. %
The conditions (\ref{item:LazyRSMSoundness1}) and (\ref{item:LazyRSMSoundness3}), on the other hand, 
are crucial. 
The proof plan is to show that the following function $\tilde\eta$, called the \emph{$\varepsilon$-fixing} of $\eta$, retains the ranking condition at particular states for an appropriate $\varepsilon>0$. Observe the value of $\tilde\eta$ has a uniform lower bound $-\varepsilon$, so we do not have an issue of unbounded negativity with it.
\begin{align}
    \Tilde{\eta}(s)=
    \begin{cases}
        \eta(s) & \mbox{if $\eta(s)\geq 0$,}\\
        -\varepsilon & \mbox{otherwise.}
    \end{cases}
\end{align}%
To show that, we observe the condition (\ref{eq:wellBehavednessOriginal}) in well-behavedness can be rewritten as follows:
\begin{align}
    \forall f\in \calF_a. 
    \biggl[
        p(f < 0) \leq C_1 \Rightarrow 
        \int f dp \geq \int f^+ dp - C_2 \cdot p(f < 0)
    \biggr].%
\end{align}
This ensures the existence of $\varepsilon >0$ and $\gamma \in (0,1)$ such that, for each program location $\ell \neq \ell_\mathsf{term}$ with the associated program line `$x \sim p$', %
a linear lazy RSM $\eta$ and its $\varepsilon$-rescaling $\tilde{\eta}$ satisfy
\begin{align}
    \forall \myvec{x}. 
    \bigl[
    \pneg(\ell, \myvec{x}) \leq \gamma \implies \bbX \eta(\ell, \myvec{x}) \geq \bbX\Tilde{\eta}(\ell, \myvec{x}) 
    \bigr],
    \label{eq:translateWellBehavedness}
\end{align}
where $\pneg(\ell, \myvec{x})$ represents the probability that the value of $\eta$ is negative after a transition from $(\ell, \myvec{x})$ (or, more formally, $\pneg(\ell, \myvec{x}) = \bbX \mychar_{\sem{\eta <0}}(\ell, \myvec{x})$). 
We also have $\bbX \eta (\ell, \myvec{x}) \leq \eta(\ell, \myvec{x})-1$ and $\eta(\ell, \myvec{x}) \geq 0$ by the ranking and non-negativity conditions; the latter further implies $\tilde\eta(\ell, \myvec{x}) = \eta(\ell, \myvec{x})$. %
Overall, we have the following, which claims $\tilde\eta$ retains the ranking condition at $\ell$ given $\pneg(\ell, \myvec{x}) \leq \gamma$:
\begin{align}
    \forall \myvec{x}. 
    \bigl[
    \pneg(\ell, \myvec{x}) \leq \gamma \implies \bbX\Tilde{\eta}(\ell, \myvec{x}) \leq \tilde\eta(\ell, \myvec{x}) -1
    \bigr].
    \label{eq:translateWellBehavedness2}
\end{align}
It can be shown that (\ref{eq:translateWellBehavedness2}) also holds at any other non-terminal location (by a much simpler argument). 
This property---called \emph{$(\varepsilon, \gamma)$-rescalability} of $\eta$ in~\cite{TakisakaZWL24}---ensures that the underlying program is AST: 
Although $\tilde{\eta}$ satisfies the ranking condition only when $\pneg(\ell, \myvec{x}) \leq \gamma$, 
such $\tilde{\eta}$ witnesses that a program run almost surely satisfies either 
\begin{inparaenum}[(a)]
    \item termination, or 
    \item recurrence to the state set $S_\gamma = \{s \in S \setminus T \mid \pneg(s) > \gamma \}$.
\end{inparaenum}
The latter almost surely implies that the value of $\tilde\eta$ eventually becomes negative during a program run, and hence the program terminates.

\begin{myremark}
    We provide some observations about %
    why well-behavedness is designed to satisfy (\ref{eq:linearVanish}). %
    If we drop the requirement on the convergence speed, then the definition of well-behavedness will be as follows, via the standard epsilon-delta argument:
    \begin{align}
    \forall a\in\bbR. \ 
    \forall C_2 >0. \
    \exists C_1 \in (0,1). \ 
    \forall f\in \calF_a.
    \biggl[
        p(f < 0) \leq C_1 \Rightarrow 
        \int f^- dp \leq C_2
    \biggr]. \label{eq:wellBehavednessNoSpeed}
\end{align}
This is actually enough for soundness \emph{if the ranking condition is imposed at every non-terminal state}. %
By using (\ref{eq:wellBehavednessNoSpeed}) instead, one can obtain the following, which is a slightly weak variant of (\ref{eq:translateWellBehavedness2}):
\begin{align}
    \forall \myvec{x}. 
    \bigl[
    \pneg(\ell, \myvec{x}) \leq \gamma \implies \bbX\Tilde{\eta}(\ell, \myvec{x}) \leq \tilde\eta(\ell, \myvec{x}) -(1-C_2)
    \bigr].
\end{align}
By taking sufficiently small $C_2$, this is enough to claim $\tilde\eta$ is $(\varepsilon,\gamma)$-rescalable. However, in a lazy LexRSM, the ranking condition may be replaced by the \emph{unaffecting condition} at some states, which only requires $\bbX\eta(s) \leq \eta(s)$. The condition (\ref{eq:wellBehavednessNoSpeed}) seems too weak to show $\tilde\eta$ retains this condition: The difference is that the ranking condition has a ``buffer to absorb $C_2$'' while the unaffecting condition does not. In other words, $-1$ can still be negative after adding a small positive number, but $0$ cannot be zero after adding any positive number. \demo
\end{myremark}

\subsection{Well-Behaved Distributions over $\calG$ and Their Limitation}

The well-behavedness in Definition~\ref{def:wellBehavedOriginal}, in particular how we define the function class $\calF_a$, is tailored for the soundness proof of \emph{linear} lazy RSMs. 
Based on how the set $\calF_a$ appeared in the proof argument, the following extension appears to be natural.%

\begin{mydefinition}[well-behaved distributions over $\calG$]\label{def:wellBehavednessG}
    Let $\calG$ be a set of functions such that each $g \in \calG$ is of the type $g:\bbR^k \rightarrow \bbR$ for some $k \in \bbN$ ($\calG$ may contain functions with different $k$'s). 
    We say $p \in \calD(\bbR)$ is \emph{well-behaved over $\calG$} if, for each %
    $g:\bbR^k \to \bbR \in \calG$ 
    and $i \in \{1, \ldots, k\}$, there exist $C_1 \in (0,1)$ and $C_2 >0$ such that
    \begin{align}
    \forall \boldsymbol{x}_{-i}\in \bbR^{k-1}. 
    \biggl[
        p(g_{\boldsymbol{x}_{-i}} < 0) \leq C_1 \Rightarrow%
        \int (g_{\boldsymbol{x}_{-i}})^- dp
        \leq
        C_2 \cdot p(g_{\boldsymbol{x}_{-i}} < 0)
    \biggr],\label{eq:wellBehavednessG}
\end{align}
where $g_{\boldsymbol{x}_{-i}}: \bbR \to \bbR$ is a function that is induced by fixing the value of all variables but $x_i$ by $\boldsymbol{x}_{-i} \in \bbR^{k-1}$, i.e., $g_{\boldsymbol{x}_{-i}}(x_i) = g(x_i, \boldsymbol{x}_{-i})$. 
\end{mydefinition}
In this definition, the set $\calG$ represents the template class we consider, 
and a given $g:\bbR^k \to \bbR \in \calG$ 
    and $i \in \{1, \ldots, k\}$ induce a set $\calF_{g,i} := \{g_{\boldsymbol{x}_{-i}} \mid \boldsymbol{x}_{-i}\in \bbR^{k-1}\}$, which is a generalization of $\calF_a$. Indeed, if we let $\calG$ be the set of all linear functions, then 
    for $g\in \calG$ such that $g(\myvec{x}) = \myvec{a}\cdot\myvec{x} + b$ we have $\calF_{g,i} = \calF_{a_i}$. 
    Thus it is easy to show the following, which justifies Definition~\ref{def:wellBehavednessG} as a generalization of Definition~\ref{def:wellBehavedOriginal}.
\begin{myproposition}\label{prop:wellBehavedCoincidence}
    Let $\calG$ be the set of all linear functions. Then $p \in \calD(\bbR)$ is well-behaved over $\calG$ if and only if $p$ is well-behaved in the sense of Definition~\ref{def:wellBehavedOriginal}. \qed
\end{myproposition}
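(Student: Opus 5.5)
The plan is to show that the two quantified conditions coincide by proving that, for $\calG$ the set of all linear functions, the family of univariate sections $\calF_{g,i}=\{g_{\myvec{x}_{-i}}\mid \myvec{x}_{-i}\in\bbR^{k-1}\}$ is always one of the classes $\calF_a$, and that every $\calF_a$ arises this way. Once that correspondence is established, condition (\ref{eq:wellBehavednessG}) for a pair $(g,i)$ reads verbatim as condition (\ref{eq:wellBehavednessOriginal}) for the corresponding $a$, with the same constants $C_1,C_2$, and both directions follow.

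First I will compute the sections. Take $g(\myvec{x})=\myvec{a}\cdot\myvec{x}+b$ on $\bbR^k$ and fix $i$. Then $g_{\myvec{x}_{-i}}(x_i)=a_i x_i+\bigl(b+\sum_{j\neq i}a_jx_j\bigr)$, so $\calF_{g,i}\subseteq\calF_{a_i}$. Conversely, for $k\geq 2$ the intercept $b+\sum_{j\neq i}a_jx_j$ ranges over all of $\bbR$ when some $a_j\neq 0$ with $j\neq i$, giving $\calF_{g,i}=\calF_{a_i}$. Otherwise it is the single value $b$, and $\calF_{g,i}$ is a singleton subset of $\calF_{a_i}$. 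For $k=1$ the set $\bbR^0$ is a point, so again $\calF_{g,1}=\{g\}$.

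($\Leftarrow$) Suppose $p$ is well-behaved in the sense of Definition~\ref{def:wellBehavedOriginal}. Given linear $g$ and an index $i$, take the constants $C_1,C_2$ for $a=a_i$. Since $\calF_{g,i}\subseteq\calF_{a_i}$, the universal statement (\ref{eq:wellBehavednessOriginal}) restricts to (\ref{eq:wellBehavednessG}).

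($\Rightarrow$) Given $a\in\bbR$, take the linear function $g(x_1,x_2)=ax_1+x_2$ with $i=1$. Its sections are exactly $\calF_a$. The constants that Definition~\ref{def:wellBehavednessG} provides for $(g,1)$ then give (\ref{eq:wellBehavednessOriginal}) for this $a$.

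I expect no real obstacle here. The only care needed is the degenerate cases (a singleton $\calF_{g,i}$ and $k=1$), which the inclusion argument already covers. I also need to confirm that ``linear'' is read as affine, as in the paper's own example $g(\myvec{x})=\myvec{a}\cdot\myvec{x}+b$; otherwise the witness in ($\Rightarrow$) must be a genuinely linear function like $ax_1+x_2$ in two variables. That choice works under either reading, so I will use it.
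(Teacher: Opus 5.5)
Your proof is correct and follows the paper's argument, which is the one-line remark before the proposition identifying the section families $\calF_{g,i}$ with the classes $\calF_{a_i}$. You are more careful than the paper, whose claimed equality $\calF_{g,i}=\calF_{a_i}$ fails in degenerate cases (e.g.\ $k=1$, or $a_j=0$ for all $j\neq i$). Using only the inclusion $\calF_{g,i}\subseteq\calF_{a_i}$ for one direction and the explicit witness $g(x_1,x_2)=ax_1+x_2$ for the other avoids that gap.
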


Based on Definition~\ref{def:wellBehavednessG}, the correctness proof of linear lazy RSMs is canonically modified into one for polynomial lazy RSMs, under the assumption that each distribution in the underlying PP is well-behaved over the set of all polynomial functions. 
The issue is that such an assumption is now too restrictive to cover the class of polynomial PPs: 
When $\calG$ is the set of all polynomial---or even quadratic---functions, 
even uniform distributions over intervals are not well-behaved over $\calG$.

\begin{myproposition}\label{prop:CounterExWellBehavedness}
    Let $\calG$ be the set of all quadratic functions, and let $p \in \calD(\bbR)$ be the uniform distribution over any interval, say $[a,b]$ with $a <b$. Then $p$ is not well-behaved over $\calG$.
\end{myproposition}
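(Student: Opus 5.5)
The plan is to exhibit a single quadratic $g:\bbR^3\to\bbR$ and an index $i$ for which (\ref{eq:wellBehavednessG}) fails for every choice of $C_1\in(0,1)$ and $C_2>0$. Following the example of Section~\ref{subsec:polySoundness}, I take $g(x,y,z) = (x-a)y + z$ with $i=1$, so that $g_{y,z}(x) = y(x-a)+z$. The shift by $a$ reduces everything to the unit-interval computation. The family $\calF_{g,1}$ then contains every affine function $x\mapsto y(x-a)+z$ with arbitrary slope $y$ and intercept $z$, and the product term $xy$ is what makes it quadratic.

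For the key computation, fix $y>0$ and $z<0$ with $-z/y < b-a$. Then $g_{y,z}<0$ exactly on $[a, a-z/y)$, which gives
\[
p(g_{y,z}<0) = \frac{-z}{y(b-a)} .
\]
The negative part is a triangle of height $-z$ and base $-z/y$, so
\[
\int (g_{y,z})^-\,dp = \frac{z^2}{2y(b-a)} .
\]
Their ratio is
\[
\frac{\int (g_{y,z})^-\,dp}{p(g_{y,z}<0)} = \frac{-z}{2}.
\]

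Now I choose a sequence such as $y_t = 4^t$ and $z_t=-2^t$. Along it, $p(g_{y_t,z_t}<0) = 2^{-t}/(b-a)\to 0$, so the premise $p(g<0)\le C_1$ eventually holds for any $C_1\in(0,1)$. Meanwhile the ratio $2^{t-1}\to\infty$, so the conclusion $\int g^-\,dp \le C_2\, p(g<0)$ eventually fails for any $C_2$. This shows no pair $(C_1,C_2)$ works for this $(g,i)$, and hence $p$ is not well-behaved over $\calG$.

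There is no real obstacle here beyond checking that $g$ is indeed quadratic (degree two through $xy$) and that the parameters keep the zero $a - z_t/y_t$ inside $(a,b)$, which holds for large $t$ since $2^{-t}\to 0$. If ``quadratic'' is read as functions of degree at most two, this $g$ qualifies directly. If $\calG$ is restricted to fixed arity, the same construction works in three variables, which matches the three-variable setting of Section~\ref{subsec:polySoundness}.
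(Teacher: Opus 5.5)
Your proposal is correct and takes essentially the same route as the paper, which uses $g(x_1,x_2,x_3)=x_1x_2+x_3$ with $i=1$ and the same sequence $f_t(x)=4^tx-2^t$ on $[0,1]$, observing that $p(f_t<0)\to 0$ while $\int (f_t)^-\,dp=\tfrac12$. Your shift by $a$ only makes the general interval $[a,b]$ explicit, which the paper dismisses as ``similar''; the shift is not strictly needed, since $yx+(z-ay)$ already lies in $\calF_{g,1}$ for the paper's $g$. Your exact ratio $-z/2\to\infty$ expresses the same divergence in a different form.
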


\begin{proof}
    We assume $p$ is the uniform distribution over $[0,1]$, the argument for general $a$ and $b$ is similar. 
    Let $g(x_1,x_2,x_3) = x_1 x_2+x_3$, then $g \in \calG$. 
    Also let $f_t(x) = 4^tx-2^{t}$ for each $t = 1,2,\ldots$; then we have $f_t(x) = g(x,4^t, -2^t)$ and thus $f_t \in \calF_{g,1}$. 
    Finally, observe we have $p(f_t<0) \rightarrow 0$ as $t \rightarrow \infty$ while $\int (f_t)^- dp = \frac{1}{2}$ for every $t \geq 1$. Thus, there is no $C_1 \in (0,1)$ and $C_2>0$ that satisfies (\ref{eq:wellBehavednessG}) for such $g$ and $i=1$, meaning that $p$ is not well-behaved over $\calG$.
\end{proof}

\subsection{Relatively Well-Behaved Distributions and Their Coverage}

\begin{mydefinition}\label{def:weakWellBehaved}
  Let $\calG$ be a set of functions such that each $g \in \calG$ is of the type $g:\bbR^k \rightarrow \bbR$ for some $k \in \bbN$ ($\calG$ may contain functions with different $k$'s). 
    We say $p \in \calD(\bbR)$ is \emph{relatively well-behaved over $\calG$} if, for each %
    $g:\bbR^k \to \bbR \in \calG$ 
    and $i \in \{1, \ldots, k\}$, there exist $C_1 \in (0,1)$ and $C_2 >0$ such that
    \begin{align}
    \forall \boldsymbol{x}_{-i}\in \bbR^{k-1}. 
    \biggl[
        p(g_{\boldsymbol{x}_{-i}} < 0) \leq C_1 \Rightarrow 
        \int (g_{\boldsymbol{x}_{-i}})^- dp
        \leq
        C_2 \cdot p(g_{\boldsymbol{x}_{-i}} < 0) \cdot \int g_{\boldsymbol{x}_{-i}} \ dp
    \biggr],\label{eq:weakWellBehavedness}
\end{align}
where $g_{\boldsymbol{x}_{-i}}: \bbR \to \bbR$ is a function that is induced by fixing the value of all variables but $x_i$ by $\boldsymbol{x}_{-i} \in \bbR^{k-1}$, i.e., $g_{\boldsymbol{x}_{-i}}(x_i) = g(x_i, \boldsymbol{x}_{-i})$.  
\end{mydefinition}
The only difference between Definition~\ref{def:wellBehavednessG} 
and 
Definition~\ref{def:weakWellBehaved}
is the consequent part of (\ref{eq:wellBehavednessG}) and (\ref{eq:weakWellBehavedness}). 
Also observe (\ref{eq:weakWellBehavedness}) represents the trade-off relationship that we would like to realize: 
If $p(g_{\boldsymbol{x}_{-i}} < 0) \rightarrow 0$ while $\int (g_{\boldsymbol{x}_{-i}})^- dp$ remains a positive constant, then we have $\int g_{\boldsymbol{x}_{-i}} \ dp \rightarrow \infty$. 
Now we have our main result in this section as follows.

\begin{myproposition}\label{prop:WeakWellBehaved}
    Let $\calG$ be the set of all polynomial functions. Then $p\in \calD(\bbR)$ is relatively well-behaved over $\calG$ if $p$ has bounded support.
\end{myproposition}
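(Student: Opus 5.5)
The plan is to forget the multivariate structure and reduce to a uniform statement about univariate polynomials of bounded degree. Fix a polynomial $g:\bbR^k\to\bbR$ and an index $i$, and let $d$ be the degree of $g$ in the variable $x_i$. For every $\myvec{x}_{-i}\in\bbR^{k-1}$, the section $g_{\myvec{x}_{-i}}$ is a univariate polynomial of degree at most $d$. Its coefficients depend on $\myvec{x}_{-i}$, but the degree bound does not. It therefore suffices to find $C_1\in(0,1)$ and $C_2>0$ such that
\[
p(f<0)\le C_1 \implies \int f^-\,dp \le C_2\cdot p(f<0)\cdot \int f\,dp
\]
holds for \emph{every} polynomial $f$ of degree at most $d$. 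Let $K=\mathrm{supp}(p)$. By assumption $K$ is compact, contained in some $[-M,M]$, and nonempty.

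Let $V$ be the $(d+1)$-dimensional space of univariate polynomials of degree at most $d$, and equip it with the seminorm $\|f\|_K=\sup_{x\in K}|f(x)|$. All quantities in the target inequality depend only on $f|_K$ up to $p$-null sets, and they vanish when $\|f\|_K=0$. So we may pass to the finite-dimensional quotient $V/\{f:\|f\|_K=0\}$, on which $\|\cdot\|_K$ is a genuine norm. The inequality is also positively homogeneous in $f$: $p(f<0)$ is scale invariant and both integrals scale linearly. Hence it suffices to treat $f$ on the unit sphere $\{\|f\|_K=1\}$, which is compact.

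The key step is the following claim. There exist $C_1\in(0,1)$ and $c>0$ such that every $f$ with $\|f\|_K=1$ and $p(f<0)\le C_1$ satisfies $\int f\,dp\ge c$. I would prove it by contradiction using compactness. Suppose there are $f_n$ on the sphere with $p(f_n<0)\to 0$ and $\int f_n\,dp\le 1/n$. Pass to a subsequence converging in the norm to some $f$ with $\|f\|_K=1$; this is uniform convergence on $K$.

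For any $\varepsilon>0$, eventually $f_n<-\varepsilon/2$ on the set $K\cap\{f<-\varepsilon\}$. This gives $p(f<-\varepsilon)\le p(f_n<0)\to0$, so $f\ge 0$ holds $p$-a.s. Uniform convergence also gives $\int f\,dp=\lim\int f_n\,dp\le 0$, and combined with $f\ge0$ this forces $f=0$ $p$-a.s. Since $f$ is continuous, the set $\{f\neq0\}$ is open. If it met $K$, it would have positive $p$-mass by the definition of the support. Hence $f\equiv 0$ on $K$, contradicting $\|f\|_K=1$. I expect this compactness step to be the main obstacle. The care needed is in the passage to the quotient norm, so that finitely supported or otherwise degenerate $p$ are also handled.

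Given the claim, the conclusion follows. Take $f$ with $\|f\|_K=1$ and $p(f<0)\le C_1$. Since $f^-\le \|f\|_K=1$ on $K$ and $f^-$ vanishes outside $\{f<0\}$, we get
\[
\int f^-\,dp\le p(f<0)\le \frac1c\, p(f<0)\int f\,dp .
\]
By homogeneity the same bound holds for all $f\in V$ with $p(f<0)\le C_1$, and the case $\|f\|_K=0$ is trivial since both sides vanish. Setting $C_2=1/c$ gives (\ref{eq:weakWellBehavedness}) for this $g$ and $i$. Because $g$ and $i$ were arbitrary, $p$ is relatively well-behaved over the set of all polynomial functions.
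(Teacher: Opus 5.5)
Your proof is correct, but it takes a somewhat different route from the paper's. The paper splits into cases. For finitely supported $p$ the condition is vacuous: take $C_1$ below the smallest atom. For infinitely supported $p$, it notes that $\|\cdot\|_{1,p}$ is a norm on $\Poly_k$ because a nonzero polynomial has finitely many roots. It then invokes equivalence of norms to get $\|f\|_{\infty,I}\le K_0\|f\|_{1,p}$ (the paper calls this constant $K$; I rename it to avoid a clash with your support $K$). Finally it rearranges $\|f^-\|_{1,p}\le \pneg K_0(\|f^+\|_{1,p}+\|f^-\|_{1,p})$ into $\|f^-\|_{1,p}\le 2K_0\,\pneg\int f\,dp$ for $\pneg\le 1/(4K_0)$.

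You instead quotient the sup-on-support seminorm by its kernel, which removes the case split. You then prove directly, by compactness of the unit sphere, that $\int f\,dp\ge c\,\|f\|_K$ once $p(f<0)\le C_1$. The target inequality follows in one line from $\int f^-\,dp\le p(f<0)\,\|f\|_K$.

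The two arguments are closer than they look. The paper's computation implicitly contains your key claim, since $\int f\,dp=\|f\|_{1,p}-2\|f^-\|_{1,p}\ge (1/K_0-2\pneg)\|f\|_{\infty,I}\ge \frac{1}{2K_0}\|f\|_{\infty,I}$ when $\pneg\le 1/(4K_0)$. Your contradiction argument essentially re-derives the half of norm equivalence that is needed. There, the step ``$f\ge 0$ $p$-a.s., $\int f\,dp=0$, and $f$ continuous imply $f\equiv 0$ on the support'' plays the role of positive definiteness of $\|\cdot\|_{1,p}$.

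The paper's version is shorter given the norm-equivalence theorem, and it expresses both constants through a single $K_0$. Yours is self-contained and treats finitely supported $p$ uniformly. It also uses only continuity and finite-dimensionality of the family of sections, never root counting, so it would apply unchanged to any finite-dimensional space of continuous templates.
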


We present some technical setup before proving Proposition~\ref{prop:WeakWellBehaved}. 
Let $p \in \calD(\bbR)$ be a measure whose support is bounded and infinite: that is, there exists an interval $I$ such that $p(I) = 1$, but no finite set $J$ such that $p(J) = 1$. Fix an interval $I \subseteq \bbR$ such that $p(I) = 1$. 
For $f:\bbR \to \bbR$, we let
\begin{align}
    \|f\|_{\infty, I}:= \sup_{x \in I} |f(x)|, \qquad 
    \|f\|_{1, p}:= \int |f|dp.
\end{align}

\begin{myproposition}\label{prop:oneNorm}
    Let $p \in \calD(\bbR)$ be a distribution whose support is infinite and bounded. Then for any $k \in \bbN$, the function $\|\cdot\|_{1, p}$ is a norm on $\Poly_k$.
\end{myproposition}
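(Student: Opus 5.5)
We check the norm axioms for $\|\cdot\|_{1,p}$ on $\Poly_k$, the real vector space of univariate polynomials of degree at most $k$. Three of the four are routine. Finiteness: every $f \in \Poly_k$ is continuous, $p(I)=1$ with $I$ bounded, and we may take $I$ closed, since replacing $I$ by its closure keeps $p(I)=1$. Hence $\int |f|\,dp \le \|f\|_{\infty,I} < \infty$, so $\|\cdot\|_{1,p}$ is a well-defined map into $\bbR_{\ge 0}$. Absolute homogeneity $\|cf\|_{1,p} = |c|\,\|f\|_{1,p}$ follows from linearity of the integral. The triangle inequality follows by integrating the pointwise inequality $|f+g| \le |f| + |g|$ against $p$.

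The only substantive axiom is definiteness: $\|f\|_{1,p}=0$ must imply $f=0$, and this is where the infinite-support assumption is used. If $\int |f|\,dp = 0$, then $|f|=0$ holds $p$-almost everywhere, so the zero set $Z(f)=\{x \mid f(x)=0\}$ satisfies $p(Z(f))=1$. Suppose $f \neq 0$. Since $f$ has degree at most $k$, $Z(f)$ is a finite set with at most $k$ elements. Then $J := Z(f)$ is a finite set with $p(J)=1$, which contradicts the assumption that the support of $p$ is infinite (in the sense fixed before the statement: no finite $J$ has $p(J)=1$). Therefore $f=0$.

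There is no real obstacle here; the step needing care is definiteness. One must use the paper's definition of "infinite support" (no finite set of full measure) rather than a topological-support notion, and note that measurability of $Z(f)$ is automatic because $Z(f)$ is finite or, when $f=0$, all of $\bbR$. I would also remark, for later use in the proof of Proposition~\ref{prop:WeakWellBehaved}, that $\Poly_k$ is finite-dimensional. Consequently $\|\cdot\|_{1,p}$ is equivalent to $\|\cdot\|_{\infty,I}$, which is itself a norm on $\Poly_k$ because a bounded infinite interval $I$ is not contained in the finite zero set of a nonzero polynomial. Presumably this equivalence of norms is what the subsequent argument exploits.
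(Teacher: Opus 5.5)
Your proof is correct and follows the same route as the paper: definiteness holds because a nonzero $f \in \Poly_k$ has finitely many zeros, so $\|f\|_{1,p}=0$ would give a finite set of full $p$-measure, contradicting infinite support; the other norm axioms are routine. Your explicit check that $\|f\|_{1,p}<\infty$, using the bounded support and continuity of $f$, is a detail the paper leaves implicit.
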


\begin{proof}
    Because a nonzero polynomial has only finitely many zeros, the infinite-support assumption on $p$ implies that $\|\cdot\|_{1, p}$ is positive definite: i.e.,  $\|f\|_{1, p} = 0 \implies f = 0$. Subadditivity and absolute homogeneity are easy to check.
\end{proof}
Two norms $\|\cdot \|_\alpha$ and $\|\cdot \|_\beta$ on a vector space $\calX$ are said to be \emph{equivalent} when there are $C_1, C_2>0$ such that $C_1\|x \|_\alpha \leq \|x \|_\beta\leq C_2 \| x \|_\alpha$ for every $x\in \calX$. The following is a standard result.

\begin{mytheorem}\label{thm:NormUniqueness}
    Any two norms on a finite-dimensional vector space are equivalent. \qed    
\end{mytheorem}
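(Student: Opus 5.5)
The plan is the classical compactness argument. Since norm equivalence is an equivalence relation (reflexive, symmetric after inverting constants, and transitive after multiplying constants), it suffices to show that every norm $\|\cdot\|$ on a finite-dimensional space $\calX$ is equivalent to one fixed reference norm. I will fix a basis $e_1,\ldots,e_m$ of $\calX$. I then transport the Euclidean norm via the coordinate isomorphism $\calX\cong\bbR^m$, setting $\|x\|_2 := (\sum_j c_j^2)^{1/2}$ for $x=\sum_j c_j e_j$. The goal is constants $C_1,C_2>0$ with $C_1\|x\|_2\le \|x\|\le C_2\|x\|_2$ for all $x\in\calX$.

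The upper bound comes first and is routine. By the triangle inequality and homogeneity, $\|x\|\le \sum_j |c_j|\,\|e_j\|$. Cauchy--Schwarz then gives $\|x\|\le C_2\|x\|_2$ with $C_2 := (\sum_j \|e_j\|^2)^{1/2}$. Combining this with the reverse triangle inequality yields $\bigl|\|x\|-\|y\|\bigr|\le \|x-y\|\le C_2\|x-y\|_2$. Hence $x\mapsto\|x\|$ is Lipschitz, and in particular continuous, with respect to the Euclidean topology on the coordinates.

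For the lower bound, consider the Euclidean unit sphere $\{x : \|x\|_2=1\}$. Under the coordinate isomorphism it is a closed and bounded subset of $\bbR^m$, hence compact by Heine--Borel. The continuous function $\|\cdot\|$ therefore attains a minimum $C_1$ on it. Since $\|\cdot\|$ is positive definite and the sphere excludes $0$, this minimum satisfies $C_1>0$. For an arbitrary $x\neq 0$, homogeneity applied to $x/\|x\|_2$ gives $\|x\|\ge C_1\|x\|_2$, and the case $x=0$ is trivial.

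The only step with real content is this lower bound: it is where finite dimensionality enters, through the compactness of the sphere, and where positive definiteness is used. I expect no genuine obstacle there, provided continuity of $\|\cdot\|$ has been established first from the upper bound, since the minimum-attainment argument depends on it.
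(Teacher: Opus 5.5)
Your argument is correct: it is the standard compactness proof (a Cauchy--Schwarz upper bound, Lipschitz continuity, then a positive minimum on the compact Euclidean unit sphere). The paper gives no proof of its own and cites the theorem as standard, so this is the argument it implicitly relies on; the only nit is the trivial case $\dim\calX=0$, where your $C_2$ is $0$ and the sphere is empty, which should be set aside separately since any positive constants work there.
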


Because $\Poly_k$ is a finite-dimensional vector space with a basis $\{1, x,\ldots, x^k\}$, we know that $\|\cdot\|_{\infty, I}$ and $\|\cdot\|_{1, p}$ are equivalent on $\Poly_k$.

\begin{proof}[Proof of Proposition~\ref{prop:WeakWellBehaved}]
    Take any $p\in \calD(\bbR)$ that has bounded support. 
    We observe that, for any (multivariate) polynomial function $g$ with its degree $k$, we have $\cup_i\calF_{g,i} \subseteq \Poly_k$.
    Therefore, to show $p$ is relatively well-behaved over the set of all polynomial functions, it suffices to prove the following: for any $k \in \bbN$, there exists $C_1\in (0,1)$ and $C_2 >0$ such that 
    \begin{align}
    \forall f \in \Poly_k. 
    \biggl[
        p(f < 0) \leq C_1 \Rightarrow 
        \int f^- dp
        \leq
        C_2 \cdot p(f < 0) \cdot \int f \ dp \label{eq:WeakWellBehavedAlt}
    \biggr].
\end{align}
The claim is vacuously true when $p$ has finite support because in such a case, by taking a sufficiently small $C_1$, we have $p(f < 0) \leq C_1 \Rightarrow p(f < 0) =0$ for any $f\in \Poly_k$. In what follows, suppose $p$ has infinite support.

    Now fix $k \in \bbN$ and take $f \in \Poly_k$, and write $\pneg = p(f<0)$ for brevity. Then we have
    \begin{align}
       \|f^-\|_{1, p} = \int_{\sem{f<0}}|f|dp\leq \pneg \cdot \|f\|_{\infty, I}.
    \end{align}
    Also by Proposition~\ref{prop:oneNorm} and Theorem~\ref{thm:NormUniqueness}, 
    there exists $K>0$ such that $\|f\|_{\infty, I} \leq K\cdot \|f\|_{1, p}$. %
    Thus we have 
    $\|f^-\|_{1, p} \leq \pneg K\cdot \|f\|_{1, p} = \pneg K \cdot (\|f^+\|_{1, p}+\|f^-\|_{1, p})$, 
    and hence 
    $(1-2\pneg K)\cdot \|f^-\|_{1, p} \leq \pneg K\cdot( \|f^+\|_{1, p}- \|f^-\|_{1, p}) = \pneg K\cdot\int f dp$. 
    When $\pneg  \leq \frac{1}{4K}$, this further implies 
    \begin{align}
        \|f^-\|_{1, p} \leq \frac{\pneg K}{1-2\pneg K}\cdot\int f dp \leq 2K \cdot \pneg \cdot \int f dp. 
    \end{align}
    Thus we have (\ref{eq:WeakWellBehavedAlt}) for $C_1 = \frac{1}{4K}$ and $C_2 = 2K$: We note that we can take the same $K$ for any $f \in \Poly_k$.
\end{proof}

\section{Lazy Streett Supermartingale and Its Soundness}\label{sec:LazySSMSoundness}
In this section we introduce our novel supermartingale notion, a \emph{lazy Streett supermartingale} (lazy SSM). We then show it is sound if it is polynomial, and the underlying MDP is a program MDP whose variable construct consists of relatively well-behaved distributions. We proceed as follows:
\begin{itemize}
    \item We give the definition of lazy SSM (Definition~\ref{def:LazySSMMap}), and give the theorem statement of its soundness (Theorem~\ref{thm:SoundnessLSSMmap}).
    \item We give two technical pieces to complete the proof. First, we give the soundness of (strongly non-negative) SSM with two generalizations (Theorem~\ref{thm:SoundnessStrongSSMMap}, \ref{thm:soundnessSSM}): 
    we extend the notion for MDPs, and relax the ranking condition to the \emph{antitone ranking condition}~\cite{Kenyon-RobertsO21}. A high-level plan of our soundness proof is to consider a fractional-power transform of a lazy SSM $\eta$, i.e., a function $(\eta^+)^\theta$, and show it is a strongly non-negative SSM.
    \item Second, we give a key inequality that holds for relatively well-behaved distributions (Lemma~\ref{lem:keyIneq}). 
    It lets us translate the ranking/unaffecting condition of $\eta$ to that of $(\eta^+)^\theta$.
    \item Finally, we give a proof of soundness.
\end{itemize}

\subsection{Lazy Streett Supermartingale: Formal Definition and Soundness}
We first recall several notions that are necessary for our definition. 
For an MDP $\calM = (S,A,P,S_I)$, a \emph{Streett pair} $(\Sfin, \Sinf)$ over $\calM$ is a pair of measurable sets $\Sfin, \Sinf \subseteq S$. 
For technical convenience, we WLOG assume $\Sfin \cap \Sinf = \emptyset$ and write $\Sany = S \setminus (\Sfin \cup \Sinf)$. We say an infinite path $s_0s_1\ldots$ of $\calM$ satisfies the Streett condition $(\Sfin, \Sinf)$ when we have
\begin{align}
    ( \overset{\infty}{\forall}t.s_t\not\in\Sfin) \lor ( \overset{\infty}{\exists}t.s_t\in\Sinf).
\end{align}

We introduce properties of MMs relevant to martingale as follows, where we implicitly fix a non-increasing function $c: \bbR_{\geq 0} \to \bbR_{>0}$. For an MM $\eta$ over %
$\calM = (S,A,P,S_I)$ and $s \in S$, we let:
 \begin{align*}
    \mbox{(Ranking condition)} && \mathsf{Rank}_\eta(s) &\equiv \overline\mynext \eta(s) \leq \eta(s)-c(\eta(s)), \\
    \mbox{(Unaffecting condition)} && \mathsf{UnAffect}_\eta(s) &\equiv \overline\mynext \eta(s) \leq \eta(s), \\
    \mbox{(Non-negativity)} && \mathsf{NonNeg}_\eta(s) &\equiv  \eta(s) \geq 0, \\
    \mbox{(Stability at negativity)} && \mathsf{StabNeg}_\eta(s) &\equiv \eta(s) < 0 \Rightarrow 
    \forall a \in A. \ P(s,a)(\eta<0) = 1.
\end{align*}

A few comments on these notions are in order. 
First, the ranking condition we adopted is the so-called \emph{antitone ranking condition} with respect to $c$~\cite{Kenyon-RobertsO21}. 
This is more general than the typical ranking condition, which requires the value of $\eta(s)$ to decrease by a fixed constant in expectation. 
The antitone ranking condition is strictly weaker than the conventional one when $\lim_{x \to \infty} c(x) =0$. %
We found this relaxation necessary for our proof: As far as we tried, we were only able to transform a lazy SSM into a strongly non-negative SSM with the antitone ranking condition. See also Remark~\ref{rem:whyAntitone}.

Second, \emph{stability at negativity} is a weakening of the non-negativity condition proposed in the recent study of weakly non-negative RSMs~\cite{TakisakaZWL24}. 
The condition says that, if the value of $\eta$ is negative at the current state, then it should remain negative after a transition almost surely. 
It is easy to see $\mathsf{NonNeg}_\eta(s)$ implies $\mathsf{StabNeg}_\eta(s)$.

Now we define our core supermartingale notion, a lazy Streett supermartingale, as follows.

\begin{mydefinition}[lazy Streett supermartingale map]\label{def:LazySSMMap}
Let an MDP $\calM = (S,A,P, S_I)$ be given. An MM $\eta: S \to \bbR$ is called a \emph{lazy Streett supermartingale map (lazy SSM map)} for a Streett pair $(\Sfin, \Sinf)$ if it satisfies the following for some invariant $\inv$ and antitone function $c$:
    \begin{itemize}
        \item We have $\mathsf{Rank}_\eta(s)$ and $\mathsf{NonNeg}_\eta(s)$ for each $s \in \inv \cap \Sfin$.%
        \item We have $\mathsf{UnAffect}_\eta(s)$ and  $\mathsf{StabNeg}_\eta(s)$ for each $s \in \inv \cap \Sany$.
    \end{itemize}
\end{mydefinition}

The definition of lazy SSM itself is a canonical combination of two existing notions, \emph{lazy (Lex)RSM} \cite{TakisakaZWL24} and \emph{(strongly non-negative) SSM}~\cite{KuraU26}. 
As we have discussed already, the main technical novelty lies in its soundness proof, which we now state formally: a lazy SSM $\eta$ is sound if 
\begin{inparaenum}[(a)]
    \item $\eta$ is polynomial, 
    \item the underlying $\calM$ is a program MDP, and
    \item the variable construct of $P$ consists of relatively well-behaved distributions.
\end{inparaenum}

\begin{mytheorem}\label{thm:SoundnessLSSMmap}
    Let $\calM=(S,A,P,S_I)$ be a program MDP such that every distribution in the variable construct of $P$ is relatively well-behaved over polynomial functions. %
    If $\calM$ admits a polynomial lazy SSM map for $(\Sfin, \Sinf)$, then  $\calM$ satisfies the Streett condition $(\Sfin, \Sinf)$ almost surely.  
\end{mytheorem}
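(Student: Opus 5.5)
We follow the plan announced at the start of this section: transform the lazy SSM map $\eta$ into $\zeta = (\eta^+)^\theta$ for a suitable $\theta\in(0,1)$, and show that $\zeta$ is a strongly non-negative SSM map with an antitone ranking function $c'$ for the same Streett pair. Soundness then follows from the generalized soundness of strongly non-negative SSMs (Theorems~\ref{thm:SoundnessStrongSSMMap} and~\ref{thm:soundnessSSM}). Non-negativity of $\zeta$ is automatic. It therefore suffices to verify $\mathsf{Rank}_\zeta$ on $\inv\cap\Sfin$ and $\mathsf{UnAffect}_\zeta$ on $\inv\cap\Sany$.

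\textbf{Trivial cases.} On $\Sany$ with $\eta(s)<0$, $\mathsf{StabNeg}_\eta$ gives $\eta<0$ almost surely after the transition. Hence $\mathbb{X}_a\zeta(s)=0=\zeta(s)$ for every $a$.

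\textbf{Main case: $\eta(s)\ge 0$.} We split on the kind of transition taken under action $a$, which the program MDP axioms make available.

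\emph{Dirac transitions.} If $P(s,a)$ is Dirac, the successor is deterministic, so $\zeta$ at the successor equals $(\eta^+)^\theta$ there. Monotonicity of $x\mapsto (x^+)^\theta$ then gives $\mathbb{X}_a\zeta(s)\le \zeta(s)$. For ranking, we need $\eta(s)^\theta-(\eta(s)-c(\eta(s)))^{+\theta}\ge c'(\eta(s)^\theta)$. By concavity this difference is at least about $\theta\,c(\eta(s))\,\eta(s)^{\theta-1}$. This quantity is positive and non-increasing in $\eta(s)$ once we take $c'$ small, which is exactly where the antitone relaxation is needed.

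\emph{Location randomization.} If only the location is randomized, $P_L(s,a)$ is one of finitely many $p_j$ and the variables update deterministically. Jensen's inequality for the concave map $x\mapsto x^\theta$ on the conditional positive part handles this case. Negative branches contribute $0$ to $\zeta$, but they only lower the mean of $\eta$. The real issue is that $\mathbb{X}_a\eta^+$ could exceed $\mathbb{X}_a\eta$ by the shortfall.

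\emph{Variable sampling from $q_j$.} Here $\eta_{\ell'}$ restricted to the sampled coordinate is a polynomial $g_{\myvec{x}_{-i}}$ of bounded degree. Relative well-behavedness gives $C_1,C_2$ with $\int g^-\,dq\le C_2\,\pneg\int g\,dq$ whenever $\pneg\le C_1$. Because $L$, the constructs, and the degrees are all finite, these constants can be chosen uniformly.

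The key inequality (Lemma~\ref{lem:keyIneq}) should combine relative well-behavedness with Jensen's inequality. Write $m=\mathbb{X}_a\eta^+=\mathbb{X}_a\eta+\int g^-$, and let $\pneg$ be the probability of a negative value. Then
\[
\mathbb{X}_a\zeta=(1-\pneg)\,\mathbb{E}[(\eta^+)^\theta\mid\eta\ge0]\le (1-\pneg)^{1-\theta} m^\theta .
\]
We consider two regimes.
\begin{itemize}
\item \emph{Small $\pneg\le C_1$.} Then $m\le(1+C_2\pneg)\,\mathbb{X}_a\eta$. Choosing $\theta$ small enough that $(1-\pneg)^{1-\theta}(1+C_2\pneg)^\theta\le 1$ for all $\pneg\in[0,C_1]$ gives $\mathbb{X}_a\zeta\le(\mathbb{X}_a\eta)^\theta$. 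This works since the derivative at $\pneg=0$ is $-(1-\theta)+\theta C_2<0$ for small $\theta$. From this bound, $\mathsf{UnAffect}$ and $\mathsf{Rank}$ transfer as in the Dirac case.
\item \emph{Large $\pneg>C_1$.} The factor $(1-\pneg)^{1-\theta}\le(1-C_1)^{1-\theta}$ is a uniform strict contraction. But $m$ is not controlled by $\mathbb{X}_a\eta$ here, since the negative part can be large. We would instead use the norm-equivalence argument from the proof of Proposition~\ref{prop:WeakWellBehaved} on $\Poly_k$, with $\|f\|_{\infty,I}\le K\|f\|_{1,p}$, to bound $\int g^-$ by a multiple of $\|g\|_{1,p}$. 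It is not clear this alone suffices. If it does not, we would exploit $\mathsf{StabNeg}$ and the fact that a run hitting such states has a uniformly positive chance of entering $\eta<0$ forever. Those states can then be treated as absorbing for the Streett condition, since staying in $\Sany$ or $\Sinf$ forever is fine.
\end{itemize}

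\textbf{Expected main obstacle.} We expect the large-$\pneg$ regime, together with making $\theta$ and $c'$ uniform across all locations and constructs, to be the crux. We would first try to restrict the claim to states with $\pneg\le C_1$. The remaining states would be handled by a Borel--Cantelli style argument: infinitely many visits to them force almost-sure absorption into the negative region of $\Sany\cup\Sinf$. That absorption is compatible with the Streett condition, because $\mathsf{Rank}$ and $\mathsf{NonNeg}$ on $\Sfin$ forbid negative values there.
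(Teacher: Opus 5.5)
Your ingredients match the paper's: the transform $\zeta=(\eta^+)^\theta$, the bound $\mathbb{X}_a\zeta\le(1-\pneg)^{1-\theta}m^\theta\le(\mathbb{X}_a\eta)^\theta$ for small $\pneg$, and the need for an antitone $c'$. The gap is the step you yourself call the crux, which stays open, and your headline claim is false: $\zeta$ is in general \emph{not} an SSM map for the same pair $(\Sfin,\Sinf)$. Take $s\in\inv\cap\Sany$ with $\eta(s)=0$ and a location-randomized action leading with probability $\frac12$ each to $\eta$-values $100$ and $-1000$. This satisfies $\mathsf{UnAffect}_\eta$, yet $\mathbb{X}_a\zeta(s)=\frac12\cdot 100^\theta>\zeta(s)=0$. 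Likewise, a Dirac step from $s\in\Sfin$ with $\eta(s)=0$ to a state with $\eta=-1$ satisfies $\mathsf{Rank}_\eta$ when $c(0)\le 1$, but $\mathsf{Rank}_\zeta$ would need $0\le -c'(0)$. Your concavity bound $\theta c(y)y^{\theta-1}$ holds only when $y\ge c(y)$. So neither Dirac steps into negative values nor location-randomized steps with $\pneg>0$ (where you flag the shortfall but do not resolve it) can be handled by any state-based map. You also cannot simply ``restrict to states with $\pneg\le C_1$''. First, $\pneg$ depends on the chosen action while $\overline{\mathbb{X}}$ quantifies over all actions. Second, the threshold $C_1$ would still admit location-randomized steps with $0<\pneg\le C_1$.

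The missing device is to pass to the process level with a \emph{modified} Streett pair. Fix $s_I$ and a deterministic scheduler $\sigma$, and set $p_t=\mathbb{X}_\sigma\mathbf{1}_{\sem{\eta<0}}(s_0\cdots s_t)$. Then apply Theorem~\ref{thm:soundnessSSM} to $Y_t=(\eta^+(s_t))^\theta$ with $\Ofin=\sem{s_t\in\Sfin\land p_t\le q}$ and $\Oinf=\sem{s_t\in\Sinf\lor p_t>q}$. The supermartingale inequalities are now needed only where $p_t\le q$. Choose $q$ below the $P$ of Lemma~\ref{lem:keyIneq} and below every positive mass a Dirac or location-randomized step can put on $\sem{\eta<0}$; there are finitely many such values, since $L$ and the location constructs are finite. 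Then in those cases $p_t=0$ and plain Jensen suffices. In every case $p_t\le q$ gives $\int\eta\,d\mu\ge 0$, so with $\mathsf{Rank}_\eta$ also $\eta(s_t)\ge c(\eta(s_t))$ on $\Ofin$; the regime where your Dirac bound breaks never arises.

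Separately, $\overset{\infty}{\exists}t.\,p_t>q$ implies the Streett condition almost surely. By the conditional Borel--Cantelli lemma, $\eta$ is then negative infinitely often. A run violating the condition is eventually confined to $\Sfin\cup\Sany$, where $\mathsf{NonNeg}$ on $\Sfin$ and $\mathsf{StabNeg}$ on $\Sany$ make negativity permanent and so forbid further visits to $\Sfin$. Your Borel--Cantelli fallback is this idea, but negativity is not absorbing in general, since a visit to $\Sinf$ may reset it; the dichotomy above is what actually holds. Without the modified pair, the argument does not close.
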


\subsection{Soundness of Strongly Non-Negative Streett Supermartingale}
As noted above, we give a few necessary technical pieces before proving Theorem~\ref{thm:SoundnessLSSMmap}. 
Here we prove soundness of a \emph{(strongly non-negative) Streett supermartingale}~\cite{KuraU26} with two updates: 
adaptation to non-determinism, and relaxation of the ranking condition to the antitone one.

\begin{mydefinition}[Streett supermartingale map]\label{def:StrongSSMMap}
   Let an MDP $\calM = (S,A,P,S_I)$ be given. An MM $\eta: S \to \bbR$ is called a \emph{(strongly non-negative) Streett supermartingale map (SSM map)} for a Streett pair $(\Sfin, \Sinf)$ if it satisfies the following for some invariant $\inv$ and antitone function $c$:
    \begin{itemize}
        \item We have $\mathsf{Rank}_\eta(s)$ for each $s \in \inv \cap \Sfin$.%
        \item We have $\mathsf{UnAffect}_\eta(s)$ for each $s \in \inv \cap \Sany$.
        \item  We have $\mathsf{NonNeg}_\eta(s)$ for each $s \in \inv$. 
    \end{itemize}
\end{mydefinition}

The soundness of SSM maps is stated as follows. We note that the result holds over MDPs in general, not only over program MDPs.

\begin{mytheorem}\label{thm:SoundnessStrongSSMMap}
    Suppose an MDP $\calM$ admits a strongly non-negative SSM map $\eta$ for $(\Sfin, \Sinf)$. Then under any scheduler, the run of $\calM$ satisfies the Streett condition $(\Sfin, \Sinf)$ almost surely.
\end{mytheorem}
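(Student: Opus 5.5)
Fix an initial state $s_I\in S_I$ and a scheduler $\sigma$; by the remark after the definition of almost-sure satisfaction it suffices to treat deterministic $\sigma$, though the argument below works for any $\sigma$. Let $\seqOf{X_t}$ be the process induced from $\eta$ under $s_I$ and $\sigma$. Since $\inv$ is an invariant containing $S_I$, every run lies in $\inv$ almost surely. Hence $X_t\ge 0$ for all $t$ a.s., and each $\bbE[X_{t+1}]$ exists. Proposition~\ref{prop:condExpConventional} then identifies $\bbE[X_{t+1}\mid\calF_t]$ with $\bbX_\sigma\eta(s_0\ldots s_t)$. The unaffecting and ranking conditions give $\bbE[X_{t+1}\mid\calF_t]\le X_t$ whenever $s_t\in\Sfin\cup\Sany$. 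In addition, $\bbE[X_{t+1}\mid\calF_t]\le X_t-c(X_t)$ on $\{s_t\in\Sfin\}$.

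The goal is to show $\bbP(\Omega_{\mathrm{bad}})=0$. Here $\Omega_{\mathrm{bad}}$ is the event that the run visits $\Sfin$ infinitely often and $\Sinf$ only finitely often. Write $\Omega_{\mathrm{bad}}=\bigcup_{N}\Omega_N$, where $\Omega_N$ is the event that the run visits $\Sfin$ infinitely often and $s_t\notin\Sinf$ for all $t\ge N$. It suffices to show $\bbP(\Omega_N)=0$ for each $N$. Let $\tau_N=\inf\{t\ge N\mid s_t\in\Sinf\}$ and consider the stopped process $Y_t=X_{\min(t,\tau_N)}$ for $t\ge N$. This is a non-negative supermartingale, because before $\tau_N$ the current state is in $\Sfin\cup\Sany$. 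It also satisfies $\bbE[Y_{N}]<\infty$ on each event $\{X_N\le M\}$. To handle a possibly infinite $\bbE[X_N]$, I would localize by conditioning on $\calF_N$, i.e.\ restrict to $\{X_N\le M\}\in\calF_N$ and then let $M\to\infty$. By the supermartingale convergence theorem, $Y_t$ converges a.s. to a finite limit $Y_\infty$. On $\Omega_N$ we have $\tau_N=\infty$, so $X_t\to Y_\infty<\infty$ along $\Omega_N$.

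It remains to show that the infinitely many $\Sfin$-visits force a contradiction. Sum the expected decrements: by telescoping and the antitone ranking bound,
\[
\bbE\Bigl[\sum_{t=N}^{\infty}\mychar_{\{t<\tau_N,\ s_t\in\Sfin\}}\,c(X_t)\Bigr]\le \bbE[Y_N]<\infty
\]
on the localized event. Hence $\sum_t \mychar_{\{s_t\in\Sfin\}}c(X_t)<\infty$ a.s. on $\Omega_N$. But on $\Omega_N$ the values $X_t$ are bounded by $B=\sup_t X_t<\infty$, since the sequence is convergent. Because $c$ is non-increasing and positive, $c(X_t)\ge c(B)>0$. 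Infinitely many $\Sfin$-visits would therefore make the sum diverge, so $\bbP(\Omega_N\cap\{X_N\le M\})=0$ for every $M$, and hence $\bbP(\Omega_N)=0$.

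The main obstacle is the antitone ranking step. The decrement $c(X_t)$ is not uniformly bounded below, so the usual argument that the supermartingale drops by a constant infinitely often is unavailable. I would resolve this as above, combining almost-sure convergence, which bounds the trajectory and thus bounds $c$ below pathwise, with the expected-sum bound. The measurability of the telescoped sum, and the localization needed to make $\bbE[Y_N]$ finite, are routine points to check.
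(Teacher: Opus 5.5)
Your proposal is correct and follows essentially the same route as the paper. The paper first proves an abstract stochastic-process version (Theorem~\ref{thm:soundnessSSM}) and then applies it to the induced process; that abstract proof localizes to $\{X_k\le N\}$, stops at the first visit to $\Sinf$, uses Doob's convergence to obtain a finite limit and hence an eventual positive lower bound on $c$, and contradicts the summed expected-decrement bound via Tonelli. This is exactly your argument done inline, and your use of $c(\sup_t X_t)$ instead of the paper's $c(Y_\infty+\varepsilon)$ is an inessential simplification.
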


\begin{myremark}\label{rem:positiveRec}
    It has been shown in~\cite{KuraU26} that, under their definition, an SSM map witnesses \emph{positive almost-sure recurrence}, which is a stronger property than almost-sure satisfaction of Streett conditions. 
    We note that, under our definition (Definition~\ref{def:StrongSSMMap}), an SSM map does \emph{not} witness positive almost-sure recurrence in general, because it adopts the antitone ranking condition. 
    A counterexample is that it can witness almost-sure termination of the symmetric random walk~\cite{Kenyon-RobertsO21}, which is a famous example that is almost surely recurrent but not positively so. \demo
\end{myremark}

Our proof of Theorem~\ref{thm:SoundnessStrongSSMMap} 
follows a standard analysis procedure for \emph{induced stochastic processes}: 
We define a corresponding martingale notion over stochastic processes (Definition~\ref{def:SSM}), show its soundness (Theorem~\ref{thm:soundnessSSM}), and apply it to stochastic processes induced from an SSM map to show its soundness.
\begin{mydefinition}[Streett condition over a measurable space] 
Let $(\Omega,\calF)$ be a measurable space, and $\seqOf{\calF_t}$ be its filtration. A \emph{Streett pair} over $(\Omega,\calF)$ is a tuple $\bigl( \seqOf{\Ofin},\seqOf{\Oinf} \bigr)$ such that $\Ofin, \Oinf \in \calF_t$  for each $t\in \bbN$. For convenience, we assume $\Ofin \cap \Oinf = \emptyset$ for each $t\in \bbN$.%
\end{mydefinition}

\begin{mydefinition}[Streett supermartingale]\label{def:SSM}
Let a probability space $(\Omega, \calF, \bbP)$ be given with its filtration $\seqOf{\calF_t}$. 
We call a stochastic process $\seqOf{X_t}$, adapted to $\seqOf{\calF_t}$, a \emph{Streett supermartingale} for a Streett pair $\bigl( \seqOf{\Ofin},\seqOf{\Oinf} \bigr)$ if it satisfies the following $\bbP$-a.s., for each $t\in \bbN$: below, we write $\Oany = \Omega \setminus (\Ofin \cup \Oinf)$, and $c: [0,\infty) \to (0,\infty)$ is a non-increasing function.
\begin{itemize}
    \item $X_t(\omega) \geq 0$ for each $\omega \in \Omega$.
    \item $\bbE[X_{t+1}\mid \calF_t](\omega) \leq X_t(\omega)-c(X_t(\omega))$ for each $\omega \in \Ofin$.%
    \item $\bbE[X_{t+1}\mid \calF_t](\omega) \leq X_t(\omega)$ for each $\omega \in \Oany$.
\end{itemize}
\end{mydefinition}

Below we give a soundness proof of SSMs as stochastic processes. We found that we need a different proof strategy from that for antitone ranking functions~\cite{Kenyon-RobertsO21}, where the use of an antitone function was first proposed. To the best of our understanding, the proof in~\cite{Kenyon-RobertsO21} relies on the fact that the ranking condition is imposed at every non-terminal state, while in SSM we also need to reason about the behavior at $\Oany$, where we only impose the unaffecting condition.
\begin{mytheorem}\label{thm:soundnessSSM}
    Suppose a probability space $(\Omega, \calF, \bbP)$ admits a Streett supermartingale for a Streett pair $\bigl( \seqOf{\Ofin},\seqOf{\Oinf} \bigr)$. 
    Then we have
    \begin{align}
        \bbP\bigl(( \overset{\infty}{\forall}t.\omega\not\in\Ofin) \lor ( \overset{\infty}{\exists}t.\omega\in\Oinf) \bigr)=1.%
    \end{align}
\end{mytheorem}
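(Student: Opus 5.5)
The plan is to localize at the first visit to $\Oinf$ after an arbitrary time $N$. This turns $\seqOf{X_t}$ into a genuine non-negative supermartingale, and the ranking decreases on $\Ofin$ then pile up into a predictable increasing process whose total mass must be finite. Fix $N\in\bbN$ and define the stopping time $\tau_N=\inf\{t\geq N \mid \omega\in\Oinf\}$, which is a stopping time because $\Oinf\in\calF_t$. Consider the stopped process $Y_t = X_{\min(t,\tau_N)}$ for $t\geq N$. For $N\leq t<\tau_N$ the state lies in $\Ofin\cup\Oany$, so the second and third clauses of Definition~\ref{def:SSM} give $\bbE[Y_{t+1}\mid\calF_t]\leq Y_t - c(Y_t)\mychar_{\Ofin}$. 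On $\{t\geq\tau_N\}$ the process is frozen. Hence $\seqOf{Y_t}$ (indexed from $N$) is a non-negative supermartingale, with no constraint needed on $\Oinf$.

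Integrability comes first. Because $X_N$ may fail to be integrable, I would localize further on $B_{N,K}=\{X_N\leq K\}\in\calF_N$ and work with $Y_t\mychar_{B_{N,K}}$, which is again a non-negative supermartingale with expectation at most $K$. By the supermartingale convergence theorem, $Y_t$ converges a.s. to a finite limit. In particular $\sup_{t\geq N}Y_t<\infty$ a.s. on $B_{N,K}$.

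Next comes the Doob decomposition $Y_t = Y_N + M_t - A_t$, with $\seqOf{M_t}$ a martingale and $A_t$ predictable, non-decreasing, and $A_N=0$. The increments satisfy
\[
A_{t+1}-A_t = Y_t-\bbE[Y_{t+1}\mid\calF_t] \geq c(X_t)\cdot\mychar\{\omega\in\Ofin,\ t<\tau_N\}.
\]
Since $Y\geq 0$, taking expectations gives $\bbE[A_t\mychar_{B_{N,K}}]\leq \bbE[Y_N\mychar_{B_{N,K}}]\leq K$ for all $t$. Monotone convergence then yields $A_\infty<\infty$ a.s. on $B_{N,K}$.

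The last step handles the antitone function $c$. Let $E_N$ be the event that $\omega\notin\Oinf$ for all $t\geq N$ and $\omega\in\Ofin$ for infinitely many $t$. On $E_N$ we have $\tau_N=\infty$. Intersect with $\{\sup_{t\geq N}X_t\leq M\}$ for an integer $M$. Since $c$ is non-increasing and positive, every visit to $\Ofin$ after $N$ contributes at least $c(M)>0$ to $A$, so $A_\infty=\infty$ there. Therefore $\bbP(E_N\cap B_{N,K}\cap\{\sup_{t\geq N} X_t\leq M\})=0$. Taking the union over $M$, using the a.s. boundedness from the convergence step, and then over $K$ (since $X_N<\infty$) gives $\bbP(E_N)=0$. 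The complement of the target event is exactly $\bigcup_N E_N$: finitely many visits to $\Oinf$ together with infinitely many visits to $\Ofin$. It is therefore null, which proves the theorem.

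I expect the main obstacle to be precisely this interplay between the antitone $c$ and the unaffecting-only steps. Unlike the argument of~\cite{Kenyon-RobertsO21}, we cannot argue that the value must decrease overall. We must instead get the uniform lower bound on $c$ from a.s. boundedness, which comes from the convergence theorem, rather than from a deterministic bound. We also have to make the expectation bookkeeping rigorous despite possibly non-integrable $X_N$ and the conditional expectations in Definition~\ref{def:SSM} holding only a.s.; this is what the localization on $B_{N,K}$ is for.
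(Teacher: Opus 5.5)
Your proof is correct and follows essentially the same route as the paper. The paper likewise stops the process at the first visit to $\Oinf$ after a fixed time and localizes on a bounded starting value. It then uses supermartingale convergence to bound $c$ away from zero eventually, and contradicts the finiteness of the accumulated expected ranking decrements. Your Doob compensator $A$ is just a repackaging of the paper's summed inequality $\sum_{t} \int_{\sem{T>t}\cap \Ofin[t+k]} c(Y_t)\,d\bbP \leq N$, which the paper combines with Tonelli's theorem.
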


\begin{proof}
Let $\seqOf{X_t}$ be the witnessing Streett SM. Assume the contrary and suppose $\bbP\bigl(( \overset{\infty}{\exists}t.\omega\in\Ofin) \land ( \overset{\infty}{\forall}t.\omega\not\in\Oinf) \bigr)>0$, in which case, there exists $k \in \bbN$ and $N\in \bbR$ such that 
$\bbP\bigl(( \overset{\infty}{\exists}t.\omega\in\Ofin) \land ( \forall t \geq k.\omega\not\in\Oinf) \land X_k \leq N \bigr)>0$. 

Let $\Phi_t(\omega) \equiv \forall t' \leq t. \omega \not\in \Oinf[t'+k]$. 
Then define a stopping time $T$ over the filtration $\seqOf{\calG_t} = \seqOf{\calF_{t+k}}$ by $T(\omega) = 0$ if $X_k(\omega) >N$; or otherwise $T(\omega) = \inf\{t\in \bbN \mid \lnot \Phi_t(\omega)\}$. 
Define a stochastic process $\seqOf{Y_t}$ by 
$Y_t(\omega) = 0$ for $T(\omega)=0$; and for $T(\omega)>0$, let $Y_t(\omega) = X_{t+k}(\omega)$ for each $t\leq T(\omega)$, and $Y_t(\omega) = Y_{T(\omega)}(\omega)$ for each $t> T(\omega)$. 

We have $Y_t(\omega) \geq 0$ for each $\omega$ and $t$; we also have $\bbE[Y_0]\leq N$. We have
\begin{align}
    \int_{\sem{T>t}}Y_{t+1}d\bbP = 
    \int_{\sem{T>t}}\bbE[Y_{t+1} \mid \calG_{t}]d\bbP \leq  \int_{\sem{T>t}} Y_t - \mathbf{1}_{\Ofin[t+k]}\cdot c(Y_t)d\bbP.
\end{align}
We also have $Y_{t+1}(\omega) = Y_t(\omega)$ for $\omega \in \sem{T\leq t}$, so overall we have
\begin{align}
    \bbE[Y_{t+1}] \leq \bbE[Y_t]-\int_{\sem{T>t} \cap \Ofin[t+k]}c(Y_t)d\bbP.
\end{align}
By iteratively applying the inequality above, we have the following at the limit:
\begin{align}
    \sum_{t\in \bbN} \int_{\sem{T>t} \cap \Ofin[t+k]}c(Y_t)d\bbP \leq \bbE[Y_0]\leq N.\label{eq:contradictionSource}
\end{align}
Now observe that $\seqOf{Y_t}$ is a supermartingale (i.e., $\bbE[Y_{t+1} \mid \calG_t](\omega) \leq Y_t(\omega)$ for each $\omega$ and $t$ and $\bbE[Y_0] < \infty$) because it is stopped at the first occurrence of $\omega \in \Oinf[t+k]$, and until then, we have $\omega \in (\Ofin[t+k] \setminus \Oinf[t+k]) \cup \Oany[t+k]$. Also recall $Y_t(\omega) \geq 0$ for each $\omega$ and $t$. Thus by Doob's convergence theorem, there exists a random variable $Y_\infty$ such that $\lim_{t \to \infty} Y_t (\omega)= Y_\infty(\omega) $ ($\bbP$-a.s.) and $\bbE[Y_\infty] < \infty$ (thus in particular, $Y_\infty(\omega)<\infty$, $\bbP$-a.s.). 
This implies that $\inf_{t \in \bbN}c(Y_t(\omega)) > 0$ holds $\bbP$-a.s., %
and thus the LHS of~(\ref{eq:contradictionSource}) must be infinite, which is a contradiction. 
More formally, fix any $\varepsilon >0$, and by Doob's theorem we have
\[
\overset{\infty}{\forall}t. Y_t(\omega) \leq Y_\infty(\omega)+\varepsilon <\infty \qquad \mbox{($\bbP$-a.s.)}
\]
and thus, by the non-increasing property and strict positivity of $c$, we have
\[
\overset{\infty}{\forall}t. c(Y_t(\omega)) \geq c(Y_\infty(\omega)+\varepsilon)>0, \qquad \mbox{($\bbP$-a.s.)}
\]
meaning that $c(Y_t(\omega))$ has a strictly positive lower bound for all sufficiently large $t$. 
Hence we have%
\[
\overset{\infty}{\exists}t.\omega \in \Ofin \implies 
\sum_{t\in\bbN} \mathbf{1}_{\Ofin[t+k]}(\omega)\cdot c(Y_t(\omega)) = \infty \qquad \mbox{($\bbP$-a.s.)}
\]
Let $\Psi(\omega) \equiv T(\omega)=\infty \land \overset{\infty}{\exists}t.\omega \in \Ofin$. 
Recall that we have $\bbP(\Psi) >0$ by assumption, and thus we have $\int_{\sem{\Psi}}\sum_{t\in\bbN} \mathbf{1}_{\Ofin[t+k]}(\omega)\cdot c(Y_t(\omega))d\bbP(\omega) = \infty$. 
By Tonelli's theorem, the integral and sum over non-negative functions are interchangeable, and we have
\begin{align*}
    \int_{\sem{\Psi}}\sum_{t\in\bbN} \mathbf{1}_{\Ofin[t+k]}\cdot c(Y_t)d\bbP &= \sum_{t\in\bbN} \int_{\sem{\Psi}}\mathbf{1}_{\Ofin[t+k]}\cdot c(Y_t)d\bbP \\
    &\leq \sum_{t\in\bbN} \int_{\sem{T>t}}\mathbf{1}_{\Ofin[t+k]}\cdot c(Y_t)d\bbP \\
    &\leq N <\infty,
\end{align*}
which is a contradiction.
\end{proof}

Having Theorem~\ref{thm:soundnessSSM}, the soundness of SSM maps can be proved via a standard argument.
\begin{proof}[Proof of Theorem~\ref{thm:SoundnessStrongSSMMap}]
For any initial state $s_I$ and deterministic scheduler $\sigma$ of $\calM$, let $\seqOf{X_t}$ be the stochastic process induced from $\eta$ under $s_I$ and scheduler $\sigma$: recall the underlying probability space $(\Omega,\calF,\bbP)$ is the dynamics of $\calM$ under $s_I$ and $\sigma$. Then it is easy to check that 
$\seqOf{X_t}$ is an SSM for the Streett pair $\bigl( \seqOf{\Ofin},\seqOf{\Oinf} \bigr)$, where $\Ofin = \sem{s_t \in \Sfin}$ and $\Oinf = \sem{s_t \in \Sinf}$. 
Therefore, by Theorem~\ref{thm:soundnessSSM}, $\omega \in \Omega$ satisfies the Streett condition $\bigl( \seqOf{\Ofin},\seqOf{\Oinf} \bigr)$ $\bbP$-a.s., meaning that a run of $\calM$ under $s_I$ and $\sigma$ satisfies the Streett condition $(\Sfin, \Sinf)$ almost surely.%
\end{proof}

\subsection{A Key Inequality for the Soundness Proof}
Here we show another key technical component for our proof, which sets up a use of relative well-behavedness in the proof.
\begin{mylemma}\label{lem:keyIneq}
    Let $p\in \calD(\bbR)$ be relatively well-behaved over the set of all polynomial functions. 
    Then for any $k\in \bbN$, 
    there exist $\theta \in (0,1)$ and $P \in (0,1)$ such that, for any $f \in \Poly_k$ that satisfies $p(f<0) \leq P$,
    \begin{align}
        \int fdp \geq 0 
        \qquad \mbox{and} \qquad
        \int(f^+)^\theta dp \leq 
        \biggl(
        \int fdp 
        \biggr)^\theta. \label{eq:keyIneqGeometricContraction}
    \end{align}
\end{mylemma}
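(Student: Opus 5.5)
The plan is to first reduce relative well-behavedness over all polynomial functions to a uniform statement on the univariate space $\Poly_k$. Then the claim $\int f\,dp\geq 0$ follows directly, and the fractional-power inequality follows from Jensen's inequality on the set $\sem{f>0}$ together with an elementary exponential estimate. Write $q=p(f<0)$, $A=\int f^+dp$ and $B=\int f^-dp$, so that $\int f\,dp = A-B$.

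\emph{Step 1 (uniform constants on $\Poly_k$).} Consider the polynomial $g(x,a_0,\ldots,a_k)=\sum_{j=0}^k a_j x^j$. Every $f\in\Poly_k$ equals $g_{\myvec{x}_{-1}}$ for a suitable choice of $\myvec{x}_{-1}=(a_0,\ldots,a_k)$, so $\calF_{g,1}=\Poly_k$. Definition~\ref{def:weakWellBehaved} applied to this $g$ and $i=1$ then gives $C_1\in(0,1)$ and $C_2>0$ such that
\[
q\leq C_1 \;\Rightarrow\; B\leq C_2\, q\,(A-B) \qquad \mbox{for all } f\in\Poly_k .
\]
\emph{Step 2 (non-negativity of the mean).} Assume $q\leq C_1$. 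If $q=0$ then $B=0$ and $\int f\,dp=A\geq 0$. If $q>0$ and $A-B<0$, the right-hand side $C_2 q (A-B)$ is strictly negative while $B\geq 0$, a contradiction. Hence $\int f\,dp\geq 0$, which is the first half of (\ref{eq:keyIneqGeometricContraction}).

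\emph{Step 3 (Jensen).} Let $r=p(f>0)$, so $r\leq 1-q$. Since $t\mapsto t^\theta$ is concave, Jensen's inequality for the normalized restriction of $p$ to $\sem{f>0}$ gives
\[
\int (f^+)^\theta dp \;\leq\; r^{1-\theta}A^\theta \;\leq\; (1-q)^{1-\theta}A^\theta .
\]
If $A=0$, then Step 1 gives $B\leq -C_2 q B$, which forces $B=0$, and both sides of the target inequality vanish. So assume $A>0$. By Step 1, $B\leq C_2 q(A-B)\leq C_2 q A$, hence $A-B\geq (1-C_2 q)A$. It therefore suffices to show
\[
(1-q)^{(1-\theta)/\theta}\;\leq\; 1-C_2 q .
\]
\emph{Step 4 (choice of $\theta$ and $P$).} Put $m=(1-\theta)/\theta$. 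We use two elementary facts: $(1-q)^m\leq e^{-mq}$, and $e^{-x}\leq 1-x/2$ for $x\in[0,1]$. Choose $m=2C_2$, i.e.\ $\theta=1/(1+2C_2)\in(0,1)$, and $P=\min\{C_1, 1/(2C_2), 1/2\}\in(0,1)$. Then $q\leq P$ implies $mq\leq 1$, so
\[
(1-q)^m\leq e^{-mq}\leq 1-\tfrac{mq}{2}=1-C_2q\;(\geq 0).
\]
Raising $(1-C_2q)A\leq A-B$ to the power $\theta$ then yields $\int(f^+)^\theta dp\leq (\int f\,dp)^\theta$.

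I expect Step 3 to be the main obstacle. The issue is realizing that the naive Jensen bound $\int (f^+)^\theta dp\leq A^\theta$ is too weak, because it ignores the subtraction of $B$. The fix is to keep the mass factor $(1-q)^{1-\theta}$, which produces a first-order gain of order $q$. Relative well-behavedness ensures that the loss $B/A$ is also only of order $q$, and a small enough $\theta$ makes the gain dominate. The other steps are bookkeeping, apart from the degenerate cases $q=0$ and $A=0$.
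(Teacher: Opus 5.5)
Your proof is correct and follows essentially the same route as the paper: Jensen's inequality keeping the mass factor $(1-q)^{1-\theta}$, relative well-behavedness to bound $\int f^- dp$ by $q$ times the mean, and a choice of small $\theta$ and $P$ that makes the resulting factor at most one. The differences are cosmetic. You give explicit constants $\theta = 1/(1+2C_2)$ and $P=\min\{C_1,1/(2C_2),1/2\}$ via exponential estimates, where the paper argues from the derivative of $D(q)=(1-q)^{1-\theta}(1+C_2q)^\theta$ at $q=0$. Your generic polynomial $g(x,a_0,\ldots,a_k)$ also makes explicit the uniformity of $C_1,C_2$ over $\Poly_k$, which the paper only asserts.
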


At a high level, the use of the lemma is as follows. 
In our soundness proof, for a given lazy SSM $\eta$, we show that the function $\tilde
\eta = (\eta^+)^\theta$ is a strongly non-negative SSM for some $\theta \in (0,1)$.
The inequality (\ref{eq:keyIneqGeometricContraction}) lets us translate the ranking condition of $\eta$ to that of $\tilde\eta$ by a chain of inequalities
\begin{align*}
    \bbX \tilde\eta (s) \leq (\bbX\eta(s))^\theta \leq (\eta(s)-c(\eta(s)))^\theta = \tilde\eta(s)- \hat{c}(\tilde\eta(s)),
\end{align*}
where the function $\hat{c}(x) = x-(x^{\frac{1}{\theta}}-c(x^{\frac{1}{\theta}}))^\theta$ is again an antitone function.

\begin{proof}
    Take any $f \in \Poly_k$, and write $\pneg = p(f <0)$ for brevity. 
    By Jensen's inequality, and by the fact $\int f^+dp = \int_{\sem{f\geq 0}} f^+dp$, we have the following for any %
    $\theta \in (0,1)$:
    \begin{align}
        \int(f^+)^\theta dp%
         \leq (1-\pneg)^{1-\theta}\biggl(
        \int f^+ dp
        \biggr)^\theta.
    \end{align}
    By the definition of relative well-behavedness, we have the following for $\pneg \leq C_1$:
    \begin{align}
        \int(f^+)^\theta dp
         \leq (1-\pneg)^{1-\theta} \cdot (1+C_2\cdot \pneg)^\theta\cdot
         \biggl(
        \int f dp
        \biggr)^\theta.
    \end{align}%
Recall $C_1$ and $C_2$ can be chosen independently of $f$. 
Thus it suffices to show that there exist $\theta \in (0,1)$ and $P \in (0,C_1]$ that satisfy the following for any $q \leq P$:
\begin{align}
    D(q) := (1-q)^{1-\theta} \cdot (1+C_2\cdot q)^\theta \leq 1.
\end{align}
We have $D(0) =1$, and
by elementary calculus we have 
\(\left.\frac{d}{dp}D(p)\right|_{p=0}<0\) when $\theta \in (0,1)$ is sufficiently small (see Appendix A). %
Thus the desired $\theta$ and $P$ exist.
\end{proof}

\subsection{Proof of Theorem~\ref{thm:SoundnessLSSMmap}}
Now we assemble all the pieces into our proof.
\begin{proof}[Proof of Theorem~\ref{thm:SoundnessLSSMmap}]

    Let $\eta$ be a polynomial lazy SSM as in the theorem. 
Take any $s_I \in S_I$ and any deterministic scheduler $\sigma$ of $\calM$, and let $(\Omega,\calF,\bbP)$ be the dynamics of $\calM$ under $s_I$ and $\sigma$. 
Recall that each element $\omega \in \Omega$ is a run of $\calM$. 
For a run $\omega=s_0s_1\ldots$, we use $s_t$ to denote its $t$-th state; when $s_t$ appears as a random variable, it denotes the coordinate map $s_t:\Omega\to S$ that sends a run to its $t$-th state. 
The intended meaning will be clear from context.

We write $p_t(\omega)$ for the probability that $\eta$ becomes negative after a transition from the finite path $s_0\ldots s_t$, that is,
  $p_t(\omega) = \mynext_\sigma\mychar_{\sem{\eta<0}}(s_0\ldots s_t)$.
Let $\seqOf{X_t}$ be the stochastic process induced by $\eta$, i.e., $X_t(\omega)=\eta(s_t)$. 
Finally, let $k$ be the maximum degree of the polynomials in $\{\eta_\ell \mid \ell \in L\}$.
    \newcommand{\Orec}{\Omega_\mathsf{rec}}
    
    Now let $\Orec = \sem{( \overset{\infty}{\forall}t.s_t\not\in\Sfin) \lor ( \overset{\infty}{\exists}t.s_t\in\Sinf)}$, thus we need to show $\bbP(\Orec) = 1$. 
     We first observe the following: for any $q\in (0,1)$, we have
     \begin{align}
         (\overset{\infty}{\exists}t. p_t(\omega) > q)
      \implies
      \omega \in \Orec
      \quad \mbox{($\bbP$-a.s.)} \label{eq:insertCondition}
     \end{align}
     This is derived from two observations. First, we have the following (if we keep tossing a coin, then we eventually observe the tail almost surely):
     \[
     (\overset{\infty}{\exists}t. p_t(\omega) > q)
      \implies
      \overset{\infty}{\exists}t. X_t(\omega) < 0 
      \quad \mbox{($\bbP$-a.s.)} 
     \]
     Second, we have the following:
     \[
     \omega \not \in \Orec
      \implies
      \overset{\infty}{\forall}t. X_t(\omega) \geq 0. 
      \quad \mbox{($\bbP$-a.s.)} 
     \]
     This is because $\omega \not\in \Orec$ implies
     (i) $\overset{\infty}{\forall}t.s_t\not\in\Sinf$, thus $\overset{\infty}{\forall}t.s_t \in \Sfin\cup\Sany$, meaning that the value $X_t(\omega)$ never comes back to non-negative once it gets negative at a  sufficiently large $t$ (by the non-negativity and stability at negativity); and
     (ii) $\overset{\infty}{\exists}t.s_t\in\Sfin$, which implies  $\overset{\infty}{\exists}t. X_t(\omega) \geq 0$ by non-negativity.
     
     Thus, by~(\ref{eq:insertCondition}), we have $\bbP(\Orec) = \bbP(\Orec \cup \sem{\overset{\infty}{\exists}t. p_t > q})$ 
     for any $q$. Hence, to prove $\bbP(\Orec) = 1$, 
     it suffices to show the following holds for some $q \in (0,1)$: 
    \begin{align}%
        \bbP(( \overset{\infty}{\forall}t.s_t\not\in\Sfin) \lor ( \overset{\infty}{\exists}t.s_t\in\Sinf \lor p_t > q)) = 1.\label{eq:whatWeProve}
    \end{align}
    We prove (\ref{eq:whatWeProve}) by showing that, for some $\theta \in (0,1)$ and $q \in (0,1)$, a stochastic process $\seqOf{Y_t}$ defined by $Y_t(\omega) = (\eta^+(s_t))^\theta$ is an SSM with $\Ofin=\sem{s_t\in \Sfin \land p_t\leq q}$, $\Oinf=\sem{s_t\in \Sinf \lor p_t >q}$, and $\Oany = \sem{s_t \in \Sany \land p_t \leq q}$. %
    Below, we take $\theta\in (0,1)$ and $q\in (0,1)$ under which any construct $p$ of $P$ satisfies the following: 
    for any $f\in \Poly_k$ such that $p(f<0) \leq q$, we have the inequalities (\ref{eq:keyIneqGeometricContraction}):
    \begin{align*}
        \int fdp \geq 0 
        \qquad \mbox{and} \qquad
        \int(f^+)^\theta dp \leq 
        \biggl(
        \int fdp 
        \biggr)^\theta. 
    \end{align*}
    Such $\theta$ and $q$ exist by Lemma~\ref{lem:keyIneq}, the relative well-behavedness of the distributions in the variable construct of $P$, and the finiteness of the variable construct. We write $\tilde{\eta} = (\eta^+)^\theta$.

    It is clear from the definition that 
    $\seqOf{Y_t}$ is non-negative. %
    We show  $\seqOf{Y_t}$ satisfies the ranking condition over $\seqOf{\Ofin}$ as follows. 
    By Proposition~\ref{prop:condExpConventional}, 
    we have 
    \[
    \bbE[Y_{t+1} \mid \calF_t](\omega) = \bbX_\sigma \tilde\eta(s_0 \cdots s_t) = \int \tilde\eta \ dP(s_t,\sigma(s_0 \cdots s_t)).
    \]
    Here, $\mu := P(s_t,\sigma(s_0 \cdots s_t))$ is a distribution that samples up to one variable or location according to a construct $p$ of $P$, and takes fixed values for the rest. That is, 
    we have either 
    \begin{enumerate}
        \item $\mu= \delta_{s}$ 
        for some $s\in S$,
        \item $\mu=  p \times\delta_{\myvec{x}}$ for some $\myvec{x} \in \bbR^n$ and a construct $p\in \calD(L)$ of $P$, or
        \item $\mu=  p \times\delta_{(\ell, \myvec{x}_{-i})}$ for some $\ell \in L$,  $\myvec{x}_{-i} \in \bbR^{n-1}$, and a construct $p\in \calD(\bbR)$ of $P$.
    \end{enumerate}
    Below we show, in either case, we have $\int \eta \ d\mu \geq 0$ and  $\int \tilde\eta \ d\mu \leq (\int \eta \ d\mu)^\theta$ when $\mu$ is given by $\omega$ such that  $p_t(\omega) \leq q$. 
    In Case 1 or 2, $p_t(\omega) \leq q$ implies $p_t(\omega)=0$ (because we take $q$ small enough), meaning $\mu(\eta <0) = 0$ and thus $\int \eta \ d\mu \geq 0$. 
    Because $\mu$ has finite support in these cases, let $\{s^{(1)}, \ldots, s^{(k)}\}$ be the support of $\mu$ with their probability masses $q_1, \ldots, q_k \in [0,1]$. Then we have
    \[
    \int \tilde\eta  \ d\mu = \sum_{i=1}^k q_i\cdot(\eta(s^{(i)}))^\theta \leq \biggl(\sum_{i=1}^k q_i\cdot\eta(s^{(i)})\biggr)^\theta = \biggl(\int \eta  \ d\mu\biggr)^\theta,
    \]
    where the inequality comes from Jensen's inequality.

    In Case 3, let $f := \eta_{\ell, \myvec{x}_{-i}}$ (recall $\eta_{\ell, \myvec{x}_{-i}}(x_i) = \eta(\ell,x_i,\myvec{x}_{-i})$), where $\ell$ and $ \myvec{x}_{-i}$ are as designated in the case condition. Then $f \in \Poly_k$ and $\int\tilde\eta \ d\mu = \int (f^+)^\theta \ dp$ and $\int \eta \ d\mu = \int f \ dp$, where $p$ is as designated in the case condition. 
    By $p_t(\omega) \leq q$, we have $\mu(\eta <0) = p(f <0) \leq q$ and thus the inequalities (\ref{eq:keyIneqGeometricContraction}) are applicable. 
    Therefore we have $\int \eta \ d\mu \geq 0$ and
    \[
    \int\tilde\eta \ d\mu = \int (f^+)^\theta \ dp \leq 
    \biggl(\int f \ dp\biggr)^\theta = \biggl(\int \eta \ d\mu\biggr)^\theta,
    \]
    where the inequality comes from (\ref{eq:keyIneqGeometricContraction}). 
    
    So far we have shown that %
    $p_t(\omega) \leq q$ 
    implies $\bbE[Y_{t+1} \mid \calF_t] (\omega) \leq (\int \eta \ d\mu)^\theta$: observe $\omega \in \Ofin$ satisfies $p_t(\omega) \leq q$. 
    We also recall that  $\omega \in \Ofin$ implies that the ranking condition $\int \eta \ d\mu \leq X_t(\omega) - c(X_t(\omega))$ is available. Overall, we have the following for $\omega \in \Ofin$: \begin{align*}
        \bbE[Y_{t+1} \mid \calF_t](\omega) 
        \leq \bigl(X_t(\omega)-c(X_t(\omega))\bigr)^\theta  = Y_t(\omega) - 
        \biggl(Y_t(\omega) -\bigl(X_t(\omega)-c(X_t(\omega))\bigr)^\theta \biggr).
    \end{align*}
    WLOG we assume $c$ is continuous (one can replace $c$ with $c'(x) = \int_x^{x+1} c(y)\,dy$, which is continuous, antitone, and satisfies $0<c'(x)\leq c(x)$ for all $x\geq 0$).
    Then there exists a unique $x_0>0$ such that $c(x_0)=x_0$; hence
    $X_t(\omega)-c(X_t(\omega))\ge 0$ iff $X_t(\omega)\ge x_0$
    (iff $Y_t(\omega)\ge x_0^\theta$).
    Therefore, we have $\bbE[Y_{t+1} \mid \calF_t](\omega) \leq Y_t(\omega)-\hat{c}(Y_t(\omega))$ for $\omega \in \Ofin$, where
    we define $\hat{c}$ as follows:
    \begin{align*}
        \hat{c}(y) =
        \begin{cases}
            y - (y^{\frac{1}{\theta}}-c(y^{\frac{1}{\theta}}))^\theta & \mbox{if $y \geq x_0^\theta$,}\\
            x_0^\theta & \mbox{if $y < x_0^\theta$.}
        \end{cases}
    \end{align*}
    It is straightforward by elementary calculus to show that $\hat{c}$ is antitone. 
    Thus $\seqOf{Y_t}$ satisfies the ranking condition over $\Ofin$ with an antitone function $\hat{c}$.

    Showing the unaffecting condition of $\seqOf{Y_t}$ over $\Oany$ is easy: for $\omega \in \Oany$, we have
    \begin{align*}
        \bbE[Y_{t+1} \mid \calF_t] (\omega) \leq
        \biggl(\int \eta \ d\mu\biggr)^\theta
        \leq \bigl(
        X_t (\omega)
        \bigr)^\theta
        = Y_t (\omega),
    \end{align*}
where we have shown the first inequality holds for $\omega \in \Oany \subseteq \sem{p_t\leq q}$; the second is due to the unaffecting condition of $\seqOf{X_t}$, that is, $\int \eta \ d\mu \leq X_t(\omega)$.
\end{proof}
\begin{myremark}\label{rem:whyAntitone}
    The antitone relaxation of the ranking condition is crucial in our proof. Notice that the antitone ranking condition is equivalent to the conventional, fixed-valued ranking condition when we let $c$ be a constant function. Even in that case, $\hat{c}$ in our proof is not a constant function---observe we have $\hat c(y) \to 0$ as $y \to \infty$. Intuitively, this is because the ranking condition of $\eta$ is ``diluted'' by taking power of $\theta$, and such a dilution is more severe when the value of $\eta(s)$ is larger.
    As a result, it remains open if a polynomial lazy SSM witnesses \emph{positive} recurrence when $c$ is a constant function: See also Remark~\ref{rem:positiveRec}. \demo
\end{myremark}
\begin{myremark}\label{rem:WhyArithmeticUnnecessary}
    The existing proof of lazy lexicographic RSMs~\cite{TakisakaZWL24} requires $\calM$ to be induced from a probabilistic program with linear arithmetic, but we found this requirement unnecessary. The existing proof is concerned about the existence of $\bbE[X_{t}]$, which is not guaranteed when the range of $\eta$ is unbounded from below: When $\bbE[X_{t}]$ does not exist, then $\bbE[X_{t} \mid \calF_{t-1}]$ also does not exist, which causes trouble in describing the ranking condition of $X_t$. %
    However, the argument can instead be made directly over the non-negative process $\seqOf{Y_t}$, eliminating the assumption.
    \demo  
\end{myremark}

\section{From Single-Dimensional SSMs to Multi-Dimensional Supermartingales}\label{sect:fromOneToLex}
In this section, we introduce several multi-dimensional supermartingales constructed from (strongly non-negative) SSMs and lazy SSMs. 
These multi-dimensional supermartingales %
are sequences of (lazy) SSMs that jointly certify the target $\omega$-regular property. 
We consider two joint-reasoning frameworks: \emph{lexicographic SSMs} and \emph{(lexicographic) progress-measure supermartingales} (PMSMs). 
We apply these frameworks to SSMs and lazy SSMs, 
yielding four supermartingale variants: 
\begin{itemize}
    \item \emph{Lexicographic SSMs} and \emph{(lexicographic) PMSMs}, which are 
non-deterministic and antitone adaptations of existing notions; 
    \item \emph{lazy lexicographic SSMs} (Definition~\ref{def:LazyLexSSMMap}) and \emph{lazy (lexicographic) PMSMs}, which are new notions that adopt weak non-negativity.
\end{itemize}

Lexicographic SSMs and their variants have been understood as single entities that follow martingale conditions over lexicographic ordering, rather than collections of one-dimensional SSMs. 
To our knowledge, the latter view is our new observation. 
We formally prove these two views are equivalent, taking non-negative SSM as an example.

\subsection{Lexicographic SSMs and Their Soundness}
We restate the decomposition rule of Streett conditions in Section~\ref{subsec:fromOneDimToLex} as an explicit proposition. 
Below, a \emph{disjoint covering} of a set $\calX$ is a sequence $\calX_1, \ldots, \calX_k$ of disjoint sets such that $\calX \subseteq \cup_i\calX_i$.
\begin{myproposition}\label{prop:StreettDecomposition}
    For a Streett pair $(\Sfin,\Sinf)$, let $S_1,\ldots, S_k \subseteq S \setminus \Sinf$ be a disjoint covering of $\Sfin$. 
    We define a Streett pair $(\Sfin^{(i)}, \Sinf^{(i)})$ as follows, for each $i \in \{1,\ldots,k\}$:
\begin{align}
    \Sfin^{(i)} = S_i \quad , \quad \Sinf^{(i)} = \Sinf \cup (S_1 \cup \cdots \cup S_{i-1}).\label{eq:StreettDecomposition}
\end{align}
Then the conjunction of $(\Sfin^{(1)}, \Sinf^{(1)}),\ldots,(\Sfin^{(k)}, \Sinf^{(k)})$ implies $(\Sfin,\Sinf)$; 
the converse also holds when $\Sfin = \cup_i S_i$.
\end{myproposition}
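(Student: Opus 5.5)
The argument is purely combinatorial and runs pointwise on a fixed infinite path $s_0s_1\ldots$. For the forward direction I will prove, by induction on $i$, that the conjunction of the first $i$ decomposed Streett conditions together with $\overset{\infty}{\forall}t.\,s_t\notin\Sinf$ yields
\[
\overset{\infty}{\forall}t.\, s_t\notin S_1\cup\cdots\cup S_i .
\]
The case $i=k$ then gives the claim, because $\Sfin\subseteq\cup_i S_i$. So fix a path satisfying all $(\Sfin^{(i)},\Sinf^{(i)})$. If it visits $\Sinf$ infinitely often, $(\Sfin,\Sinf)$ holds trivially, so I assume $\overset{\infty}{\forall}t.\,s_t\notin\Sinf$.

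For the inductive step, suppose the path eventually avoids $\Sinf\cup S_1\cup\cdots\cup S_{i-1}$. This set is exactly $\Sinf^{(i)}$, so the Streett condition $(\Sfin^{(i)},\Sinf^{(i)})$ forces $\overset{\infty}{\forall}t.\,s_t\notin S_i$. Taking the maximum of the two thresholds gives the claim for $i$. The base case $i=1$ is the same step with $\Sinf^{(1)}=\Sinf$.

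For the converse, assume $\Sfin=\cup_i S_i$ and that the path satisfies $(\Sfin,\Sinf)$. I will use the monotonicity observation from Section~\ref{subsec:fromOneDimToLex}: $(\Sfin,\Sinf)$ implies $(\Sfin',\Sinf')$ whenever $\Sfin'\subseteq\Sfin$ and $\Sinf\subseteq\Sinf'$. This holds because eventually avoiding $\Sfin$ implies eventually avoiding $\Sfin'$, and visiting $\Sinf$ infinitely often implies visiting $\Sinf'$ infinitely often. Since $S_i\subseteq\Sfin$ and $\Sinf\subseteq\Sinf^{(i)}$, each $(\Sfin^{(i)},\Sinf^{(i)})$ follows.

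One technical point is that the decomposed pairs need not satisfy the standing convention $\Sfin\cap\Sinf=\emptyset$. They actually do: $S_i\cap S_j=\emptyset$ for $j<i$ and $S_i\subseteq S\setminus\Sinf$, so no adjustment is needed.

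I expect no real obstacle. The only care needed is in the induction, where one must take the maximum of finitely many eventual thresholds; this is legitimate because $k$ is finite.
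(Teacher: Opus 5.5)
Your proof is correct and rests on the same idea as the paper's: unwinding the chain of eventual-avoidance implications from Section~\ref{subsec:fromOneDimToLex}, with the converse obtained by the same monotonicity observation. The only difference is organizational. The paper inducts on $k$ by merging $S_{k-1}\cup S_k$ into one piece and invoking the two-piece case, which it derives from that chain plus monotonicity; you instead run the chain directly along $i$, showing that the path eventually avoids $S_1\cup\cdots\cup S_i$, which is slightly more direct.
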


\begin{proof}
The claim is canonically true for $k=1$, and the argument in Section~\ref{subsec:fromOneDimToLex} proves 
the case $k=2$ with $S_2 = \Sfin \setminus S_1$. 
As we have $\Sfin \setminus S_1 \subseteq S_2$ for any $S_1$ and $S_2$ that satisfy the assumption,  $(\Sfin^{(2)}, \Sinf^{(2)})$ implies $(\Sfin \setminus S_1, \Sinf^{(2)})$, thus the case $k=2$ is proved. %
The general case is shown by induction: By induction hypothesis, 
 the conjunction of 
$(\Sfin^{(1)}, \Sinf^{(1)}),\ldots,(\Sfin^{(k-2)}, \Sinf^{(k-2)}),(\Sfin^{(k-1)}\cup\Sfin^{(k)}, \Sinf^{(k-1)})$ implies $(\Sfin,\Sinf)$;
then by the claim for $k=2$, the conjunction of 
$(\Sfin^{(k-1)}, \Sinf^{(k-1)})$ and $(\Sfin^{(k)}, \Sinf^{(k)})$ implies $(\Sfin^{(k-1)}\cup\Sfin^{(k)}, \Sinf^{(k-1)})$. We also have the equivalence claim through a similar analysis.
\end{proof}

\begin{mydefinition}[lazy lexicographic SSM map]\label{def:LazyLexSSMMap}
    Let $\calM=(S,A,P,S_I)$ be an MDP, and let $(\Sfin, \Sinf)$ be a Streett pair over $S$. %
    A sequence $\eta_1, \ldots, \eta_k$ of MMs is a \emph{lazy lexicographic Streett supermartingale map} for $(\Sfin, \Sinf)$ if 
    there is a disjoint covering $S_1,\ldots, S_k \subseteq S \setminus \Sinf$ of $\Sfin$ %
    such that,  for each $i \in \{1, \ldots, k\}$, the MM
    $\eta_i$ is a lazy SSM map (Definition~\ref{def:LazySSMMap}) for the Streett pair $(\Sfin^{(i)}, \Sinf^{(i)})$. Here, $\Sfin^{(i)}$ and $\Sinf^{(i)}$ are as in (\ref{eq:StreettDecomposition}), and each $\eta_i$ may refer to a different invariant $\inv_i$ and antitone function $c_i:\bbR_{\geq 0} \to \bbR_{>0}$.
\end{mydefinition}

Soundness of lazy lexicographic SSM is derived as a simple corollary of the soundness of lazy SSM (Theorem~\ref{thm:SoundnessLSSMmap}), as follows. 

\begin{mytheorem}\label{thm:SoundnessLazyLexSSMMap}

     Let $\calM=(S,A,P,S_I)$ be a program MDP such that every distribution in the variable construct of $P$ is relatively well-behaved over polynomial functions, and let $(\Sfin, \Sinf)$ be a Streett pair over $S$. 
     If $\calM$ admits a polynomial lazy lexicographic SSM map, then $\calM$ satisfies the Streett condition $(\Sfin, \Sinf)$ almost surely. 
\end{mytheorem}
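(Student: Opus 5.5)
The plan is to reduce the statement to $k$ separate applications of Theorem~\ref{thm:SoundnessLSSMmap}, followed by the combinatorial decomposition of Proposition~\ref{prop:StreettDecomposition}.

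Let $\eta_1,\ldots,\eta_k$ be the given polynomial lazy lexicographic SSM map, with disjoint covering $S_1,\ldots,S_k\subseteq S\setminus\Sinf$ of $\Sfin$. Let $(\Sfin^{(i)},\Sinf^{(i)})$ be as in (\ref{eq:StreettDecomposition}). By Definition~\ref{def:LazyLexSSMMap}, each $\eta_i$ is a lazy SSM map for $(\Sfin^{(i)},\Sinf^{(i)})$, with its own invariant $\inv_i$ and antitone $c_i$. Each $\eta_i$ is polynomial, and $\calM$ is the same program MDP whose variable construct is relatively well-behaved. Theorem~\ref{thm:SoundnessLSSMmap} therefore applies to each $i$ separately: $\calM$ satisfies $(\Sfin^{(i)},\Sinf^{(i)})$ almost surely, for every $s_I\in S_I$ and every deterministic scheduler $\sigma$.

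Before this, I will check one technicality: Theorem~\ref{thm:SoundnessLSSMmap} is stated for Streett pairs over $\calM$ under the WLOG convention that the pair is disjoint. This holds here, since $S_i\cap(\Sinf\cup S_1\cup\cdots\cup S_{i-1})=\emptyset$ by disjointness of the covering and $S_i\subseteq S\setminus\Sinf$. Measurability of the $S_i$ is implicit in the definition, since these are Streett pairs.

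Now fix $s_I$ and a deterministic $\sigma$, and let $\Omega_i\subseteq\Pi_\calM$ be the set of runs satisfying $(\Sfin^{(i)},\Sinf^{(i)})$. Each $\Omega_i$ has $\bbP_{s_I}^\sigma$-probability one. Since there are finitely many, $\bbP_{s_I}^\sigma(\bigcap_i\Omega_i)=1$. By Proposition~\ref{prop:StreettDecomposition} (the direction that does not need $\Sfin=\bigcup_i S_i$), every run in $\bigcap_i\Omega_i$ satisfies $(\Sfin,\Sinf)$. Hence the Streett condition $(\Sfin,\Sinf)$ holds with probability one. Since $s_I$ and $\sigma$ were arbitrary, $\calM$ satisfies it almost surely; general schedulers then follow as remarked in the preliminaries.

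I do not expect a genuine obstacle here. All the analytic difficulty is absorbed by Theorem~\ref{thm:SoundnessLSSMmap}. The only points needing care are that the almost-sure statements are combined under a common fixed $(s_I,\sigma)$ before intersecting, and that the different invariants and antitone functions per component cause no interaction. They do not, because each application of Theorem~\ref{thm:SoundnessLSSMmap} is independent.
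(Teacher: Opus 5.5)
Your proposal is correct and matches the paper's proof. Like the paper, you apply Theorem~\ref{thm:SoundnessLSSMmap} to each component $\eta_i$ for the pair $(\Sfin^{(i)},\Sinf^{(i)})$ and then combine the results via Proposition~\ref{prop:StreettDecomposition}. Your extra checks (disjointness of each derived pair, and intersecting finitely many probability-one events under a fixed $(s_I,\sigma)$) only spell out what the paper's two-line argument leaves implicit.
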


\begin{proof}
    By Theorem~\ref{thm:SoundnessLSSMmap}, a run of $\calM$ satisfies the Streett condition $(\Sfin^{(i)}, \Sinf^{(i)})$ almost surely for each $i \in \{1, \ldots, k\}$ under any initial state $s_I$ and a scheduler $\sigma$. Thus by Proposition~\ref{prop:StreettDecomposition}, a run of $\calM$ satisfies $(\Sfin, \Sinf)$ almost surely under any initial state $s_I$ and a scheduler $\sigma$.
\end{proof}

In a similar way, we can define a lexicographic extension of strongly non-negative SSM.  
Soundness is immediately proved from Theorem~\ref{thm:SoundnessStrongSSMMap} and Proposition~\ref{prop:StreettDecomposition}. 
Later we prove the notion is a non-deterministic and antitone adaptation of an existing notion.
\begin{mydefinition}[lexicographic SSM map]\label{def:LexSSMMap}
    Let $\calM=(S,A,P,S_I)$ be an MDP, and let $(\Sfin, \Sinf)$ be a Streett pair over $S$. %
    A sequence $\eta_1, \ldots, \eta_k$ of MMs is a \emph{(strongly non-negative) lexicographic Streett supermartingale map} for $(\Sfin, \Sinf)$ if 
    there is a disjoint covering
$S_1,\ldots,S_k \subseteq S\setminus S_{\mathrm{Inf}}$
of $S_{\mathrm{Fin}}$ %
    such that
    $\eta_i$ is an SSM map (Definition~\ref{def:StrongSSMMap}) for the Streett pair $(\Sfin^{(i)}, \Sinf^{(i)})$ for each $i \in \{1, \ldots, k\}$, where $\Sfin^{(i)}$ and $\Sinf^{(i)}$ are as in (\ref{eq:StreettDecomposition}). Here, each $\eta_i$ may refer to a different invariant $\inv_i$ and %
    $c_i:\bbR_{\geq 0} \to \bbR_{>0}$.
\end{mydefinition}

\begin{mytheorem}\label{thm:SoundnessLexSSMMap}
     Let $\calM=(S,A,P,S_I)$ be an MDP, and let $(\Sfin, \Sinf)$ be a Streett pair over $S$. 
     If $\calM$ admits a strongly non-negative lexicographic SSM map, then $\calM$ satisfies the Streett condition $(\Sfin, \Sinf)$ almost surely. \qed
\end{mytheorem}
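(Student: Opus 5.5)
The plan mirrors the proof of Theorem~\ref{thm:SoundnessLazyLexSSMMap}, with Theorem~\ref{thm:SoundnessStrongSSMMap} replacing Theorem~\ref{thm:SoundnessLSSMmap}. Let $\eta_1,\ldots,\eta_k$ be a strongly non-negative lexicographic SSM map for $(\Sfin,\Sinf)$. By Definition~\ref{def:LexSSMMap}, there is a disjoint covering $S_1,\ldots,S_k\subseteq S\setminus\Sinf$ of $\Sfin$ such that each $\eta_i$ is an SSM map (Definition~\ref{def:StrongSSMMap}) for $(\Sfin^{(i)},\Sinf^{(i)})$ as in (\ref{eq:StreettDecomposition}). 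Each $\eta_i$ comes with its own invariant $\inv_i$ and antitone function $c_i$.

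First, fix an arbitrary initial state $s_I\in S_I$ and a deterministic scheduler $\sigma$. For each $i$, Theorem~\ref{thm:SoundnessStrongSSMMap} applies to $\eta_i$ directly, since it holds for general MDPs. Hence the set $\Omega_i$ of runs satisfying $(\Sfin^{(i)},\Sinf^{(i)})$ has $\bbP_{s_I}^\sigma(\Omega_i)=1$. Because $k$ is finite, the intersection $\bigcap_{i=1}^k\Omega_i$ also has probability one.

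Second, apply Proposition~\ref{prop:StreettDecomposition} pathwise. Every run in $\bigcap_i\Omega_i$ satisfies all the pairs $(\Sfin^{(i)},\Sinf^{(i)})$, and therefore satisfies $(\Sfin,\Sinf)$. So the set of runs satisfying $(\Sfin,\Sinf)$ contains a set of probability one. It is measurable, since $\Sfin$ and $\Sinf$ are measurable and the condition is a Boolean combination of $\limsup$ events. Its probability is therefore one. Since $s_I$ and $\sigma$ were arbitrary, $\calM$ satisfies $(\Sfin,\Sinf)$ almost surely.

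There is no substantive obstacle. The only points to check are bookkeeping. Proposition~\ref{prop:StreettDecomposition} needs disjointness of $\Sfin^{(i)}$ and $\Sinf^{(i)}$, which holds because the $S_j$ are disjoint and avoid $\Sinf$. Also, different invariants and antitone functions for different $i$ cause no trouble, since each $\eta_i$ is handled separately by Theorem~\ref{thm:SoundnessStrongSSMMap}.
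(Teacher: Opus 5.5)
Your proposal is correct and follows the paper's route: apply Theorem~\ref{thm:SoundnessStrongSSMMap} to each $\eta_i$ for its pair $(\Sfin^{(i)},\Sinf^{(i)})$, intersect the finitely many probability-one events, and conclude pathwise by Proposition~\ref{prop:StreettDecomposition}. This mirrors the paper's proof of Theorem~\ref{thm:SoundnessLazyLexSSMMap}, and your bookkeeping on disjointness, measurability, and the separate invariants and antitone functions is appropriate.
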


\subsection{Characterization Based on Lexicographic Ordering}\label{subsec:EquivalenceLex}
We recall the lexicographic ordering $\succeq$ for supermartingales (cf.~\cite{KuraU26}). 
For $\myvec{x}, \myvec{y} \in \bbR^k$, we let
\begin{align*}
    \myvec{x} \succ_{[i]} \myvec{y} &\iff \forall j <i. \ [x_j \geq y_j] \ \land \ x_i \geq y_i+1, \\
    \myvec{x} \succ \myvec{y} &\iff \exists i \in \{1,\ldots, k\}. \ [\myvec{x} \succ_{[i]} \myvec{y} ], \\
    \myvec{x} \geq \myvec{y} &\iff \forall i \in \{1,\ldots, k\}. \ [x_i \geq y_i ], \\
    \myvec{x} \succeq \myvec{y} &\iff \myvec{x} \succ \myvec{y} \ \lor \ \myvec{x} \geq \myvec{y}. 
\end{align*}

Below we define lexicographic SSM in a ``standard'' way, i.e., as a  multi-dimensional MM that follows the SSM condition w.r.t. the ordering $\succeq$. We also note the definition boils down to the definition for Markov chains~\cite{KuraU26} when we substitute ``$\overline{\bbX}$'' with ``$\bbX$''.
\begin{mydefinition}[order-based lexicographic SSM map]\label{def:ConventionalLexSSMMap}
    Let $\calM=(S,A,P,S_I)$ be an MDP, and let $(\Sfin, \Sinf)$ be a Streett pair over $S$. 
    A sequence $\eta_1, \ldots, \eta_k$ of MMs is an \emph{order-based lexicographic SSM map} for $(\Sfin, \Sinf)$ if it satisfies the following for some invariant $\inv$:
    \begin{itemize}
        \item 
        $\boldsymbol{\eta}(s) \succ \overline{\bbX}\boldsymbol{\eta}(s)$ for each $s \in \inv \cap \Sfin$,
        \item $\boldsymbol{\eta}(s) \succeq \overline{\bbX}\boldsymbol{\eta}(s)$ for each $s \in \inv\cap\Sany$,
        \item  $\boldsymbol{\eta}(s)\geq \boldsymbol{0}$ (i.e., $\eta_i(s) \geq 0$ for each $i\in\{1,\ldots, k\}$) for each $s \in \inv$.
        \item The set $\sem{\boldsymbol{\eta} \succ_{[i]} \overline{\bbX}\boldsymbol{\eta}}$ 
        for each $i\in\{1,\ldots, k\}$, and $\sem{\boldsymbol{\eta} \geq  \overline{\bbX}\boldsymbol{\eta}}$, are measurable.
    \end{itemize}
\end{mydefinition}

Now we have the following equivalence theorem. A high-level idea is that $S_i$ in the joint-reasoning definition corresponds to $\sem{\boldsymbol{\eta} \succ_{[i]} \overline{\bbX}\boldsymbol{\eta}} \setminus \Sinf$ in the order-based one.
\begin{mytheorem}\label{thm:EquivalenceLexSSMMapOrdered}
    Let  an MDP $\calM = (S,A,P,S_I)$ and a Streett pair $(\Sfin, \Sinf)$ over $S$ be given. 
    Then a sequence $\boldsymbol{\eta} = (\eta_1, \ldots, \eta_k)$ of MMs is an order-based lexicographic SSM map for $(\Sfin, \Sinf)$ iff it is a lexicographic SSM map for $(\Sfin, \Sinf)$ with the constant antitone function $c_i(x)= 1$.
\end{mytheorem}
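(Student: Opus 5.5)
The plan is to prove the two directions separately by unfolding the definitions. The sets $S_i$ of the joint-reasoning definition will correspond to the sets where the $i$-th component of $\boldsymbol{\eta}$ strictly ranks. Throughout, $\overline{\bbX}\boldsymbol{\eta}$ is read componentwise, $(\overline{\bbX}\eta_1,\ldots,\overline{\bbX}\eta_k)$, so each clause of $\succ_{[i]}$ and $\geq$ is a statement about a single $\eta_j$. With $c_i\equiv 1$, $\mathsf{Rank}_{\eta_i}(s)$ reads $\eta_i(s)\geq \overline{\bbX}\eta_i(s)+1$, and $\mathsf{UnAffect}_{\eta_i}(s)$ reads $\eta_i(s)\geq\overline{\bbX}\eta_i(s)$.

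\emph{(Order-based $\Rightarrow$ joint.)} Let $\boldsymbol{\eta}$ be order-based with invariant $\inv$, and write $R_i=\sem{\boldsymbol{\eta}\succ_{[i]}\overline{\bbX}\boldsymbol{\eta}}$, which is measurable by assumption. I would set $S_i = (R_i\setminus (R_1\cup\cdots\cup R_{i-1}))\setminus \Sinf$ for $i\ge 2$. For $S_1$ I would take the same set and additionally add $(\Sfin\setminus\inv)\setminus(R_2\cup\cdots\cup R_k)$, so that states outside the invariant are still covered; there the SSM conditions are vacuous. These sets are disjoint, measurable, and contained in $S\setminus\Sinf$. They cover $\Sfin$, because every $s\in\inv\cap\Sfin$ satisfies $\boldsymbol{\eta}(s)\succ\overline{\bbX}\boldsymbol{\eta}(s)$, i.e.\ $s\in R_i$ for some $i$. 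We then check that each $\eta_i$ is an SSM map for $(S_i,\Sinf\cup S_{<i})$ with the common invariant $\inv$ and $c_i\equiv1$. Non-negativity is immediate. On $\inv\cap S_i$ we have $s\in R_i$, which gives $\eta_i(s)\ge\overline{\bbX}\eta_i(s)+1$. On $\inv\setminus(\Sinf\cup S_{\le i})$ the state lies in $\Sfin\cup\Sany$, so $\boldsymbol{\eta}(s)\succeq\overline{\bbX}\boldsymbol{\eta}(s)$ holds. If $\boldsymbol{\eta}(s)\geq\overline{\bbX}\boldsymbol{\eta}(s)$ we are done. Otherwise let $j$ be minimal with $s\in R_j$; then $s\in S_j$, so $j>i$, and the clause $\forall j'<j.\,[\eta_{j'}(s)\geq\overline{\bbX}\eta_{j'}(s)]$ gives the unaffecting condition for $\eta_i$.

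\emph{(Joint $\Rightarrow$ order-based.)} Let $S_1,\ldots,S_k$ and $\eta_i$ with invariants $\inv_i$ be given. I would take $\inv=\bigcap_i\inv_i$, which is again an invariant since it contains $S_I$ and is closed under transitions almost surely. Fix $s\in\inv$ with $s\notin\Sinf$. If $s\in S_i$, then for each $j<i$ disjointness gives $s\notin S_j$ and $s\notin \Sinf\cup S_{<j}$, so $s$ lies in the ``any'' region of the $j$-th pair and $\eta_j(s)\ge\overline{\bbX}\eta_j(s)$. Together with $\mathsf{Rank}_{\eta_i}(s)$ this gives $\boldsymbol{\eta}(s)\succ_{[i]}\overline{\bbX}\boldsymbol{\eta}(s)$. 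If $s$ lies in no $S_i$, then it lies in the ``any'' region of every pair, so $\boldsymbol{\eta}(s)\geq\overline{\bbX}\boldsymbol{\eta}(s)$. Since $\Sfin\subseteq\bigcup_i S_i$, this yields $\succ$ on $\inv\cap\Sfin$ and $\succeq$ on $\inv\cap\Sany$. Non-negativity on $\inv$ follows from that of each $\eta_i$ on $\inv_i\supseteq\inv$.

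The main obstacle I expect is the measurability clause of Definition~\ref{def:ConventionalLexSSMMap} in the backward direction. The sets $\sem{\boldsymbol{\eta}\succ_{[i]}\overline{\bbX}\boldsymbol{\eta}}$ involve $\overline{\bbX}$, which need not be measurable, and the joint definition says nothing about them. My first attempt would be to show these sets agree, on $\inv\setminus\Sinf$, with sets built from the measurable data $S_i$, $\Sinf$ and $\inv$. If that does not work, I would make explicit that the equivalence is stated under the same standing measurability convention for $\overline{\bbX}$ used in the footnote on maximal pre-expectation, and record this as the one point where the two definitions differ only in bookkeeping. A second, minor point is the covering of $\Sfin\setminus\inv$ in the forward direction, which the $S_1$ adjustment above handles.
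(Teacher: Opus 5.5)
Your proposal is correct and is essentially the paper's proof. Both directions are proved by unfolding the definitions: $\inv=\bigcap_i\inv_i$ in one direction, and in the other the $S_i$ are carved out of $\sem{\boldsymbol{\eta}\succ_{[i]}\overline{\bbX}\boldsymbol{\eta}}\setminus\Sinf$. The paper assigns each state to the \emph{largest} such index, via $S_i=S_i'\setminus(S_{i+1}'\cup\cdots\cup S_k')$, whereas you assign it to the smallest; both work for the same reason.

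Your extra piece of $S_1$ covering $\Sfin\setminus\inv$ repairs an oversight in the paper's construction, whose $S_i$ need not cover $\Sfin$ outside the invariant. The measurability clause you flag is not checked in the paper's proof either. It genuinely cannot be derived from the joint definition, which says nothing about $\overline{\bbX}\eta_j$ off $\inv$ or on $\Sinf$, so your first attempt will not succeed and that clause has to be taken as an extra assumption for that direction.
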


\begin{proof}%
    Suppose $\boldsymbol{\eta}$ is a lexicographic SSM map with $c_i(x)=1$ and $\eta_i$ uses an invariant $\inv_i$ for each $i$, and let $\inv=\cap_i\inv_i$. 
    Then it can be shown that 
    $\boldsymbol{\eta}(s) \succ_{[i]} \overline{\bbX}\boldsymbol{\eta}(s)$ for each $s \in \inv \cap S_i$ 
    and  $\boldsymbol{\eta}(s) \geq \overline{\bbX}\boldsymbol{\eta}(s)$ for each $s \in \inv \cap (\Sany \setminus (S_1\cup \cdots \cup S_k))$.
    As $\Sfin \subseteq S_1\cup \cdots \cup S_k$, this implies 
    $\boldsymbol{\eta}(s) \succ\overline{\bbX}\boldsymbol{\eta}(s)$ for each $s \in \inv \cap \Sfin$ and 
    $\boldsymbol{\eta}(s) \succeq\overline{\bbX}\boldsymbol{\eta}(s)$ for each $s \in \inv \cap \Sany$. 
    Non-negativity is also easy to check. %
    
    If $\boldsymbol{\eta}$ is an order-based lexicographic SSM map, let $S_i' = \sem{\boldsymbol{\eta} \succ_{[i]} \overline{\bbX}\boldsymbol{\eta}} \setminus \Sinf$, and then let $S_i = S_i' \setminus (S_{i+1}'\cup \cdots \cup S_k')$.
    Then 
    we have $\overline\bbX\eta_i(s)\leq \eta_i(s) -1$ for each $s \in \inv \cap S_i$ and $\overline\bbX\eta_i(s)\leq \eta_i(s)$ for each $s\in \inv \cap \Sany^{(i)}$. 
    This proves that $\boldsymbol{\eta}$ is a lexicographic SSM map.
\end{proof}
\begin{myremark}
    We restricted our attention to the case of $c_i(x) = 1$ to present the idea in a simple form. The equivalence under general antitone $c_i$ can be easily shown by generalizing Theorem~\ref{thm:EquivalenceLexSSMMapOrdered} w.r.t. the following ordering instead of $\succ_{[i]}$:
    \[ \myvec{x} \succ_{[c_i]} \myvec{y} \iff \forall j <i. \ [x_j \geq y_j] \ \land \ x_i \geq y_i+c_i(x_i),\]
    and also modifying the following argument accordingly. \demo 
\end{myremark}

\subsection{Lexicographic Progress-Measure Supermartingales and Their Soundness}
A \emph{parity condition} over a state space of an MDP $\calM = (S,A,P,S_I)$ is defined w.r.t. a \emph{priority function} 
$\pt:S \rightarrow \{0,\ldots, d\}$, where, for technical convenience, we assume that $d \in \bbN$ is odd. 
The function $\pt$ assigns a \emph{priority} $\pt(s)$ to each $s \in S$. 
An execution path $\omega = s_0s_1\ldots$ of $\calM$ satisfies the parity condition w.r.t. $\pt$ if the smallest priority that occurs infinitely often in $\omega$ is even:
\[
\mathsf{Parity}_{\pt}(\omega) \iff \min \{i \mid \overset{\infty}{\exists}t. \pt(s_t) =i\} \mbox{ is even}.
\]
Let $S_i = \sem{\pt = i}$ 
and $S_{\leq i} = S_0 \cup \cdots \cup S_i$ for each $i \in \{0,\ldots,d\}$. 
It is known that $\mathsf{Parity}_{\pt}$ is equivalent to the conjunction of the following Streett conditions~\cite[Lemma 2.6]{KuraU26}:%
\[
(S_{1},S_{\leq 0}) \quad,\quad (S_{3},S_{\leq 2}) 
\quad,\quad
\ldots
\quad,\quad
(S_{2i+1},S_{\leq 2i})
\quad,\quad
\ldots
\quad,\quad
(S_{ d},S_{\leq d-1}). 
\]

Each of these Streett conditions can be verified by a Streett SM, or more generally, by a lexicographic SSM. 
The latter idea is formalized as a \emph{lexicographic progress-measure supermartingale} (lexicographic PMSM)~\cite{KuraU26}, and it has been shown that the notion is indeed understood as a probabilistic / lexicographic extension of \emph{parity progress measure}~\cite[Proposition 5.14]{KuraU26}. 
We give a non-deterministic / antitone adaptation of lexicographic PMSMs based on Definition~\ref{def:StrongSSMMap}, and its lazy variant.

\begin{mydefinition}[(lazy) lexicographic PMSM map]
     Let $\calM=(S,A,P,S_I)$ be an MDP, and let $\pt:S \rightarrow \{0,\ldots, d\}$ be a measurable priority function, where $d$ is odd. 
     Also let $S_i$ and $S_{\leq i}$ be as above.

     A sequence $\eta_1, \ldots, \eta_k$ of MMs is called a \emph{lexicographic progress-measure supermartingale map (lexicographic PMSM map)} for $\mathsf{Parity}_{\pt}$ if it is partitioned into subsequences
     $H_1, H_3, \ldots, H_d$
     such that, for each odd $i \in \{1,\ldots,d\}$, the subsequence
     $H_i = (\eta_{i,1}, \ldots, \eta_{i,k_i})$
     is a lexicographic SSM map (Definition~\ref{def:LexSSMMap}) for the Streett pair $(S_{i},S_{\leq i-1})$. 
     If each $H_i$ is instead a lazy lexicographic SSM map (Definition~\ref{def:LazyLexSSMMap}) for $(S_{i},S_{\leq i-1})$, then we call the sequence a \emph{lazy lexicographic PMSM map}.
     Here, the components may refer to different invariants $\inv_{i,j}$ and antitone functions $c_{i,j}:\bbR_{\geq 0} \to \bbR_{>0}$. 
\end{mydefinition}

We have the following soundness theorems, which are immediate from Theorems~\ref{thm:SoundnessLexSSMMap} and~\ref{thm:SoundnessLazyLexSSMMap}.
\begin{mytheorem}\label{thm:SoundnessLexPMSMMap} 
    Let $\calM=(S,A,P,S_I)$ be an MDP, and let $\pt:S \rightarrow \{0,\ldots, d\}$ be a measurable priority function, where $d$ is odd. 
    If $\calM$ admits a lexicographic PMSM map for $\mathsf{Parity}_{\pt}$, then $\calM$ satisfies the parity condition $\mathsf{Parity}_\pt$ almost surely. \qed
\end{mytheorem}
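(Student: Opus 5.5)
We prove Theorem~\ref{thm:SoundnessLexPMSMMap} by combining the parity-to-Streett decomposition with Theorem~\ref{thm:SoundnessLexSSMMap}. Fix an initial state $s_I$ and a scheduler $\sigma$. The starting point is~\cite[Lemma 2.6]{KuraU26}: $\mathsf{Parity}_\pt$ is equivalent, pathwise, to the conjunction of the Streett conditions $(S_i, S_{\leq i-1})$ for odd $i \in \{1,3,\ldots,d\}$. This is a purely combinatorial statement about infinite state sequences, so it holds for every run, independently of $\calM$ and $\sigma$.

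Next, for each odd $i$, the subsequence $H_i$ is by definition a (strongly non-negative) lexicographic SSM map for the Streett pair $(S_i, S_{\leq i-1})$. Theorem~\ref{thm:SoundnessLexSSMMap} therefore says that $\calM$ satisfies $(S_i, S_{\leq i-1})$ almost surely. By the paper's convention this means almost surely for every initial state and every deterministic scheduler, and hence for every scheduler by~\cite{Feinberg96}.

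Finally, for the fixed $s_I$ and $\sigma$, there are only finitely many odd $i \le d$. The complement of the conjunction of the corresponding events is a finite union of null sets, so the conjunction holds $\bbP_{s_I}^\sigma$-a.s. By the pathwise equivalence, $\mathsf{Parity}_\pt$ then holds almost surely.

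The only point needing care is measurability. The sets $S_i = \sem{\pt = i}$ are measurable because $\pt$ is measurable, so each $(S_i,S_{\leq i-1})$ is a legitimate Streett pair. Also, Theorem~\ref{thm:SoundnessLexSSMMap} implicitly assumes $S_i \cap S_{\leq i-1} = \emptyset$, which is immediate here. There is no substantive obstacle: the whole difficulty is already absorbed into Theorem~\ref{thm:SoundnessLexSSMMap}, itself a corollary of Theorem~\ref{thm:SoundnessStrongSSMMap} and Proposition~\ref{prop:StreettDecomposition}. The lazy variant follows identically, replacing Theorem~\ref{thm:SoundnessLexSSMMap} with Theorem~\ref{thm:SoundnessLazyLexSSMMap} under the program-MDP and polynomial hypotheses.
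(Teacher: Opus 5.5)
Your proposal is correct and follows the paper's route. The paper states this theorem as immediate from Theorem~\ref{thm:SoundnessLexSSMMap} together with the decomposition of $\mathsf{Parity}_\pt$ into the Streett pairs $(S_i,S_{\leq i-1})$ for odd $i$~\cite[Lemma 2.6]{KuraU26}, applied to each $H_i$, with the finitely many almost-sure events then intersected; your measurability and disjointness remarks are the implicit bookkeeping behind that one-line argument.
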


\begin{mytheorem}\label{thm:SoundnessLazyLexPMSMMap}
     Let $\calM=(S,A,P,S_I)$ be a program MDP such that every distribution in the variable construct of $P$ is relatively well-behaved over polynomial functions, and let $\pt:S \rightarrow \{0,\ldots, d\}$ be a measurable priority function, where $d$ is odd. 
     If $\calM$ admits a polynomial lazy lexicographic PMSM map for $\mathsf{Parity}_{\pt}$, then $\calM$ satisfies the parity condition $\mathsf{Parity}_\pt$ almost surely. \qed
\end{mytheorem}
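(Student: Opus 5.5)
The plan is to reduce Theorem~\ref{thm:SoundnessLazyLexPMSMMap} to Theorem~\ref{thm:SoundnessLazyLexSSMMap} applied once per odd priority, and then to recombine the resulting Streett conditions using the parity/Streett decomposition of~\cite[Lemma 2.6]{KuraU26}. No new martingale reasoning is needed. All analytic content (relative well-behavedness, the fractional-power transform, the program-MDP axioms) is already encapsulated in Theorem~\ref{thm:SoundnessLazyLexSSMMap}.

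Concretely, I fix a polynomial lazy lexicographic PMSM map $\eta_1,\ldots,\eta_k$ and its partition into subsequences $H_1,H_3,\ldots,H_d$. For each odd $i$, the subsequence $H_i=(\eta_{i,1},\ldots,\eta_{i,k_i})$ is, by definition, a lazy lexicographic SSM map for the Streett pair $(S_i,S_{\le i-1})$. Each of its components is polynomial, since the whole sequence is. The sets $S_i=\sem{\pt=i}$ and $S_{\le i-1}$ are measurable because $\pt$ is, so these are legitimate Streett pairs. They are also disjoint, matching the standing convention $\Sfin\cap\Sinf=\emptyset$. The hypotheses on $\calM$ are shared, so Theorem~\ref{thm:SoundnessLazyLexSSMMap} yields the following for every odd $i$: $\calM$ satisfies $(S_i,S_{\le i-1})$ almost surely, for every initial state $s_I$ and every deterministic scheduler $\sigma$. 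The differing invariants $\inv_{i,j}$ and antitone functions $c_{i,j}$ cause no trouble, because each application of the theorem is separate.

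Next I fix $s_I$ and deterministic $\sigma$. The set of runs satisfying all $(d+1)/2$ Streett conditions is a finite intersection of probability-one events, so it has $\bbP^\sigma_{s_I}$-probability one. By the cited equivalence of $\mathsf{Parity}_\pt$ with the conjunction $\bigwedge_{i\text{ odd}}(S_i,S_{\le i-1})$, this event is exactly $\sem{\mathsf{Parity}_\pt}$. Since $s_I$ and $\sigma$ were arbitrary, $\calM$ satisfies $\mathsf{Parity}_\pt$ almost surely. Extending from deterministic to arbitrary schedulers follows from the convention stated with the definition of almost-sure satisfaction.

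There is essentially no obstacle. The only point requiring care is bookkeeping: each $H_i$ must meet the exact form of Definition~\ref{def:LazyLexSSMMap}, including a disjoint covering of $S_i$ inside $S\setminus S_{\le i-1}$. This holds by the definition of a lazy lexicographic PMSM map. One should also note that the one-directional use of the Streett decomposition suffices, i.e., that the conjunction implies parity. The strongly non-negative analogue, Theorem~\ref{thm:SoundnessLexPMSMMap}, goes through verbatim with Theorem~\ref{thm:SoundnessLexSSMMap} in place of Theorem~\ref{thm:SoundnessLazyLexSSMMap}.
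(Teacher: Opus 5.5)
Your proposal is correct and follows the paper's route. The paper states this theorem as immediate from Theorem~\ref{thm:SoundnessLazyLexSSMMap}, applied to each $H_i$ with the Streett pair $(S_i, S_{\leq i-1})$, combined with the parity-to-Streett decomposition of \cite[Lemma 2.6]{KuraU26}; your bookkeeping (disjointness, measurability, finite intersection of almost-sure events, sufficiency of deterministic schedulers) simply spells out that one-line argument.
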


\section{Automated Synthesis Algorithm}\label{sec:algorithm}
In this section we present our synthesis algorithm for polynomial lazy lexicographic SSM maps.  
It is a canonical adaptation of the algorithm for lazy lexicographic RSMs to the Streett supermartingale setting. We give an overview here: Pseudocode is given in Appendix B.

As an input, the algorithm receives 
\begin{inparaenum}[(i)]
    \item a program MDP $\calM$ induced from a polynomial probabilistic program with relatively well-behaved sampling distributions (Definition~\ref{def:weakWellBehaved}),
    \item an invariant of $\calM$, and
    \item a Streett condition $(\Sfin, \Sinf)$ to be verified.
\end{inparaenum}
We write $S_{\mathsf{Fin},\ell} = \Sfin \cap (\{\ell\} \times\bbR^n)$, thus $\Sfin = \cup_{\ell \in L} S_{\mathsf{Fin},\ell}$.%

At a high level, the algorithm
incrementally constructs a sequence of lazy SSMs in a best-effort manner: Recall how we constructed a sequence of Streett pairs $(\Sfin^{(1)}, \Sinf^{(1)}),\ldots,(\Sfin^{(k)}, \Sinf^{(k)})$ in Section~\ref{subsec:fromOneDimToLex}.
In the $i$-th iteration, the algorithm receives $(\Srem^{(i)}, \Sinf^{(i)})$ %
as the remaining target condition, where $\Srem^{(i)}$ is the ``remainder'' of $\Sfin$.
Then 
it attempts to synthesize a lazy SSM $\eta_i$ that witnesses 
$(\Sfin^{(i)}, \Sinf^{(i)})$, where $\Sfin^{(i)}$ covers $\Srem^{(i)}$ as much as possible. 
Then it pushes $(\Srem^{(i+1)}, \Sinf^{(i+1)}) = (\Srem^{(i)}\setminus \Sfin^{(i)}, \Sinf^{(i)}\cup \Sfin^{(i)})$ to the next iteration. 
The algorithm continues the loop until either it observes $\Srem^{(i+1)} = \emptyset$, in which case it returns $(\eta_1,\ldots,\eta_i)$ as a witnessing lazy lexicographic SSM; or it fails to find $\eta_i$ with a nonempty $\Sfin^{(i)}$, in which case, it reports a failure.

The synthesis of $\eta_i$ is attempted in two steps: at this point, let $L_i$ be the set of locations that are unranked yet (i.e., $\Srem^{(i)} = \cup_{\ell \in L_i} S_{\mathsf{Fin},\ell}$). 
First, the algorithm tries to find a strongly non-negative SSM that ranks as many regions $S_{\mathsf{Fin},\ell} \subseteq \Srem^{(i)}$ as possible. 
If it finds an SSM that ranks at least one $S_{\mathsf{Fin},\ell} $, then it appends the found SSM as $\eta_i$; 
otherwise, it fixes some $L' \subseteq L_i$ and tries to rank \emph{exactly} those which are in $L'$, by synthesizing either of the following. 
There, we let $\Sfin^{(i)} =\cup_{\ell \in L'} S_{\mathsf{Fin},\ell}$.
\begin{enumerate}[(a)]
    \item A lazy SSM over $(\Sfin^{(i)}, \Sinf^{(i)})$. We call it the ``STAB'' option, abbreviating \emph{stability at negativity}.\label{item:STAB}
    \item A lazy SSM over $(\Sfin^{(i)}, \Sinf^{(i)})$, but we require a stronger condition over $\Sany^{(i)}$, called the \emph{worst-case unaffecting condition}: at $s\in S$, it requires $\forall a\in A. \ [P(s,a)(\eta \leq \eta(s)) =1]$, i.e., the value of $\eta$ does not increase through a transition almost surely. Observe this condition implies \emph{stability at negativity}. We call it the ``WORST'' option.\label{item:WORST}
\end{enumerate}
The option (\ref{item:WORST}) is proposed in the original algorithm~\cite{TakisakaZWL24} to make the algorithm feasible for an LP solver. We consider both options in our implementation, using an SMT solver.

The second step above is iterative: We try different choices of $L'$ selected by a user-defined procedure. 
In our implementation, we use a heuristic that exploits the path relationships between the program locations; see Appendix~B. %
We need such an iterative procedure because the martingale conditions for a lazy SSM at $\Sfin$ and $\Sany$ are incomparable, unlike those for a strongly non-negative SSM.
Due to this peculiarity, it is not known if there is a selection procedure for $L'$ that makes the synthesis algorithm for lazy lexicographic RSMs polynomial-time and complete~\cite{TakisakaZWL24}: The same limitation also applies to our algorithm.

\section{Experiments}
\label{sec:evaluation}
We ran experiments to evaluate how lazy lexicographic SSMs improve the algorithmic
applicability of template-based synthesis algorithms.
To this end, we consider three variants of lexicographic SSM synthesizers, all using the constant antitone function $c(x)\equiv 1$.
\begin{enumerate}[(a)]
    \item \textit{STR}: This algorithm synthesizes a strongly non-negative lexicographic SSM
    (Definition~\ref{def:LexSSMMap}). We use it as our baseline.\label{item:Alg_STR}
    \item \textit{WORST}: This is our lazy lexicographic SSM synthesizer with the
    ``WORST'' option (cf. Section~\ref{sec:algorithm}).\label{item:Alg_MCLC}
    \item \textit{STAB}: This is our lazy lexicographic SSM synthesizer with the
    ``STAB'' option.\label{item:Alg_STAB}
\end{enumerate}
In practice, we run only (\ref{item:Alg_MCLC}) and (\ref{item:Alg_STAB}),
because the results for (\ref{item:Alg_STR}) can be extracted from these runs.
Indeed, (\ref{item:Alg_MCLC}) and (\ref{item:Alg_STAB}) return a strongly
non-negative lexicographic SSM exactly when (\ref{item:Alg_STR}) succeeds.
Thus, we report a success of (\ref{item:Alg_STR}) whenever %
they find a strongly non-negative
lexicographic SSM.

\begin{wrapfigure}{r}{0.34\textwidth}
  \centering
  \scriptsize
  \begin{tabular}{@{}ll@{}}
    \toprule
    Name & Formula \\
    \midrule
    RA & $(F\,\mathit{at}(l_{\mathit{term}})) \wedge (G\, x \ge 0)$ \\
    OV & $F\,(\mathit{at}(l_{\mathit{term}}) \vee x < -64 \vee x > 63)$ \\
    RC & $G\,F\,(x \ge 0)$ \\
    PR & $G\,(x < -5 \Rightarrow F\,(x > 0))$ \\
    \bottomrule
  \end{tabular}

\end{wrapfigure}

We generate our benchmarks by modifying non-probabilistic benchmarks from
the literature, following the experimental setup of~\cite{AgrawalCP18, TakisakaZWL24}.
The original non-probabilistic benchmarks are taken from the verification tasks
in LTL-VerP~\citep{chatterjee2024ltlverp}.
The suite consists of polynomial programs and four LTL specifications with
polynomial atomic predicates, as shown in the table on the right.
We did not include \emph{refutation} tasks whose ground truth is false, since
they are beyond the scope of lexicographic SSMs.

To exercise probabilistic semantics, we evaluate two probabilistic variants.
The \emph{prob-loop} variant inserts a probabilistic execute/skip branch before
each nontrivial transition, modeling loop bodies of the form
\textbf{if prob($p$) then body else skip fi}.
The \emph{prob-assignment} variant perturbs deterministic assignments with
bounded uniform samples, such as
$y \sim \textbf{Unif}[a,b]\ ;\ x \leftarrow f(x)+y$.

We built our implementation by modifying LTL-VerP~\citep{chatterjee2024ltlverp}, with a straightforward adaptation of its invariant-synthesis procedure to probabilistic programs. We use Spot~\citep{spot2016} to translate LTL specifications into Streett automata. Symbolic polynomial obligations are constructed with SymPy~\cite{Meurer2017SymPy} and discharged by a Z3~\cite{deMoura2008Z3} backend after Farkas/Handelman/Putinar-style coefficient encodings~\cite{chatterjee2016termination}. Each benchmark is run under universal nondeterminism. We consider polynomial template degrees $2$ and $3$ for both supermartingale and invariant synthesis, yielding four degree combinations. We set a 20-second timeout for each Z3 call and a 600-second timeout for each degree combination, subject to an overall timeout of 1800 seconds per benchmark.

\subsection{Results}
The statistics of the experimental results are shown in Table~\ref{tab:shared-lazy-backend-summary}. 
Benchmark-wise results are given in Appendix~C.

WORST and STAB substantially outperform STR:
among 170 benchmarks, STR succeeds in 88 cases (51.7\% success), WORST succeeds
in 128 cases (75.2\% success), and STAB succeeds in 122 cases (71.7\% success).
The union of the instances solved by WORST and STAB consists of 134 cases
(78.8\% success).
These numbers show that the benefit of relaxing the non-negativity requirement
is substantial.

Both WORST and STAB have unique successes, although the STAB constraints are,
in theory, no stronger than the WORST constraints.
We found that this happens when \emph{stability at negativity} makes the
constraint problem highly complex, causing Z3 to time out.
Apart from these few exceptions, the computation times of WORST and STAB are
mostly similar.

\begin{table}[t]
\centering
\caption{Summary of the experiment results. W+S shows the number of instances solved by either WORST or STAB. Average time is computed over solved instances using the fastest successful result for each benchmark.
}
\label{tab:shared-lazy-backend-summary}
\scriptsize
\setlength{\tabcolsep}{2.8pt}
\begin{tabular}{lccccccccc}
\toprule
Variant & Property & Benchmark count & \multicolumn{4}{c}{Verified instances} & \multicolumn{2}{c}{Unique successes} & Avg. time\\
\cmidrule(lr){4-7}\cmidrule(lr){8-9}
 & & & STR & WORST & STAB & W+S & WORST & STAB & (s)\\
\midrule
prob-loop & RA & 27 & 13 & 19 & 16 & 20 & 4 & 1 & 199.1\\
prob-loop & OV & 29 & 10 & 21 & 21 & 26 & 5 & 5 & 338.7\\
prob-loop & RC & 18 & 13 & 15 & 15 & 15 & 0 & 0 & 78.5\\
prob-loop & PR & 11 & 2 & 5 & 5 & 5 & 0 & 0 & 202.6\\
\midrule
 & Subtotal & 85 & 38 & 60 & 57 & 66 & 9 & 6 & --\\
\midrule
prob-asgn & RA & 27 & 18 & 22 & 20 & 22 & 2 & 0 & 75.4\\
prob-asgn & OV & 29 & 15 & 26 & 25 & 26 & 1 & 0 & 273.1\\
prob-asgn & RC & 18 & 14 & 15 & 15 & 15 & 0 & 0 & 71.2\\
prob-asgn & PR & 11 & 3 & 5 & 5 & 5 & 0 & 0 & 191.8\\
\midrule
 & Subtotal & 85 & 50 & 68 & 65 & 68 & 3 & 0 & --\\
\midrule
 & Total & 170 & 88 & 128 & 122 & 134 & 12 & 6 & --\\
\bottomrule
\end{tabular}
\end{table}

\section{Related Work}\label{sec:relWorks}
The works most relevant to ours are \cite{TakisakaZWL24} and \cite{KuraU26}.  
Our lazy lexicographic SSM (Definition~\ref{def:LazyLexSSMMap}) combines the core ideas of these two lines of work: 
weak non-negativity from lazy lexicographic RSMs~\cite{TakisakaZWL24}, and 
$\omega$-regular verification via Streett supermartingales~\cite{KuraU26}.
Regarding the latter, we note that the relaxation of the SSM condition in~\cite{KuraU26} from the original formulation of~\cite{AbateGR24} is crucial for the results in Section~\ref{sect:fromOneToLex}.

Weak non-negativity has been well recognized in the context of \emph{lexicographic ranking functions}~\cite{BradleyMS05,Ben-AmramG15}. 
The corresponding idea for supermartingales is still relatively new~\cite{ChatterjeeGNZZ23,TakisakaZWL24}. 
Lexicographic RSMs with weak non-negativity were first proposed in~\cite{ChatterjeeGNZZ23}, where the technical intricacy specific to the probabilistic setting was observed (cf. Figure~\ref{fig:CounterEx}). 
Subsequently, two generalizations were proposed in~\cite{TakisakaZWL24}: 
one is the notion of \emph{$\varepsilon$-fixable lexicographic RSMs}, which is generally sound but difficult to incorporate into synthesis algorithms; 
the other is the notion of \emph{lazy lexicographic RSMs}, whose non-negativity condition is even weaker than that of $\varepsilon$-fixable lexicographic RSMs and is also more amenable to automated synthesis, but whose soundness has so far been established only under a particular setting.
The present paper significantly extends the scope of the soundness theorem for the latter, as summarized in Table~\ref{tab:lazy-lex-comparison}.

Several works study the verification of almost-sure satisfaction of $\omega$-regular properties~\cite{AbateGR24,HenzingerMSZ25,KuraU26}. 
\emph{Streett supermartingales} were introduced in~\cite{AbateGR24}; their martingale constraint was later relaxed in~\cite{KuraU26}, yielding the baseline for our $\omega$-regular verification framework. 
These works assume strong non-negativity. 
Other studies address quantitative verification for $\omega$-regular properties~\cite{AbateGR25,HenzingerMSZ25}, which likewise assume strong non-negativity. 
Relaxing this requirement is an interesting direction, but it is less clear than in the almost-sure setting whether and how such a relaxation is possible: 
since martingales for quantitative verification typically provide bounds on the quantities of interest, weakening non-negativity may change the meaning of the resulting certificate.

\section{Conclusion}\label{sec:conclusion}
We introduced \emph{lazy Streett supermartingales}, weakly non-negative martingale certificates for almost-sure $\omega$-regular verification.
We proved their soundness under less restrictive conditions than the prior lazy ranking supermartingale framework.
We also presented a new proof strategy that lifts the soundness of one-dimensional SSMs to their lexicographic variants.
Our synthesis experiments show that relaxing strong non-negativity substantially improves the success rate of template-based $\omega$-regular verification.

\bibliographystyle{ACM-Reference-Format}
\bibliography{ref}

\newpage
\appendix

\section*{Appendix}
\section{Omitted Details of Section~\ref{sec:LazySSMSoundness}}\label{append:LazySSMSoundness}

\subsection{Omitted Details of Proof of Lemma~\ref{lem:keyIneq}}
A concrete calculation is as follows.
First, 
we have
\[
\log D(q)
=
(1-\theta)\log(1-q)
+
\theta \log(1+C_2\cdot q).
\]
Differentiating with respect to \(q\), we obtain
\[
\frac{d}{dq}\log D(q)
=
-\frac{1-\theta}{1-q}
+
\frac{\theta\cdot C_2}{1+q\cdot C_2}.
\]
In particular, at \(q=0\),
\[
\left.\frac{d}{dq}\log D(q)\right|_{q=0}
=
-(1-\theta)+\theta \cdot C_2
=
\theta(C_2+1)-1.
\]
Thus we have 
\(\left.\frac{d}{dq}\log D(q)\right|_{q=0}<0\) when $\theta < \frac{1}{C_2+1}$, 
in which case, we also have 
\(\left.\frac{d}{dq}D(q)\right|_{q=0}<0\).
\begin{algorithm}[t]%
   \caption{Synthesis algorithm for polynomial lazy lexicographic SSM maps}\label{alg:Alg1}
   \textbf{Input:} A program MDP $\calM=(S,A,P,S_I)$ induced from a polynomial PP with relatively well-behaved sampling distributions, a Streett pair $(\Sfin^*, \Sinf^*)$, an invariant $I$\;
   Initialize $
   U \leftarrow \{\ell\in L \mid S_{\mathsf{Fin},\ell}\neq\emptyset\};%
   \ d \leftarrow 0; \ \Srem \leftarrow \Sfin^*;  \ \Sinf \leftarrow \Sinf^*$; 
   \\
   \While{$U$ is not empty}{
     $d \leftarrow d+1$;$ranked \leftarrow False$\;
     Construct and solve $\mathsf{Synth}_{U}^1$\;\label{line:consLP1}
     \If{there is no solution to $\mathsf{Synth}_{U}^1$}{\label{line:IfNoSol}
        \For{each $\calT \in \mbox{Class}(U)$}{\label{line:chooseTransitionsToRank}
            Construct and solve $\mathsf{Synth}_{U,\calT}^2$\;
            \If{a solution exists for $\mathsf{Synth}_{U,\calT}^2$}{
                $\eta_d \leftarrow $ RF from Solution of $\mathsf{Synth}_{U,\calT}^2$;$ranked \leftarrow True$\;\label{line:solvedsingle}
                $U \leftarrow U \backslash \calT; \; 
                \Sinf \leftarrow \Sinf \cup  ((\calT \times \bbR^n) \cap \Srem); \; 
                \Srem \leftarrow \Srem  \setminus (\calT \times \bbR^n);$ \label{line:delsingle}
                break\;\label{line:break}
            }
        }
        \If{not $ranked$}{\label{line:ifnosingle}
            {\bf Return} FALSE\;\label{line:fail}
            }
    }
     \Else{
         $\eta_d \leftarrow $ ranking function from Solution of $\mathsf{Synth}_{U}^1$\;\label{line:solved}
         $U \leftarrow U \backslash U'; \; 
         \Sinf \leftarrow \Sinf \cup  ((U' \times \bbR^n) \cap \Srem); \; 
         \Srem \leftarrow \Srem  \setminus (U' \times \bbR^n)$\;
         \label{line:del}
     }
   }
   {\bf Return} $(\eta_1,\eta_2,\cdots,\eta_d)$
\end{algorithm}

\begin{algorithm}[t]%
   \caption{Construction of lazy candidate classes}\label{alg:LazyClass}
   \textbf{Input:} an unresolved node set $U$, product graph $\mathcal{P}$, lazy option $b\in\{\mathsf{WORST},\mathsf{STAB}\}$\;\label{line:classInput}
   \textbf{Output:} an ordered list $\mathcal{L}$ of nonempty subsets of $U$\;\label{line:classOutput}
   $\mathcal{L} \leftarrow \{U\}$\;\label{line:classFull}
   $\mathcal{L} \leftarrow \mathcal{L} \cup \mathsf{Src}(U) \cup \mathsf{Succ}(U) \cup \mathsf{Shape}(U) \cup \mathsf{Split}(U)$\;\label{line:classGroups}
   \If{$b=\mathsf{WORST}$}{
        $\mathcal{L} \leftarrow \mathcal{L} \cup \{\{u\}\mid u\in U\}$\;\label{line:classSingleton}
   }
   remove empty and duplicate sets from $\mathcal{L}$\;\label{line:classDedupe}
   order $\mathcal{L}$ by estimated constraint size, breaking ties in favor of larger sets\;\label{line:classOrder}
   {\bf Return} $\mathcal{L}$\;\label{line:classReturn}
\end{algorithm}

\section{Omitted Details of Section~\ref{sec:algorithm}}\label{append:algorithm} 
\subsection{Line-by-Line Explanation of the Synthesis Algorithm}
The pseudocode of our algorithm is given in Algorithm~\ref{alg:Alg1}, %
whose summary is as follows. %
Similar to existing lexicographic RSM synthesis algorithms~\cite{ChatterjeeGNZZ23,AgrawalCP18}, 
it constructs a lexicographic SSM $(\eta_1,\cdots,\eta_d)$ 
in an iterative way. %
At the $d$-th iteration, the algorithm attempts to construct $\eta_d$ that ranks states in $(U \times \bbR^n) \cap \Srem$, i.e., those which are not ranked by $\eta_1, \ldots, \eta_{d-1}$ (Lines~\ref{line:consLP1}--\ref{line:break}). 
It first tries to construct such $\eta_d$ as a strongly non-negative SSM (Definition~\ref{def:StrongSSMMap}). 
This is done by solving the problem $\mathsf{Synth}_U^1$ (Line~\ref{line:consLP1}), 
which looks for a one-dimensional MM $\eta$ that satisfies the following with nonempty $U' \subseteq U$ that is as large as possible: Here, we let $S' = (U'\times \bbR^n) \cap \Srem$.%
\begin{enumerate}[(a)]%
\item for every $s \in S' \cap I$, %
we have $\mathsf{Rank}_\eta(s)$;%
\item for every 
$s \in (S \setminus (S' \cup \Sinf)) \cap I$, %
we have $\mathsf{UnAffect}_\eta(s)$;%
\item for every 
$s \in I$, we have $\mathsf{NonNeg}_\eta(s)$.
\end{enumerate}
The problem $\mathsf{Synth}_U^1$ is obtained by the reduction of conditions (a--c) via Farkas/Handelman/Putinar-style coefficient encodings~\cite{chatterjee2016termination}. 
If the solution $\eta$ of $\mathsf{Synth}_U^1$ is found for nonempty $U'$, then 
the algorithm lets $\eta_d=\eta$, adds it to the output, and updates the ``remainder'' Streett condition accordingly (Lines~\ref{line:solved}--\ref{line:del}); 
otherwise, %
it tries to construct $\eta_d$ under either ``WORST'' or ``STAB'' option (cf. Section~\ref{sec:algorithm}). %
This is done by solving $\mathsf{Synth}_{U,\calT}^2$ for each $\calT\in\mbox{Class}(U) $ (line~\ref{line:chooseTransitionsToRank}). 
The ordered list $\mbox{Class}(U)$ is constructed by Algorithm~\ref{alg:LazyClass}. 
The algorithm is described as follows: $\calT$ is a subset of the unresolved node set $U$. 
For a node $u$, let $\mathsf{loc}(u)$ be its program location, $\mathsf{Post}(u)$ be its one-step successor nodes in the product graph, and $\mathsf{sig}(u)$ be the local signature of the outgoing transitions from $u$ (target locations, labels, polynomial updates, nondeterministic group, and sampling distributions).
Lines~\ref{line:classInput}-\ref{line:classOutput} specify that Algorithm~\ref{alg:LazyClass} takes the current unresolved node set, the product graph, and the chosen lazy option, and returns a finite ordered list of nonempty candidates. 
The four node families used in line~\ref{line:classGroups} are:
\[
\begin{array}{rcl}
\mathsf{Src}(U) &=& \br{\br{u\in U\mid \mathsf{loc}(u)=\ell}\mid \ell \mbox{ is a program location}},\\[0.2em]
\mathsf{Succ}(U) &=& \br{\br{u'\in U\mid u'\in \br{u}\cup \mathsf{Post}(u)}\mid u\in U},\\[0.2em]
\mathsf{Shape}(U) &=& \br{\br{u\in U\mid \mathsf{sig}(u)=\sigma}\mid \sigma \mbox{ is a local transition signature}},\\[0.2em]
\mathsf{Split}(U) &=& \mbox{the non-singleton sets obtained by recursively splitting a fixed ordering of }U.
\end{array}
\]
Here $\mathsf{Src}(U)$ groups nodes at the same control-flow location, $\mathsf{Succ}(U)$ groups nodes that are coupled through one successor step, $\mathsf{Shape}(U)$ groups nodes with the same local probabilistic shape, and $\mathsf{Split}(U)$ provides deterministic smaller fallback candidates.
Line~\ref{line:classFull} of Algorithm~\ref{alg:LazyClass} first inserts the coarsest candidate $U$ itself. 
Line~\ref{line:classGroups} then appends the four structured families above. 
Line~\ref{line:classSingleton} adds singleton candidates for the ``WORST'' option, giving it a final node-by-node fallback when larger candidates are too difficult.
Line~\ref{line:classDedupe} removes empty and duplicate candidates. 
Line~\ref{line:classOrder} orders the remaining list by the estimated size of the generated constraint system, using larger candidates as a tie-breaker so that one successful component can remove more nodes from $U$.
Finally, line~\ref{line:classReturn} returns this ordered list to line~\ref{line:chooseTransitionsToRank} of Algorithm~\ref{alg:Alg1}.
For each candidate $\calT$ in this list, the problem $\mathsf{Synth}_{U,\calT}^2$ looks for $\eta$ that \emph{exactly} ranks states in $(\calT \times \bbR^n)\cap \Srem$, that is, in the ``STAB'' option,
\begin{enumerate}[(a')]%
\item for every 
$s \in S_\calT \cap I$, where $S_\calT =(\calT \times \bbR^n)\cap \Srem$, %
we have $\mathsf{Rank}_\eta(s)$ and $\mathsf{NonNeg}_\eta(s)$;
\item for every %
$s \in (S \setminus (S_\calT \cup \Sinf)) \cap I$,
we have $\mathsf{UnAffect}_\eta(s)$ and  $\mathsf{StabNeg}_\eta(s)$;
\end{enumerate}
or in ``WORST'' option, 
\begin{enumerate}[(a')]%
\item for every %
$s \in S_\calT \cap I$, 
we have $\mathsf{Rank}_\eta(s)$ and $\mathsf{NonNeg}_\eta(s)$;
\item for every %
$s \in (S \setminus (S_\calT \cup \Sinf)) \cap I$, 
we have $\forall a\in A. \ [P(s,a)(\eta \leq \eta(s)) =1]$ (the worst-case unaffecting condition).
\end{enumerate}

For each candidate $\calT$ returned by Algorithm~\ref{alg:LazyClass}, line~\ref{line:chooseTransitionsToRank} keeps the current unresolved set $U$ fixed and constructs the corresponding lazy query. 
If $\mathsf{Synth}_{U,\calT}^2$ has no solution, or if the solver cannot find one within the configured budget, the algorithm discards only this candidate $\calT$ and tries the next candidate in the ordered list; no node is removed from $U$ in this case.
Once $\mathsf{Synth}_{U,\calT}^2$ is solved for some $\calT$, line~\ref{line:solvedsingle} extracts the ranking function $\eta_d$ from the solution and marks the current dimension as successful. 
Line~\ref{line:delsingle} then removes exactly the nodes in $\calT$ from $U$ and moves the corresponding region from $\Srem$ to $\Sinf$. 
Line~\ref{line:break} stops the search over later candidates in the same iteration; the algorithm does not keep searching for a different candidate after the first satisfiable one is found. 
If it fails to solve $\mathsf{Synth}_{U,\calT}^2$ for every $\calT\in\mbox{Class}(U)$, then it concludes a failure and terminates (lines~\ref{line:ifnosingle}--\ref{line:fail}). 
If $\eta_d$ is computed (line~\ref{line:solvedsingle} or~\ref{line:solved}), the algorithm goes to the next iteration after updating $U$; the iteration continues until $U$ is empty. \todo{formal description about soundness/completeness of the algorithm is as follows.}%

\newpage
\section{Detailed Experimental Data}
\label{append:fullExpDetails}

Tables~\ref{tab:detail-loop-backends} and~\ref{tab:detail-assignment-backends} report the per-instance results for both lazy-negativity backends (WORST/STAB). Each backend entry is the best result for that backend over the configured template grid.  We write \Strong{d}{t} for a strongly non-negative certificate of dimension $d$ found in $t$ seconds and \Lazy{d}{t} for a lazy certificate; \Fail{} means that no certificate was found within the configured budget and \NA{} means that the benchmark/property pair was not included in the paper-aligned obligation set (i.e., the instance was a refutation task).

\begingroup
\scriptsize
\setlength{\tabcolsep}{1.4pt}
\renewcommand{\arraystretch}{1.06}

\begin{longtable}{>{\RaggedRight\arraybackslash}p{0.24\textwidth}*{8}{>{\centering\arraybackslash}p{0.085\textwidth}}}
\caption{Per-instance WORST and STAB results for the probabilistic-loop variant.}\label{tab:detail-loop-backends}\\
\toprule
Benchmark & \multicolumn{4}{c}{WORST} & \multicolumn{4}{c}{STAB}\\
\cmidrule(lr){2-5}\cmidrule(lr){6-9}
 & RA & OV & RC & PR & RA & OV & RC & PR\\
\midrule
\endfirsthead
\toprule
Benchmark & \multicolumn{4}{c}{WORST} & \multicolumn{4}{c}{STAB}\\
\cmidrule(lr){2-5}\cmidrule(lr){6-9}
 & RA & OV & RC & PR & RA & OV & RC & PR\\
\midrule
\endhead
\midrule
\multicolumn{9}{r}{\emph{Continued on next page}}\\
\endfoot
\bottomrule
\endlastfoot
\path{bresenham-ll.c.t2} & \Strong{4}{13.17} & \Lazy{4}{446.92} & \NA & \NA & \Strong{4}{13.17} & \Fail & \NA & \NA\\
\path{ChawdharyCookGulwaniSagivYang-ESOP2008-aaron12_true-termination.c.t2} & \NA & \Lazy{2}{432.10} & \NA & \NA & \NA & \Lazy{2}{429.87} & \NA & \NA\\
\path{cohencu-ll.c.t2} & \Strong{3}{5.02} & \Lazy{3}{207.01} & \NA & \NA & \Strong{3}{5.02} & \Fail & \NA & \NA\\
\path{cohendiv-ll.c.t2} & \Lazy{3}{573.38} & \Lazy{5}{638.66} & \Strong{1}{30.05} & \NA & \Fail & \Lazy{5}{639.78} & \Strong{1}{30.05} & \NA\\
\path{ComplInterv.c.t2} & \NA & \Fail & \NA & \NA & \NA & \Lazy{1}{273.76} & \NA & \NA\\
\path{dijkstra-u.c.t2} & \Lazy{12}{933.56} & \Fail & \Lazy{4}{580.26} & \NA & \Fail & \Lazy{1}{595.15} & \Lazy{4}{580.26} & \NA\\
\path{divbin__property.c.t2} & \Fail & \NA & \NA & \NA & \Fail & \NA & \NA & \NA\\
\path{divbin.c.t2} & \NA & \Lazy{8}{919.41} & \NA & \NA & \NA & \Lazy{8}{919.40} & \NA & \NA\\
\path{DoubleNeg.c.t2} & \Strong{2}{3.18} & \Strong{2}{4.39} & \NA & \NA & \Strong{2}{3.18} & \Strong{2}{4.39} & \NA & \NA\\
\path{egcd-ll.c.t2} & \NA & \Fail & \NA & \NA & \NA & \Fail & \NA & \NA\\
\path{egcd2-ll.c.t2} & \Lazy{5}{594.52} & \NA & \NA & \Fail & \Fail & \NA & \NA & \Fail\\
\path{egcd3-ll.c.t2} & \Fail & \NA & \NA & \Fail & \Fail & \NA & \NA & \Fail\\
\path{Factorial__property.c.t2} & \NA & \Fail & \NA & \NA & \NA & \Fail & \NA & \NA\\
\path{Factorial.c.t2} & \NA & \NA & \NA & \Lazy{2}{5.89} & \NA & \NA & \NA & \Lazy{2}{5.88}\\
\path{geo1-ll.c.t2} & \Strong{3}{3.32} & \Fail & \Strong{1}{5.95} & \Strong{2}{6.31} & \Strong{3}{3.32} & \Lazy{1}{550.56} & \Strong{1}{5.95} & \Strong{2}{6.31}\\
\path{geo2-ll.c.t2} & \Strong{3}{4.82} & \Strong{3}{6.98} & \Strong{1}{2.96} & \Lazy{1}{377.32} & \Strong{3}{4.82} & \Strong{3}{6.98} & \Strong{1}{2.96} & \Lazy{1}{377.28}\\
\path{geo3-ll.c.t2} & \Lazy{2}{358.61} & \Fail & \NA & \NA & \Lazy{2}{358.94} & \Lazy{1}{451.88} & \NA & \NA\\
\path{hard-ll__property.c.t2} & \NA & \Fail & \NA & \NA & \NA & \Lazy{1}{800.61} & \NA & \NA\\
\path{hard-ll.c.t2} & \Fail & \NA & \NA & \NA & \Fail & \NA & \NA & \NA\\
\path{lcm1__property.c.t2} & \Fail & \Fail & \NA & \NA & \Fail & \Fail & \NA & \NA\\
\path{lcm1.c.t2} & \NA & \NA & \Fail & \Fail & \NA & \NA & \Fail & \Fail\\
\path{lcm2.c.t2} & \Strong{4}{19.64} & \Strong{4}{21.34} & \Fail & \Fail & \Strong{4}{19.64} & \Strong{4}{21.34} & \Fail & \Fail\\
\path{LogMult.c.t2} & \Fail & \Strong{4}{7.07} & \Strong{2}{2.73} & \Fail & \Fail & \Strong{4}{7.07} & \Strong{2}{2.73} & \Fail\\
\path{mannadiv.c.t2} & \Strong{4}{7.13} & \Strong{3}{214.33} & \Fail & \Fail & \Strong{4}{7.13} & \Strong{3}{214.33} & \Fail & \Fail\\
\path{ps2-ll.c.t2} & \Strong{3}{2.69} & \Lazy{2}{326.90} & \Strong{1}{2.58} & \NA & \Strong{3}{2.69} & \Lazy{2}{326.85} & \Strong{1}{2.58} & \NA\\
\path{ps3-ll.c.t2} & \Strong{2}{24.60} & \Strong{3}{7.73} & \Strong{1}{2.43} & \NA & \Strong{2}{24.60} & \Strong{3}{7.73} & \Strong{1}{2.43} & \NA\\
\path{ps4-ll.c.t2} & \Strong{3}{2.40} & \Strong{3}{8.73} & \Strong{1}{2.75} & \NA & \Strong{3}{2.40} & \Strong{3}{8.73} & \Strong{1}{2.75} & \NA\\
\path{ps5-ll.c.t2} & \Fail & \Lazy{1}{484.58} & \Strong{1}{2.58} & \NA & \Fail & \Fail & \Strong{1}{2.58} & \NA\\
\path{ps6-ll.c.t2} & \Strong{3}{2.52} & \Strong{2}{291.05} & \Lazy{1}{472.77} & \NA & \Strong{3}{2.52} & \Strong{2}{291.05} & \Lazy{1}{472.67} & \NA\\
\path{sqrt1-ll.c.t2} & \Strong{3}{8.06} & \Lazy{3}{233.35} & \Strong{1}{3.40} & \NA & \Strong{3}{8.06} & \Lazy{3}{256.72} & \Strong{1}{3.40} & \NA\\
\path{svcomp_ex1.c.t2} & \Strong{7}{5.78} & \Lazy{1}{576.06} & \NA & \NA & \Strong{7}{5.78} & \Lazy{1}{576.11} & \NA & \NA\\
\path{svcomp_ex2__property.c.t2} & \Fail & \NA & \NA & \NA & \Fail & \NA & \NA & \NA\\
\path{svcomp_ex2.c.t2} & \NA & \Lazy{9}{921.70} & \Strong{1}{62.09} & \Strong{8}{45.95} & \NA & \Fail & \Strong{1}{62.09} & \Strong{8}{45.95}\\
\path{svcomp_ex3a.c.t2} & \Lazy{3}{3.57} & \Strong{4}{5.24} & \Strong{1}{1.10} & \NA & \Lazy{3}{3.59} & \Strong{4}{5.24} & \Strong{1}{1.10} & \NA\\
\path{svcomp_ex3b.c.t2} & \Fail & \Lazy{2}{380.20} & \Strong{1}{1.78} & \NA & \Lazy{3}{52.68} & \Fail & \Strong{1}{1.78} & \NA\\
\path{svcomp_fermat.c.t2} & \Lazy{2}{1363.68} & \Strong{1}{3.28} & \Strong{1}{3.89} & \Lazy{5}{577.83} & \Fail & \Strong{1}{3.28} & \Strong{1}{3.89} & \Lazy{5}{577.93}\\
\end{longtable}

\newpage
\begin{longtable}{>{\RaggedRight\arraybackslash}p{0.24\textwidth}*{8}{>{\centering\arraybackslash}p{0.085\textwidth}}}
\caption{Per-instance WORST and STAB results for the probabilistic-assignment variant.}\label{tab:detail-assignment-backends}\\
\toprule
Benchmark & \multicolumn{4}{c}{WORST} & \multicolumn{4}{c}{STAB}\\
\cmidrule(lr){2-5}\cmidrule(lr){6-9}
 & RA & OV & RC & PR & RA & OV & RC & PR\\
\midrule
\endfirsthead
\toprule
Benchmark & \multicolumn{4}{c}{WORST} & \multicolumn{4}{c}{STAB}\\
\cmidrule(lr){2-5}\cmidrule(lr){6-9}
 & RA & OV & RC & PR & RA & OV & RC & PR\\
\midrule
\endhead
\midrule
\multicolumn{9}{r}{\emph{Continued on next page}}\\
\endfoot
\bottomrule
\endlastfoot
\path{bresenham-ll.c.t2} & \Strong{4}{7.74} & \Strong{4}{37.31} & \NA & \NA & \Strong{4}{7.74} & \Strong{4}{37.31} & \NA & \NA\\
\path{ChawdharyCookGulwaniSagivYang-ESOP2008-aaron12_true-termination.c.t2} & \NA & \Strong{3}{148.89} & \NA & \NA & \NA & \Strong{3}{148.89} & \NA & \NA\\
\path{cohencu-ll.c.t2} & \Strong{3}{85.82} & \Strong{3}{7.77} & \NA & \NA & \Strong{3}{85.82} & \Strong{3}{7.77} & \NA & \NA\\
\path{cohendiv-ll.c.t2} & \Fail & \Lazy{1}{570.17} & \Strong{1}{436.06} & \NA & \Fail & \Lazy{1}{570.17} & \Strong{1}{436.06} & \NA\\
\path{ComplInterv.c.t2} & \NA & \Strong{3}{3.28} & \NA & \NA & \NA & \Strong{3}{3.28} & \NA & \NA\\
\path{dijkstra-u.c.t2} & \Lazy{2}{588.06} & \Lazy{6}{804.80} & \Lazy{4}{594.20} & \NA & \Fail & \Lazy{6}{804.92} & \Lazy{4}{594.27} & \NA\\
\path{divbin__property.c.t2} & \Fail & \NA & \NA & \NA & \Fail & \NA & \NA & \NA\\
\path{divbin.c.t2} & \NA & \Strong{8}{41.67} & \NA & \NA & \NA & \Strong{8}{41.67} & \NA & \NA\\
\path{DoubleNeg.c.t2} & \Strong{2}{1.04} & \Strong{2}{2.27} & \NA & \NA & \Strong{2}{1.04} & \Strong{2}{2.27} & \NA & \NA\\
\path{egcd-ll.c.t2} & \NA & \Fail & \NA & \NA & \NA & \Fail & \NA & \NA\\
\path{egcd2-ll.c.t2} & \Strong{4}{85.64} & \NA & \NA & \Fail & \Strong{4}{85.64} & \NA & \NA & \Fail\\
\path{egcd3-ll.c.t2} & \Lazy{7}{166.36} & \NA & \NA & \Fail & \Fail & \NA & \NA & \Fail\\
\path{Factorial__property.c.t2} & \NA & \Fail & \NA & \NA & \NA & \Fail & \NA & \NA\\
\path{Factorial.c.t2} & \NA & \NA & \NA & \Lazy{2}{6.93} & \NA & \NA & \NA & \Lazy{2}{6.94}\\
\path{geo1-ll.c.t2} & \Strong{3}{3.32} & \Lazy{1}{519.72} & \Strong{1}{3.56} & \Strong{2}{351.38} & \Strong{3}{3.32} & \Lazy{1}{519.68} & \Strong{1}{3.56} & \Strong{2}{351.38}\\
\path{geo2-ll.c.t2} & \Strong{3}{3.02} & \Lazy{1}{456.98} & \Strong{1}{2.78} & \Strong{2}{5.19} & \Strong{3}{3.02} & \Lazy{1}{456.97} & \Strong{1}{2.78} & \Strong{2}{5.19}\\
\path{geo3-ll.c.t2} & \Strong{3}{5.42} & \Lazy{1}{497.06} & \NA & \NA & \Strong{3}{5.42} & \Lazy{1}{497.01} & \NA & \NA\\
\path{hard-ll__property.c.t2} & \NA & \Lazy{6}{637.55} & \NA & \NA & \NA & \Fail & \NA & \NA\\
\path{hard-ll.c.t2} & \Fail & \NA & \NA & \NA & \Fail & \NA & \NA & \NA\\
\path{lcm1__property.c.t2} & \Fail & \Fail & \NA & \NA & \Fail & \Fail & \NA & \NA\\
\path{lcm1.c.t2} & \NA & \NA & \Fail & \Fail & \NA & \NA & \Fail & \Fail\\
\path{lcm2.c.t2} & \Strong{4}{19.60} & \Strong{4}{21.48} & \Fail & \Fail & \Strong{4}{19.60} & \Strong{4}{21.48} & \Fail & \Fail\\
\path{LogMult.c.t2} & \Strong{3}{1.65} & \Strong{4}{10.62} & \Strong{2}{2.61} & \Fail & \Strong{3}{1.65} & \Strong{4}{10.62} & \Strong{2}{2.61} & \Fail\\
\path{mannadiv.c.t2} & \Strong{4}{7.42} & \Strong{4}{176.93} & \Fail & \Fail & \Strong{4}{7.42} & \Strong{4}{176.93} & \Fail & \Fail\\
\path{ps2-ll.c.t2} & \Strong{3}{2.62} & \Strong{3}{10.20} & \Strong{1}{3.11} & \NA & \Strong{3}{2.62} & \Strong{3}{10.20} & \Strong{1}{3.11} & \NA\\
\path{ps3-ll.c.t2} & \Strong{3}{2.23} & \Lazy{1}{591.87} & \Strong{1}{2.36} & \NA & \Strong{3}{2.23} & \Lazy{1}{591.86} & \Strong{1}{2.36} & \NA\\
\path{ps4-ll.c.t2} & \Strong{3}{2.29} & \Lazy{1}{467.02} & \Strong{1}{2.05} & \NA & \Strong{3}{2.29} & \Lazy{1}{467.02} & \Strong{1}{2.05} & \NA\\
\path{ps5-ll.c.t2} & \Strong{3}{2.32} & \Strong{3}{8.42} & \Strong{1}{2.15} & \NA & \Strong{3}{2.32} & \Strong{3}{8.42} & \Strong{1}{2.15} & \NA\\
\path{ps6-ll.c.t2} & \Strong{3}{2.36} & \Lazy{1}{497.62} & \Strong{1}{2.15} & \NA & \Strong{3}{2.36} & \Lazy{1}{497.59} & \Strong{1}{2.15} & \NA\\
\path{sqrt1-ll.c.t2} & \Strong{3}{6.00} & \Strong{3}{18.96} & \Strong{1}{3.37} & \NA & \Strong{3}{6.00} & \Strong{3}{18.96} & \Strong{1}{3.37} & \NA\\
\path{svcomp_ex1.c.t2} & \Strong{7}{4.45} & \Lazy{1}{413.83} & \NA & \NA & \Strong{7}{4.45} & \Lazy{1}{413.82} & \NA & \NA\\
\path{svcomp_ex2__property.c.t2} & \Fail & \NA & \NA & \NA & \Fail & \NA & \NA & \NA\\
\path{svcomp_ex2.c.t2} & \NA & \Lazy{9}{1124.15} & \Strong{1}{7.24} & \Strong{8}{28.88} & \NA & \Lazy{9}{1124.15} & \Strong{1}{7.24} & \Strong{8}{28.88}\\
\path{svcomp_ex3a.c.t2} & \Lazy{3}{1.51} & \Strong{4}{2.67} & \Strong{1}{1.07} & \NA & \Lazy{3}{1.51} & \Strong{4}{2.67} & \Strong{1}{1.07} & \NA\\
\path{svcomp_ex3b.c.t2} & \Lazy{3}{2.71} & \Strong{3}{26.20} & \Strong{1}{1.60} & \NA & \Lazy{3}{2.71} & \Strong{3}{26.20} & \Strong{1}{1.60} & \NA\\
\path{svcomp_fermat.c.t2} & \Strong{13}{657.03} & \Strong{1}{3.24} & \Strong{1}{3.85} & \Lazy{5}{566.50} & \Strong{13}{657.03} & \Strong{1}{3.24} & \Strong{1}{3.85} & \Lazy{5}{566.42}\\
\end{longtable}

\endgroup

\end{document}